\documentclass{article}

\usepackage{microtype}
\usepackage{graphicx}
\usepackage{subcaption}
\usepackage{booktabs} 
\usepackage{hyperref}

\usepackage[preprint]{icml2026}

\usepackage{amsmath}
\usepackage{amssymb}
\usepackage{mathtools}
\usepackage{amsthm}
\usepackage{physics}

\usepackage[capitalize,noabbrev]{cleveref}
\theoremstyle{plain}
\newtheorem*{theorem41}{Theorem 4.1}
\newtheorem*{corollary42}{Corollary 4.2}
\newtheorem*{corollary43}{Corollary 4.3}

\newtheorem*{theorem4x}{Theorem 4.4}
\newtheorem*{corollary45}{Corollary 4.5}
\newtheorem*{corollary46}{Corollary 4.6}

\newtheorem*{theorem47}{Theorem 4.7}
\newtheorem*{assumption48}{Assumption 4.8}
\newtheorem*{theorem49}{Theorem 4.9}
\newtheorem*{theorem410}{Theorem 4.10}

\newtheorem{theorem}{Theorem}[section]

\newtheorem{lemma}[theorem]{Lemma}
\newtheorem{corollary}[theorem]{Corollary}
\theoremstyle{definition}
\newtheorem{definition}[theorem]{Definition}
\newtheorem{assumption}[theorem]{Assumption}
\theoremstyle{remark}

\usepackage[textsize=tiny]{todonotes}

\icmltitlerunning{Guaranteeing Privacy in Hybrid Quantum Learning through Theoretical Mechanisms}

\begin{document}

\twocolumn[
  \icmltitle{Guaranteeing Privacy in Hybrid Quantum Learning through Theoretical Mechanisms}



  \icmlsetsymbol{equal}{*}

  \begin{icmlauthorlist}
    \icmlauthor{Hoang M. Ngo}{uf}
    \icmlauthor{Tre' R. Jeter}{uf,mitre}
    \icmlauthor{Incheol Shin}{kr}
    \icmlauthor{Wanli Xing}{ufedu}
    \icmlauthor{Tamer Kahveci}{uf}
    \icmlauthor{My T. Thai}{uf}
  \end{icmlauthorlist}

  \icmlaffiliation{uf}{Department of Computer \& Information Science \& Engineering
, University of Florida, Gainesville, Florida, USA}
    \icmlaffiliation{mitre}{MITRE , Fairfax, Virginia, USA}
  \icmlaffiliation{ufedu}{Department of Education, University of Florida, Gainesville, Florida, USA}
  \icmlaffiliation{kr}{Department of Artificial Intelligence Convergence, Pukyong National University, Busan, South Korea}
  \icmlcorrespondingauthor{My T. Thai}{mythai@cise.uf.edu}

  \icmlkeywords{Machine Learning, ICML}

  \vskip 0.3in
]



\printAffiliationsAndNotice{}  

\begin{abstract}
  Quantum Machine Learning (QML) is becoming increasingly prevalent due to its potential to enhance classical machine learning (ML) tasks, such as classification. Although quantum noise is often viewed as a major challenge in quantum computing, it also offers a unique opportunity to enhance privacy. In particular, intrinsic quantum noise provides a natural stochastic resource that, when rigorously analyzed within the differential privacy (DP) framework and composed with classical mechanisms, can satisfy formal $(\varepsilon, \delta)$-DP guarantees. This enables a reduction in the required classical perturbation without compromising the privacy budget, potentially improving model utility. However, the integration of classical and quantum noise for privacy preservation remains unexplored. In this work, we propose a hybrid noise-added mechanism, \texttt{HYPER-Q}, that combines classical and quantum noise to protect the privacy of QML models. We provide a comprehensive analysis of its privacy guarantees and establish theoretical bounds on its utility. Empirically, we demonstrate that \texttt{HYPER-Q} outperforms existing classical noise-based mechanisms in terms of adversarial robustness across multiple real-world datasets.
\end{abstract}


\section{Introduction}

Quantum Machine Learning (QML) has emerged as a compelling paradigm that integrates the computational advantages of quantum systems with the modeling power of machine learning (ML). A fundamental feature of quantum systems is quantum noise, the inherent randomness and decoherence that arise due to interactions with the environment. Although quantum noise is typically considered to be a barrier to achieving fault-tolerant quantum computing, it provides an opportunity to serve as a natural and intrinsic source of randomness for privacy-preservation.



In classical ML, Differential Privacy (DP)~\citep{Dwork2006DifferentialP} has become the standard framework for providing formal privacy guarantees. DP ensures that the output of an algorithm does not change significantly when a single individual's data is added 
or removed from the input dataset, thereby protecting individual privacy. Beyond its role in privacy preservation, DP has also been extended to certify the robustness of ML models against adversarial attacks~\citep{Lecuyer2019,Cohen2019Smoothing}. Privacy in DP is typically achieved by injecting carefully calibrated random noise, such as Gaussian or Laplacian, into the learning process~\citep{Geng2012TheON,AnalyticGaussian,Tianxi2024}. 
Furthermore, the overall privacy guarantee can be amplified through additional stochastic techniques such as subsampling~\citep{Balle2018Subsampling}, iterative composition~\citep{Feldman2018Iteration}, and diffusion-based mechanisms~\citep{Balle2019MixingDiffusion}. Nevertheless, theoretical privacy amplification is not guaranteed under arbitrary combinations of stochastic techniques.

Recent studies extend the notion of DP to the quantum domain, leading to Quantum Differential Privacy (QDP)~\citep{Du2021Depolarizing,Hirche2023InfoTheoryQDP}. However, several key challenges remain unaddressed. First, existing efforts primarily focus on defining privacy guarantees for
quantum data. However, most practical, near-term QML applications are hybrid models that operate on classical data and use the quantum circuit only as an intermediate processing component. This hybrid architecture presents a critical privacy challenge: a DP guarantee applied only to the intermediate quantum layer does not ensure end-to-end privacy for the full model, especially if the preceding classical components are sensitive. Second, the interaction between classical noise (e.g., Gaussian, Laplacian) and intrinsic quantum noise has not yet been investigated. This research gap is critical because certain types of quantum noise, such as depolarizing noise, can naturally inject randomness into the learning process without significantly degrading the performance of models~\citep{Du2021Depolarizing}. This raises a crucial open question: can this intrinsic quantum randomness be formally utilized as a stochastic technique to amplify the privacy guarantee originating from a preceding classical mechanism? To date, no work has theoretically established how to compose the privacy guarantees of classical and quantum noise sources within these hybrid models.
In addition, understanding this relationship is crucial to control the preset privacy budget, especially considering that quantum noise in physical devices is inherently dynamic and difficult to precisely control. 

\noindent \textbf{Contributions.} The key contributions and insights of this work can be highlighted as follows:

\noindent \textbf{(i) Hybrid Privacy-Preserving Mechanism}. We propose \texttt{HYPER-Q}, a \textbf{\underline{HY}}brid \textbf{\underline{P}}rivacy-pres\textbf{\underline{ER}}ving mechanism for \textbf{\underline{Q}}uantum Neural Networks (QNNs). To the best of our knowledge, this is the first work to investigate the joint effect of classical and quantum noise in amplifying DP within quantum hybrid models. Specifically, \texttt{HYPER-Q} establishes a baseline privacy guarantee via classical input perturbation (e.g., Gaussian noise), which is then theoretically amplified by quantum noise channels, with depolarizing noise as the primary focus, acting as a post-processing operation.

\noindent \textbf{(ii) Privacy Guarantee Analysis}. We provide a rigorous analysis of \texttt{HYPER-Q}'s DP guarantees. Our mechanism is a composition $Q^{(\eta)} \circ A$ where $A$ is a classical mechanism satisfying an original $(\varepsilon, \delta)$-DP and $Q^{(\eta)}$ is the quantum post-processing operation incorporating a depolarizing noise channel, which serves as our main analysis case. We analyze how this composition achieves new amplified privacy parameters $(\varepsilon',\delta')$. We provide three main analytical results including two for depolarizing noise and one extension to other quantum noise channels.
    \begin{itemize}
        \item First (Theorem~\ref{theorem:analysis 1}): We show that quantum post-processing in a $d$-dimensional Hilbert space acts as a privacy amplifier by strictly reducing the failure probability (achieving $\delta' = \left[ \frac{\eta(1 - e^\varepsilon)}{d} + (1 - \eta)\delta \right]_+ < \delta$), while the privacy loss remains fixed ($\varepsilon' = \varepsilon$). This directly implies stricter certifiable adversarial robustness.
        \item Second (Theorem~\ref{theorem:analysis 2}): We demonstrate that under a certain condition, it is possible to simultaneously amplify both parameters, $\varepsilon'$ and $\delta'$. This analysis yields two crucial insights. First, we show how to select Positive Operator-Valued Measures (POVMs) to maximize the privacy gain: the bound on $\delta'$ is minimized (i.e., the guarantee is strongest) when all POVM elements have equal trace. Second, we derive the explicit threshold that the quantum noise $\eta$ must exceed to guarantee the strict amplification of both $\varepsilon'$ and $\delta'$. 
        \item Third (Theorems~\mbox{\ref{theorem:main:analysis_GAD}} and \mbox{\ref{theorem:main:GD_all_qubits_DP}}): We extend the privacy analysis for depolarizing noise to other asymmetric quantum noise channels by identifying quantum hockey-stick divergence contraction as the underlying mechanism for privacy amplification. We derive strict privacy amplification for Generalized Amplitude Damping (GAD) based on thermal relaxation ($\delta' = (2\sqrt{\eta} - \eta)\delta$) and for Generalized Dephasing (GD) under the assumption of product equatorial encoding, where the suppression of phase coherences scales the failure probability to $\delta' = |1-2\eta|\delta$.
    \end{itemize}

\noindent \textbf{(iii) Utility Analysis}. We derive a formal utility bound (Theorem~\ref{theorem:utility}) that quantifies the model's performance under the depolarizing channel. Specifically, we characterize the total error as a high-probability trade-off between the classical noise variance ($\sigma$) and the depolarizing factor ($\eta$).

\noindent \textbf{(iv) Empirical Experiments.} We empirically demonstrate that, under a fixed end-to-end privacy budget, \texttt{HYPER-Q} achieves consistently greater adversarial robustness than standard classical-only DP mechanisms across multiple datasets. These results indicate that replacing classical noise with quantum depolarizing noise can yield higher performance without weakening the privacy guarantee.

\section{Preliminary}\label{sec:prelim} 
\subsection{Quantum Information Basics}
\noindent \textbf{Qubits and States.} 
Quantum computing systems operate on quantum bits (\emph{qubits}). 
Unlike classical bits, qubits can exist in superpositions of 0 and 1. An $n$-qubit system resides in a $2^n$-dimensional Hilbert space $\mathcal{H}$. While ideal (pure) states are represented by vectors $|\psi\rangle$, general (possibly noisy) states are described by density matrices $\rho$: $d \times d$ positive semi-definite matrices with a trace of one 
(i.e., $\operatorname{Tr}[\rho] = 1$).

\noindent \textbf{Quantum Channels.} 
The evolution of a quantum state, including noise effects, is modeled by a quantum channel. For example, the depolarizing channel, denoted as $f^{(\eta)}_{\text{dep}}$, replaces the state $\rho$ with the maximally mixed state $\frac{I}{d}$ with probability $\eta$ and leaves it unchanged with probability $1 - \eta$:
$$f^{(\eta)}_{\text{dep}}(\rho) = (1-\eta)\rho + \eta \frac{I}{d}$$
where $\eta \in [0,1]$ is the probability, $I$ is the identity matrix and $d$ is the dimension of the Hilbert space.

Classical information is extracted from a quantum state via measurement. A general measurement is defined by a set of operators ${E_k}$ forming a Positive Operator-Valued Measure (POVM). For a state $\rho$, the probability of observing the outcome $k$ is:
$$\operatorname{Pr} (\text{outcome} = k) = \operatorname{Tr}[E_k \rho].$$

\subsection{Differential Privacy}

Differential Privacy (DP) provides a formal guarantee that the presence or absence of any individual sample in a dataset has limited impact on the output~\citep{Dwork2006DifferentialP}. More formally:

\begin{definition}
    [$(\varepsilon, \delta)$-Differential Privacy] A randomized mechanism $\mathcal{M}:\mathcal{D} \rightarrow \mathcal{R}$ satisfies $(\varepsilon, \delta)$-differential privacy if for any two adjacent datasets $D_1$ and $D_2$ that differ by a single element, , and for any subset of outputs $S \subseteq \mathcal{R}$, the following inequality holds:
    $$\operatorname{Pr}[\mathcal{M}(D_1) \in S] \leq e^\varepsilon\operatorname{Pr}[\mathcal{M}(D_2) \in S] + \delta$$,
\end{definition}

Here, $\varepsilon \geq 0$ is the privacy loss parameter 
while $\delta \in [0,1)$ is the failure probability.
The smaller $\varepsilon$ or the smaller $\delta$ implies stronger privacy.

An equivalent characterization of DP can be formulated using the \textbf{hockey-stick divergence}. For two distributions $P$ and $Q$, the hockey-stick divergence is defined as:
$$
\operatorname{D}_{e^\varepsilon}(P \| Q) = \int \max(0, P(x) - e^\varepsilon Q(x)) dx
$$
A mechanism $\mathcal{M}$ satisfies $(\varepsilon, \delta)$-DP if and only if $\operatorname{D}_{e^\varepsilon}(\mathcal{M}(D_1) \| \mathcal{M}(D_2)) \le \delta$ for all adjacent $D_1, D_2$.

This framework extends to the quantum setting~\citep{Hirche2023InfoTheoryQDP}, where the quantum hockey-stick divergence for states $\rho, \rho'$ is defined as:

$$
\operatorname{D}^{(q)}_{e^\varepsilon}(\rho \| \rho') = \operatorname{Tr}\left[(\rho - e^\varepsilon \rho')_+\right]
$$
A quantum mechanism $\mathcal{E}$ satisfies $(\varepsilon, \delta)$-quantum 
DP if for any adjacent states $\rho, \rho'$, the divergence is bounded by $\delta$ where $\operatorname{D}^{(q)}_{e^\varepsilon}(\mathcal{E}(\rho) \| \mathcal{E}(\rho')) \le \delta$.

\noindent \textbf{Noise-added Mechanisms.} 
A standard way to achieve DP is by adding noise proportional to the sensitivity of a function, which is the maximum output change from altering one data point. 
The Gaussian mechanism adds noise $\eta_{\text{cdp}} \sim \mathcal{N}(0, \sigma^2I)$ to a function $f: \mathcal{D} \rightarrow \mathbb{R}$ based on the function's $L_2$ sensitivity:
$$\Delta_2(f) = \max_{D_1, D_2}||f(D_1) - f(D_2)||_2$$
This mechanism outputs $f(x) + \eta_{\text{cdp}}$. For appropriate choices of $\sigma$, this mechanism satisfies $(\varepsilon, \delta)$-DP. Additional background on hybrid quantum machine learning, and the connection between differential privacy and adversarial robustness, is provided in Appendix~\ref{app:additional_background}.

\section{Related Works}\label{sec:related}
\noindent \textbf{Differential Privacy in Classical Machine Learning.}
Differential Privacy (DP) has been established as a leading framework for protecting data in ML workflows. DP provides formal guarantees~\citep{dwork2006calibrating} that ensure that the inclusion or exclusion of a single data point has a limited impact on the output of an algorithm, thus minimizing the risk of information leakage. In machine learning, the most common way to achieve DP in practice is by injecting calibrated random noise into the learning process. This noise can be introduced at various stages, such as perturbing the input data~\citep{Lecuyer2019,FW_Phan2019ScalableDP,Cohen2019Smoothing}, the gradients during optimization~\citep{abadi2016deep,BW_Ghazi2025}, or the final model parameters~\citep{Yuan2023_dpparameters}.

Input perturbation is particularly effective for providing instance-level privacy and is a key technique for certifying the adversarial robustness of a model's predictions~\citep{Lecuyer2019,Cohen2019Smoothing}. Standard mechanisms, such as the Gaussian or Laplacian mechanism, add noise scaled to the function's sensitivity to provide $(\varepsilon, \delta)$-DP guarantee~\citep{Dwork_Foundations}. To mitigate the degradation in model performance which is often caused by noise injection, a crucial line of research focuses on privacy amplification. The core idea is that certain stochastic processes can strengthen the final privacy guarantee without requiring additional initial noise. Privacy amplification can also be achieved through established techniques such as subsampling ~\citep{Bun2015_subsampling,Balle2018Subsampling,wang2019subsampled,Koga2022_subsampling}, shuffling~\citep{Cheu2018_shuffling,Erlingsson2019_shuffling,Balle2019_shuffling}, iterative composition~\citep{Feldman2018Iteration}, and specialized forms of post-processing~\citep{Balle2019MixingDiffusion, Ye2022_postprocess}. In particular, post-processing is fundamental: while standard post-processing can never weaken a privacy guarantee~\citep{Dwork2006DifferentialP}, certain stochastic transformations can actively enhance it. However, not all combinations of stochastic sources yield amplification. For example, post-processing a Gaussian mechanism with an additional Gaussian transformation can amplify privacy, whereas composing a Gaussian mechanism with a Laplacian transformation does not yield such an effect.

\noindent\textbf{Differential Privacy in Quantum Settings.}
The notion of DP has recently been extended to quantum settings, reflecting the growing interest of privacy-preserving quantum computing and quantum machine learning (QML). 
The foundational concept was introduced by \citep{ZhouQDPfirst}, who proposed a definition of QDP that is a direct quantum analogue of classical DP. Building on this, \citep{Du2021Depolarizing} demonstrated a practical application for QML by showing that inherent quantum noise could be leveraged to achieve QDP in quantum classifiers. Specifically, they analyzed the depolarizing noise channel as a privacy-preserving mechanism and derived the mathematical relationship between the noise strength and the resulting $(\varepsilon, \delta)$-QDP guarantee. They also proved that this privacy mechanism simultaneously enhances the model's adversarial robustness. Later, \citep{Hirche2023InfoTheoryQDP} developed a comprehensive theoretical framework for QDP. Using tools such as quantum relative entropy, their work provides a more general and rigorous foundation for QDP. More recent works\citep{watkins2023quantum,bai2024quantum,song2025towards} have examined how various quantum noise sources, such as depolarizing, bit-flip, and phase-flip channels, affect the QDP budget. 

Despite this progress in defining privacy for either purely quantum or purely classical systems, a critical gap remains for the hybrid quantum-classical architectures that are essential for near-term quantum advantage. These models are paramount for applying quantum computation to real-world problems. However, to date, no work has theoretically established how to compose the privacy guarantees of classical and quantum noise sources within hybrid quantum models. This significant gap highlights the importance of our proposed \texttt{HYPER-Q} and the need for further exploration of hybrid approaches that combine traditional DP mechanisms with the privacy properties innate to quantum systems.
\section{Hybrid Noise-Added Mechanism}\label{sec:mech_overview}
In this section, we present our privacy-preserving mechanism that integrates classical and quantum noise to achieve differential privacy (DP) in 
QNN models. We first describe the structure of the hybrid mechanism, then analyze its DP guarantees, and finally provide a utility bound that characterizes the impact of noise on model performance.

\subsection{Mechanism Overview}

\begin{figure}[t]
    \centering
    \includegraphics[width=1\linewidth]{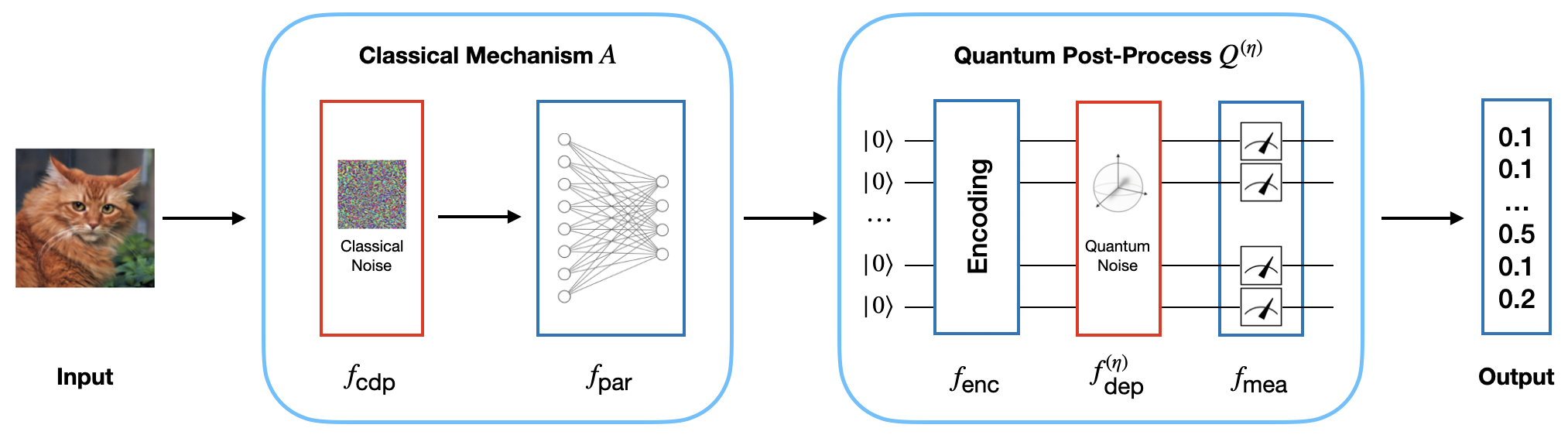}
    \caption{Overview of the proposed \texttt{HYPER-Q}.}
    \label{fig:hybrid_overview}
\end{figure}

The proposed mechanism is designed to mitigate privacy leakage at two levels. First, classical data can be vulnerable to reconstruction attacks before it enters the quantum circuit. To prevent such exposure, we introduce classical noise mechanisms to perturb the input. 
Second, we leverage inherent quantum depolarizing noise to enhance privacy after encoding. This noise has been shown to preserve utility in the ideal case of infinite measurements~\citep{Du2021Qnoise}. By combining classical and quantum noise, our dual-layer approach reduces reliance on excessive classical noise, achieving stronger privacy with minimal utility loss.

We formally describe each stage of the mechanism using a modular function-based representation (see Figure~\ref{fig:hybrid_overview}):

\textbf{Classical Noise Function} $f_{\text{cdp}}: \mathbb{X} \rightarrow \mathbb{X}$. This function adds calibrated classical noise to the input to provide a baseline DP guarantee.
    $$
    f_{\text{cdp}}(x) = x + \eta_{\text{cdp}}, \quad \text{where } \eta_{\text{cdp}} \sim \mathcal{N}(0, \sigma^2 I)
    $$
    Here, the noise $\eta_{\text{cdp}}$ is drawn from a multivariate Gaussian distribution with covariance $\sigma^2 I$.
    
\textbf{Parameterized Linear Transformation} $f_{\text{par}}: \mathbb{X} \rightarrow \mathbb{Y}$. This function serves as a learnable classical layer, transforming the input data into a feature space. The weights 
    $W$ 
    and biases $b$ 
    are learnt during model training.
    $$
    f_{\text{par}}(x') = Wx' + b = y
    $$

\textbf{Quantum Encoding Function} $f_{\text{enc}}: \mathbb{Y} \rightarrow \mathcal{H}$. This function encodes the classical feature vector $y$ into a quantum state $\rho$ within a $d$-dimensional Hilbert space $\mathcal{H}$ composed of $n$ qubits ($d=2^n$). Let $|\psi_y\rangle = \prod_{j=1}^n e^{-iy_j H_j} |0\rangle^{\otimes n}$ be the encoded pure state vector, where $H_j$ are Hermitian operators. The output is the corresponding density matrix:
    $$
    f_{\text{enc}}(y) = |\psi_y\rangle\langle\psi_y| = \rho
    $$

\textbf{Depolarizing Noise Channel} $f^{(\eta)}_{\text{dep}}: \mathcal{H} \rightarrow \mathcal{H}$. This quantum channel adds a second layer of randomness by applying noise directly to the encoded state $\rho$. This process will be shown to amplify the initial privacy guarantee from the classical noise layer in the subsequent analysis.
    $$
f^{(\eta)}_{\text{dep}}(\rho) = (1 - \eta)\rho + \eta \frac{I}{d} = \tilde{\rho}
    $$
    Here, $\eta \in [0,1]$ is the depolarization probability, and $I$ is the identity operator on $\mathcal{H}$.

\textbf{Measurement Function} $f_{\text{mea}}: \mathcal{H} \rightarrow \mathbb{Z}$. This final stage maps the noisy quantum state $\tilde{\rho}$ to a single classical class label $z$ from the output space $\mathbb{Z} = \{0, 1, \dots, K-1\}$. This mapping is inherently stochastic and is formally defined as:
    $$\operatorname{Pr}(f_{\text{mea}}(\tilde{\rho}) = k) = \operatorname{Tr}[E_k \tilde{\rho}], \quad \forall k$$
This hybrid approach allows independent tuning of classical and quantum noise for flexible privacy-utility trade-offs. Its modular design also supports theoretical analysis of privacy guarantees and performance impact, as detailed below.

\subsection{Differential Privacy Bound}

We now define the concepts used in our DP analysis. Specifically, our proposed mechanism can be expressed as the composition $Q^{(\eta)} \circ A$, where $A = f_{\text{par}} \circ f_{\text{cdp}}$ is a classical mechanism satisfying $(\varepsilon, \delta)$-DP, and $Q^{(\eta)} = f_{\text{mea}} \circ f^{(\eta)}_{\text{dep}} \circ f_{\text{enc}}$ is a quantum post-processing operation controlled by a depolarizing noise parameter $\eta$. Assuming 
the random process $f_{\text{cdp}}$ satisfies $(\varepsilon, \delta)$-DP, it follows from the post-processing theorem~\citep{Dwork2006DifferentialP} that the mechanism $A$ also satisfies $(\varepsilon, \delta)$-DP. While our primary focus is on depolarizing noise, the framework is modular. Replacing $f^{(\eta)}_{\text{dep}}$ with alternative quantum noise channels allows for the study of diverse quantum post-processing effects.

Our goal is to analyze how the composed mechanisms achieve new privacy parameters $(\varepsilon', \delta')$, and how these parameters amplify the original guarantees $(\varepsilon, \delta)$. Specifically, we present two analyses tailored to depolarizing noise, followed by a third that generalizes these results to a broader range of quantum noise channels. In the first analysis, we show that $Q^{(\eta)} \circ A$ can improve the failure probability by establishing that $\varepsilon' = \varepsilon$ and $\delta' < \delta$. In the second analysis, we demonstrate that under certain conditions, $Q^{(\eta)} \circ A$ can amplify both the privacy loss and the failure probability, achieving $\varepsilon' < \varepsilon$ and $\delta' < \delta$. In the third analysis, We extend these results beyond depolarizing noise to a broader class of channels, including Generalized Amplitude Damping (GAD) and Generalized Dephasing (GD). Through this, we identify the universal condition required for an arbitrary quantum noise channel to achieve DP amplification. All proofs and details are presented in Appendix~\ref{app:proof}.

\subsubsection{First Analysis --- Amplifying the Failure Probability}
We investigate how the failure probability is amplified under quantum post-processing, assuming a fixed privacy loss parameter $\varepsilon$. Theorem~\ref{theorem:analysis 1} formalizes 
this by establishing a new bound on the failure probability $\delta'$ of the composed mechanism $Q^{(\eta)} \circ A$, while keeping the privacy loss fixed at $\varepsilon' = \varepsilon$. The proof for this theorem bridges the classical and quantum divergence measures by involving two key steps: (1) establishing that the classical hockey-stick divergence of the measured probabilities is upper-bounded by the 
quantum hockey-stick divergence of the quantum states before measurement , and (2) proving that this quantum divergence 
contracts under $f^{(\eta)}_{\text{dep}}$ by a factor of $(1-\eta)$. 

\begin{theorem}[Amplification on Failure Probability]
    Let $A: \mathbb{X} \rightarrow \mathcal{P}(\mathbb{Y})$ be a classical mechanism satisfying $(\varepsilon, \delta)$-DP where $A = f_{\text{par}} \circ f_{\text{cdp}}$, and let $Q^{(\eta)}: \mathbb{Y} \rightarrow \mathcal{P}(\mathbb{Z})$ be a quantum mechanism in a $d$-dimensional Hilbert space defined as $Q^{(\eta)} = f_{\text{mea}} \circ f^{(\eta)}_{\text{dep}} \circ f_{\text{enc}}$ where $0\leq \eta \leq 1$ is the depolarizing noise factor. Then, the composed mechanism $Q^{(\eta)} \circ A$ satisfies $(\varepsilon', \delta')$-DP, where
    $$
    \varepsilon' = \varepsilon, \quad \delta' = \left[ \frac{\eta(1 - e^\varepsilon)}{d} + (1 - \eta)\delta \right]_+
    $$
    \label{theorem:analysis 1}
\end{theorem}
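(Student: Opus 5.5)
The plan is to bound the hockey-stick divergence of the two output distributions of $Q^{(\eta)}\circ A$ directly, one event at a time. Fix adjacent inputs $D_1,D_2$, and write $\mu_1=A(D_1)$ and $\mu_2=A(D_2)$ for the induced distributions over the feature vector $y\in\mathbb{Y}$. For an arbitrary event $S\subseteq\mathbb{Z}$, set $M_S=\sum_{k\in S}E_k$, so that $0\preceq M_S\preceq I$. By the definitions of $f_{\text{enc}}$, $f^{(\eta)}_{\text{dep}}$ and $f_{\text{mea}}$,
\begin{equation*}
\Pr[Q^{(\eta)}(A(D_i))\in S]=\mathbb{E}_{y\sim\mu_i}\operatorname{Tr}\!\left[M_S\left((1-\eta)\rho_y+\eta\tfrac{I}{d}\right)\right]
=(1-\eta)\,\mathbb{E}_{\mu_i}[g_S(y)]+\eta\,\tfrac{\operatorname{Tr}[M_S]}{d},
\end{equation*}
where $\rho_y=f_{\text{enc}}(y)$ and $g_S(y)=\operatorname{Tr}[M_S\rho_y]\in[0,1]$. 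This affine decomposition is the key observation. It is the classical shadow of the two-step route announced before the theorem: measurement can only decrease the quantum hockey-stick divergence, and the depolarizing channel contracts $\rho-e^\varepsilon\rho'$ by a factor $(1-\eta)$ while adding the extra term $\eta(1-e^\varepsilon)I/d$. If one prefers to phrase the argument via $\operatorname{D}^{(q)}_{e^\varepsilon}$, the same computation works on the averaged states $\bar\rho_i=\mathbb{E}_{\mu_i}\rho_y$.

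Next I will subtract the two expressions:
\begin{equation*}
\Pr[\cdot\in S\mid D_1]-e^\varepsilon\Pr[\cdot\in S\mid D_2]=(1-\eta)\left(\mathbb{E}_{\mu_1}g_S-e^\varepsilon\mathbb{E}_{\mu_2}g_S\right)+\frac{\eta(1-e^\varepsilon)\operatorname{Tr}[M_S]}{d}.
\end{equation*}
The first bracket is at most $\delta$. This uses the standard fact that $(\varepsilon,\delta)$-DP of $A$ extends from indicator functions to all measurable $[0,1]$-valued test functions, which can be proved by writing $g_S$ as the layer-cake integral $\int_0^1 \mathbf{1}\{g_S>t\}\,dt$, or equivalently by the post-processing theorem applied to the randomized response ``output $1$ with probability $g_S(y)$''. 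The second term is nonpositive because $1-e^\varepsilon\le 0$. To reach the stated constant I need $\operatorname{Tr}[M_S]\ge 1$ for every nonempty $S$. For $S=\emptyset$ the difference is $0$, which explains the $[\cdot]_+$ in the bound. Taking the supremum over $S$ then gives $\delta'=\left[\frac{\eta(1-e^\varepsilon)}{d}+(1-\eta)\delta\right]_+$ with $\varepsilon'=\varepsilon$.

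The main obstacle is justifying $\operatorname{Tr}[M_S]\ge1$. For a completely general POVM this can fail, for example when a single element has tiny trace. I would therefore make explicit the natural readout assumption that every nonzero POVM element satisfies $\operatorname{Tr}[E_k]\ge 1$, which holds for instance for projective measurements in the computational basis or for any rank-one projectors. Under that assumption each nonempty $S$ contains an element with trace at least one. If the paper intends full generality, my fallback is to sharpen the first term using $g_S\le\|M_S\|_\infty\le\operatorname{Tr}[M_S]$ and combine the two terms carefully; I expect this to still yield the stated bound only under a condition of this kind. I would flag this step as the one requiring care.
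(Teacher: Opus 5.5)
Your affine decomposition and your use of DP for $[0,1]$-valued test functions are both correct. As written, however, the proposal does not prove the theorem as stated. The step $\operatorname{Tr}[M_S]\ge 1$ is an extra hypothesis on the POVM that the theorem does not impose. Your expectation that the general case needs a condition of this kind is also mistaken: your own fallback closes the gap with no assumption.

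Put $m=\|M_S\|_{op}\in[0,1]$. If $m=0$ the difference vanishes. Otherwise $0\le g_S(y)\le m$, so applying DP for $[0,1]$-valued tests to $g_S/m$ gives $\mathbb{E}_{\mu_1}g_S-e^\varepsilon\mathbb{E}_{\mu_2}g_S\le m\delta$. Since $M_S\succeq 0$, you have $\operatorname{Tr}[M_S]\ge\lambda_{\max}(M_S)=m$. The coefficient $\eta(1-e^\varepsilon)/d$ is nonpositive, so the second term is at most $m\,\eta(1-e^\varepsilon)/d$. Hence the difference is at most $m\bigl[(1-\eta)\delta+\eta(1-e^\varepsilon)/d\bigr]\le\bigl[(1-\eta)\delta+\eta(1-e^\varepsilon)/d\bigr]_+$, because $m\in[0,1]$.

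Equivalently, use the operator phrasing you mention. Bound $\operatorname{Tr}[M_S(\tilde\rho_1-e^\varepsilon\tilde\rho_2)]$ by $\operatorname{Tr}[P_+(\tilde\rho_1-e^\varepsilon\tilde\rho_2)]$, where $P_+$ is the projector onto the positive eigenspace. When $P_+\neq 0$ its trace is an integer $\ge 1$, and your computation with $P_+$ in place of $M_S$ goes through verbatim. This is how the paper sidesteps the issue. It first passes from the measured distribution to the quantum hockey-stick divergence (its measurement lemma), and its depolarizing-contraction lemma then only needs $\operatorname{Tr}[P_+]\ge 1$.

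With that repair, your argument is a valid and genuinely different route. The paper imports the Balle et al.\ decomposition into an overlapping part and orthogonal residuals $\mu'\perp\nu'$. It then reduces to point masses via $\sup_{y\neq y'}$ and bounds the quantum divergence of the depolarized encoded states using the contraction lemma together with $\operatorname{D}^{(q)}\le 1$. Finally it uses $\theta(1-e^{\tilde\varepsilon})=1-e^\varepsilon$ and $\theta\le\delta$.

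You instead apply linearity of $y\mapsto\tilde\rho_y$ directly to the output law of $A$, and you need only the standard fact that $(\varepsilon,\delta)$-DP holds for $[0,1]$-valued tests. This is shorter, handles the continuous Gaussian-perturbed output without a point-mass decomposition, and yields the same constant. The paper's more modular structure, in turn, isolates a channel-level contraction step that it reuses for the GAD and GD results.
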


It follows that for $\varepsilon \in [0,1]$, we have $\delta' \leq \delta$. Therefore, the failure probability is strictly reduced, resulting in a privacy amplification effect, as formally stated in Corollary~\ref{col:failure amplifying}.

\begin{corollary}
    The composed mechanism $Q^{(\eta)} \circ A$ satisfies $(\varepsilon, \delta')$-DP with $\delta' < \delta$, thus strictly amplifying the overall failure probability.
    \label{col:failure amplifying}
\end{corollary}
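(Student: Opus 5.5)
The plan is to read the corollary off the closed form in Theorem~\ref{theorem:analysis 1}; no new divergence estimates should be needed. Theorem~\ref{theorem:analysis 1} already gives that $Q^{(\eta)}\circ A$ is $(\varepsilon,\delta')$-DP with
$$\delta'=\Bigl[\tfrac{\eta(1-e^{\varepsilon})}{d}+(1-\eta)\delta\Bigr]_+ .$$
So the only task is to compare this expression with $\delta$. Since $(\varepsilon,\delta')$-DP with $\delta'<\delta$ trivially implies $(\varepsilon,\delta)$-DP, the conclusion is exactly that amplification occurs.

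First, I bound the bracket term by term. Because $\varepsilon\ge 0$, we have $e^{\varepsilon}\ge 1$, so the first summand $\eta(1-e^{\varepsilon})/d$ is nonpositive for every $\eta\in[0,1]$ and every $d\ge 1$. Hence the bracket is at most $(1-\eta)\delta$. Since $x\mapsto[x]_+$ is monotone and $(1-\eta)\delta\ge 0$, this gives
$$\delta'\le(1-\eta)\delta\le\delta.$$
This holds for every $\varepsilon\ge0$, not only $\varepsilon\in[0,1]$ as remarked before the corollary.

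Second, I check when the inequality is strict, which is the one delicate point. The bound $(1-\eta)\delta<\delta$ holds exactly when $\eta>0$ and $\delta>0$, and then $\delta'<\delta$ follows. The degenerate cases must be excluded or stated explicitly:
\begin{itemize}
\item If $\eta=0$ (no depolarization), then $\delta'=\delta$, so the channel adds nothing.
\item If $\delta=0$, then the bracket is $\le 0$ and $\delta'=0=\delta$, so pure DP cannot be improved in $\delta$.
\end{itemize}
I would therefore prove the corollary under the standing assumptions $\eta\in(0,1]$ and $\delta\in(0,1)$, i.e.\ a nontrivial depolarizing channel and a genuinely approximate classical guarantee. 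These are the settings the corollary implicitly intends.

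Finally, I would note the quantitative form $\delta-\delta'\ge\eta\delta+\eta(e^{\varepsilon}-1)/d$ whenever the bracket is positive, and note that $\delta'=0$ otherwise. This shows the reduction grows linearly in $\eta$. I expect no real obstacle beyond correctly handling the positive-part truncation and these boundary cases.
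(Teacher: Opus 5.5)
Your proposal is correct and follows the paper's route: both arguments read $\delta'<\delta$ directly off the closed form in Theorem~\ref{theorem:analysis 1}. The difference lies in where strictness comes from. The paper gets it from $\eta(1-e^{\varepsilon})/d<0$, assuming $\eta>0$ and $\varepsilon>0$, and never addresses the $[\cdot]_+$ truncation, which tacitly also requires $\delta>0$. You get it from $(1-\eta)\delta<\delta$ under $\eta>0$ and $\delta>0$, which handles the truncation explicitly and does not need $\varepsilon>0$.
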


Based on~\citep{Lecuyer2019}, we derive an explicit condition for certifiable adversarial robustness of the composed mechanism $Q^{(\eta)} \circ A$ in Corollary~\ref{col:robustness_condition}. This condition defines a robustness threshold that the model's expected confidence scores must exceed. Notably, due to the privacy amplification effect formalized in Corollary~\ref{col:failure amplifying}, the robustness threshold under the composed mechanism (parameterized by $\delta'$) is strictly lower than that of the original classical mechanism (parameterized by $\delta$). As a result, quantum post-processing provably enlarges the set of inputs for which adversarial robustness can be guaranteed. For further details on adversarial robustness, we refer readers to Appendix~\ref{app:additional_background}.

\begin{corollary}
The composed mechanism $C = Q^{(\eta)} \circ A$ is certifiably robust against adversarial perturbations for an input $x \in \mathbb{X}$ if this condition holds for the correct class $k$:

$$\mathbb{E}[[C(x)]_k] > e^{2\varepsilon} \max_{i \neq k} \mathbb{E}[[C(x)]_i] + (1 + e^\varepsilon)\delta'$$

\label{col:robustness_condition}
\end{corollary}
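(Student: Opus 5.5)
We obtain Corollary~\ref{col:robustness_condition} by combining Theorem~\ref{theorem:analysis 1} with the standard PixelDP-style argument of \citep{Lecuyer2019}. The plan has three steps.

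\textbf{Step 1 (setup).} By Theorem~\ref{theorem:analysis 1}, $C = Q^{(\eta)} \circ A$ is $(\varepsilon,\delta')$-DP with respect to the adjacency relation used for $A$. In the robustness setting this adjacency is taken to be input perturbations $x \mapsto x+\alpha$ with $\|\alpha\|_2 \le L$, where $L$ is the radius for which $f_{\text{cdp}}$ was calibrated. For each class $i$, define the score $[C(x)]_i \in [0,1]$, for instance the indicator that the measured label is $i$, or the corresponding measurement probability $\operatorname{Tr}[E_i\tilde\rho]$. Its expectation $\mathbb{E}[[C(x)]_i]$ is then the probability that $C(x)$ outputs $i$, possibly averaged over inner randomness.

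\textbf{Step 2 (expected-output stability).} For any bounded function $g$ of the output with range $[0,1]$, we use the standard lemma
\[
\mathbb{E}[g(C(x))] \le e^{\varepsilon}\,\mathbb{E}[g(C(x'))] + \delta' .
\]
It follows by writing $\mathbb{E}[g] = \int_0^1 \Pr[g>t]\,dt$ and applying the DP inequality to each event $\{g>t\}$. Applying this in both directions gives, for every $\|\alpha\|\le L$,
\[
\mathbb{E}[[C(x+\alpha)]_k] \ge e^{-\varepsilon}\bigl(\mathbb{E}[[C(x)]_k]-\delta'\bigr)
\]
and
\[
\mathbb{E}[[C(x+\alpha)]_i] \le e^{\varepsilon}\,\mathbb{E}[[C(x)]_i]+\delta' .
\]

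\textbf{Step 3 (conclude).} Robustness at $x$ means $\mathbb{E}[[C(x+\alpha)]_k] > \max_{i\ne k}\mathbb{E}[[C(x+\alpha)]_i]$. By Step 2, a sufficient condition is
\[
e^{-\varepsilon}\bigl(\mathbb{E}[[C(x)]_k]-\delta'\bigr) > e^{\varepsilon}\max_{i\ne k}\mathbb{E}[[C(x)]_i]+\delta'.
\]
Multiplying by $e^{\varepsilon}$ gives exactly
\[
\mathbb{E}[[C(x)]_k] > e^{2\varepsilon}\max_{i\neq k}\mathbb{E}[[C(x)]_i] + (1+e^{\varepsilon})\delta' .
\]
Finally, Corollary~\ref{col:failure amplifying} gives $\delta'<\delta$, so this threshold is strictly below the one obtained from $A$ alone.

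The main obstacle is Step 1, namely making the adjacency notion precise so that Theorem~\ref{theorem:analysis 1} genuinely applies to input perturbations. The theorem is proven for pairs of output distributions of $A$ that are $(\varepsilon,\delta)$-close. We must therefore check that $f_{\text{cdp}}$ (and hence $A$, by post-processing) is $(\varepsilon,\delta)$-DP for inputs at $\ell_2$ distance at most $L$. This holds by calibrating $\sigma$ to sensitivity $L$. The rest is the routine expectation lemma.
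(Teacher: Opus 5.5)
Your proposal is correct and follows the paper's route: the paper's proof simply invokes the PixelDP robustness condition of Lecuyer et al.\ (their expected-output stability lemma plus Proposition~1) with the $(\varepsilon,\delta')$-DP guarantee of Theorem~\ref{theorem:analysis 1} plugged in, which is exactly what your Steps 2--3 reconstruct. Your Step 1 adds a point the paper leaves implicit: adjacency must be the symmetric $\ell_2$-ball relation $x \sim x+\alpha$, with $\sigma$ calibrated to sensitivity $L$, so that the pairwise argument of Theorem~\ref{theorem:analysis 1} applies in both directions.
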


\subsubsection{Second Analysis --- Amplifying the Privacy Loss}
We investigate how the composed mechanism $Q^{(\eta)} \circ A$ can simultaneously amplify both the privacy loss $\varepsilon$ and the failure probability $\delta$. The result is formalized in Theorem~\ref{theorem:analysis 2} which provides new $(\varepsilon', \delta')$ bound. The proof (detailed in Appendix~\ref{app:proof}) relies on the \emph{Advanced Joint Convexity} theory, originally introduced in~\citep{Balle2018Subsampling}. The key insight is that the depolarizing channel transforms the final output distribution into a convex combination of the original (noiseless) distribution and the distribution of a maximally mixed state. This explicit mixture structure allows the joint convexity theorem to be applied, yielding a new DP bound on both privacy loss and failure probability.

Theorem~\ref{theorem:analysis 2} reveals that the amplified failure probability $\delta'$ depends on the choice of POVMs. In particular, $\delta'$ becomes tighter as $\varphi = \min_k \left( \frac{\operatorname{Tr}(E_k)}{d} \right)$ increases. This insight leads to Corollary~\ref{col:measurement}, 
highlighting that $\delta'$ is minimized when all POVM elements $E_k$ have equal trace (i.e., $\operatorname{Tr}(E_k) = \frac{1}{K}$).

Contrarily, $\varepsilon' \leq \varepsilon$ for all $\eta \in [0, 1]$, the privacy loss in terms of $\varepsilon$ is always reduced. However, the bound on $\delta$ is only improved (i.e., $\delta' \leq \delta$) when the noise level $\eta$ exceeds the threshold given in Corollary~\ref{col:loss amplifying}. This condition shows that a sufficient level of quantum noise is required to achieve strict amplification of the privacy guarantee in both parameters.

\begin{theorem}
    [Amplification on Privacy Loss] Let $A = f_{\text{par}} \circ f_{\text{cdp}}$ be $(\varepsilon, \delta)$-DP, and $Q^{(\eta)} = f_{\text{mea}} \circ f^{(\eta)}_{\text{dep}}\circ f_{\text{enc}}$ be a quantum mechanism in a $d$-dimensional Hilbert space where $0\leq \eta \leq 1$ is the depolarizing noise factor. Then, the composition $Q^{(\eta)} \circ A$ is $(\varepsilon', \delta')$-DP where $\varepsilon' = \log\bigl(1 + (1-\eta) (e^{\varepsilon} - 1)\bigr)$ and $\delta' = (1-\eta) \bigl(1 - e^{\varepsilon' - \varepsilon}({1 - \delta}) -(e^\varepsilon - e^{\varepsilon'})\varphi\bigl)$ with $\varphi = \min_k \left( \frac{\operatorname{Tr}(E_k)}{d} \right)$.
    \label{theorem:analysis 2}
\end{theorem}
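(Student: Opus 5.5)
The proof goes through the Advanced Joint Convexity (AJC) theorem of \citep{Balle2018Subsampling}. Fix adjacent inputs $D_1, D_2$. Write $P_j(k) = \Pr[(f_{\text{mea}}\circ f_{\text{enc}}\circ A)(D_j) = k]$ for the noiseless output distributions over $\mathbb{Z}$. Let $U(k) = \operatorname{Tr}(E_k)/d$ be the outcome distribution of the maximally mixed state.

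\textbf{Step 1: mixture structure.} Linearity of trace and of expectation over the classical randomness of $A$ gives
\[
\Pr[(Q^{(\eta)}\circ A)(D_j)=k] = (1-\eta)P_j(k) + \eta U(k).
\]
So both output distributions are mixtures $M_j = (1-\eta)P_j + \eta U$ that share the common component $U$. By post-processing, $D_{e^\varepsilon}(P_1\|P_2)\le\delta$.

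\textbf{Step 2: AJC.} Following Balle et al., set $e^{\varepsilon'} = 1+(1-\eta)(e^\varepsilon-1)$ and $\beta = e^{\varepsilon'}/e^{\varepsilon}$. AJC gives
\[
D_{e^{\varepsilon'}}(M_1\|M_2) = (1-\eta)\, D_{e^{\varepsilon}}\bigl(P_1 \,\|\, (1-\beta)U + \beta P_2\bigr).
\]
This identity follows from the algebra
\[
M_1 - e^{\varepsilon'}M_2 = (1-\eta)\bigl[P_1 - e^{\varepsilon}\bigl((1-\beta)U + \beta P_2\bigr)\bigr],
\]
which I will check directly. The check reduces to $\eta - e^{\varepsilon'}\eta = -(1-\eta)e^\varepsilon(1-\beta)$, i.e.\ $\eta(e^{\varepsilon'}-1) = (1-\eta)(e^\varepsilon - e^{\varepsilon'})$, which holds by the choice of $\varepsilon'$.

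\textbf{Step 3: bound the remaining divergence.} This is the main obstacle, because the second argument mixes $U$ in and the $\delta$ bound is only available for $P_2$. Write $S$ for the positive set of $P_1 - e^\varepsilon\bigl((1-\beta)U+\beta P_2\bigr)$. The divergence equals
\[
P_1(S) - e^\varepsilon(1-\beta)U(S) - e^{\varepsilon'}P_2(S).
\]
Two estimates handle the terms:
\begin{itemize}
\item Apply DP in the complementary direction: $P_2(S^c) \le e^\varepsilon P_1(S^c)+\delta$. This gives $P_2(S) \ge 1 - e^\varepsilon(1-P_1(S)) - \delta$, or alternatively use $e^{-\varepsilon}$ scaling to obtain $e^{\varepsilon'}P_2(S)\ge e^{\varepsilon'-\varepsilon}(P_1(S)-\delta)$.
\item Bound $U(S)\ge\varphi$ whenever $S$ is nonempty, since every outcome has $U(k)\ge\min_k \operatorname{Tr}(E_k)/d$.
\end{itemize}
Choosing the estimate so the result matches the target, the divergence is at most
\[
P_1(S)\bigl(1-e^{\varepsilon'-\varepsilon}\bigr) + e^{\varepsilon'-\varepsilon}\delta - (e^\varepsilon - e^{\varepsilon'})\varphi .
\]
Bounding $P_1(S)\le 1$ then yields $1 - e^{\varepsilon'-\varepsilon}(1-\delta) - (e^\varepsilon - e^{\varepsilon'})\varphi$.

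The point needing care is the empty-$S$ case. There the divergence is $0$, and I would state $\delta'$ with the understanding that it is nonnegative, or take the positive part as in Theorem~\ref{theorem:analysis 1}.

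Multiplying by $(1-\eta)$ gives $\delta'$. Taking the supremum over adjacent pairs shows $Q^{(\eta)}\circ A$ is $(\varepsilon',\delta')$-DP by the hockey-stick characterization.
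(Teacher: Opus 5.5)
\textbf{Assessment.} Your proof is correct. It follows a refined version of the paper's route, and the refinement matters in one edge case.

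\textbf{Steps 1--2.} These match the paper. It also writes both outputs as $\eta\zeta_{\text{mix}}+(1-\eta)P_j$ and uses Advanced Joint Convexity with $\beta=e^{\varepsilon'-\varepsilon}$.

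\textbf{Step 3.} Here you differ. Let $\Delta:=D_{e^\varepsilon}\bigl(P_1\|(1-\beta)U+\beta P_2\bigr)$. The paper uses convexity in the second argument to get $\Delta\le(1-\beta)D_{e^\varepsilon}(P_1\|U)+\beta D_{e^\varepsilon}(P_1\|P_2)$. It bounds the second term by $\delta$ and the first by a separate lemma, $D_{e^\varepsilon}(z\|U)\le 1-e^\varepsilon\varphi$. You instead keep one positive set $S$ for both pieces, writing $\Delta=(1-\beta)\bigl[P_1(S)-e^\varepsilon U(S)\bigr]+\beta\bigl[P_1(S)-e^\varepsilon P_2(S)\bigr]$.

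\textbf{What the common set buys.} When $e^\varepsilon\varphi>1$, every $z(k)\le 1<e^\varepsilon U(k)$. So $D_{e^\varepsilon}(z\|U)=0$, while the paper's lemma asserts a negative upper bound, which is false. In that regime the stated $\delta'$ can itself be negative. For example, $K=2$, $\varphi=1/2$, $\varepsilon=2$, $\eta=1/2$ gives $\delta'<0$ for every $\delta\le 1$. So the theorem as written does need the positive part you propose.

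\textbf{Your bound in all cases.} Set $T=1-\beta(1-\delta)-(e^\varepsilon-e^{\varepsilon'})\varphi$. A nonempty $S$ forces $0<\Delta\le T$, and an empty $S$ gives $\Delta=0$. Hence $D_{e^{\varepsilon'}}(M_1\|M_2)=(1-\eta)\Delta\le(1-\eta)[T]_+$ always. By contrast, patching the paper's lemma to $[1-e^\varepsilon\varphi]_+$ would give only the weaker $(1-\eta)\beta\delta$ in that regime.

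\textbf{One cleanup.} Step 3 uses only the forward inequality $P_1(S)\le e^\varepsilon P_2(S)+\delta$ (together with $\beta\le 1$). The ``complementary direction'' estimate does not lead to the target and should be dropped.
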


\begin{corollary}
Let $\{E_k\}_{k=1}^K$ be the POVM used in $f_{\text{mea}}$. Then, the amplified failure probability $\delta'$ in Theorem~\ref{theorem:analysis 2} is minimized when all POVM elements have equal trace (i.e., $\operatorname{Tr}[E_k] = \frac{d}{K}$ for all $k \in \{1, \dots, K\}$).
\label{col:measurement}
\end{corollary}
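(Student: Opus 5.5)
The argument reduces to two elementary facts: $\delta'$ decreases in $\varphi$, and $\varphi$ is largest when the traces are all equal.

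\textbf{Step 1: monotonicity in $\varphi$.} From Theorem~\ref{theorem:analysis 2},
\[
\delta' = (1-\eta)\bigl(1 - e^{\varepsilon'-\varepsilon}(1-\delta) - (e^\varepsilon - e^{\varepsilon'})\varphi\bigr).
\]
The quantities $\varepsilon'$, $\varepsilon$, $\delta$ and $\eta$ do not depend on the POVM. So $\delta'$ is an affine function of $\varphi = \min_k \operatorname{Tr}(E_k)/d$ with slope $-(1-\eta)(e^\varepsilon - e^{\varepsilon'})$. Since $e^{\varepsilon'} = 1 + (1-\eta)(e^\varepsilon - 1) \le e^\varepsilon$ for $\eta \in [0,1]$, we have $e^\varepsilon - e^{\varepsilon'} = \eta(e^\varepsilon - 1) \ge 0$. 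The slope is therefore nonpositive, and it is strictly negative when $0<\eta<1$ and $\varepsilon>0$. Hence $\delta'$ is nonincreasing in $\varphi$, and minimizing $\delta'$ over POVMs amounts to maximizing $\varphi$.

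\textbf{Step 2: maximizing $\varphi$.} The POVM completeness relation $\sum_{k=1}^K E_k = I$ gives $\sum_k \operatorname{Tr}(E_k) = \operatorname{Tr}(I) = d$. The minimum of $K$ nonnegative numbers summing to $d$ is at most their average, so
\[
\min_k \operatorname{Tr}(E_k) \le \frac{d}{K}, \qquad \text{i.e.} \qquad \varphi \le \frac{1}{K}.
\]
Equality holds if and only if every $\operatorname{Tr}(E_k)$ equals $d/K$. Such POVMs exist whenever $K$ divides $d$, for instance by taking projectors onto equal-size blocks of a basis; otherwise one can use suitable non-projective elements such as $E_k = I/K$. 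The bound is therefore attained, and the minimum of $\delta'$ is achieved exactly at equal traces. The statement's phrasing ``$\operatorname{Tr}(E_k) = \frac{1}{K}$'' should be read as the normalized trace $\operatorname{Tr}(E_k)/d = 1/K$.

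\textbf{Main obstacle.} The only delicate point is the sign of the coefficient $e^\varepsilon - e^{\varepsilon'}$ in Step 1; without it, maximizing $\varphi$ would not minimize $\delta'$. At the boundary $\eta \in \{0,1\}$ or $\varepsilon = 0$, this coefficient or the factor $(1-\eta)$ vanishes. There $\delta'$ does not depend on $\varphi$ at all, so equal traces remain a (non-unique) minimizer.
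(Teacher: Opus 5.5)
Your proof is correct and is essentially the paper's argument: $\delta'$ is affine in $\varphi$ with nonpositive slope $-(1-\eta)(e^\varepsilon - e^{\varepsilon'})$, so minimizing $\delta'$ means maximizing $\varphi$, and $\sum_k \operatorname{Tr}(E_k) = d$ gives $\min_k \operatorname{Tr}(E_k) \le d/K$, with equality exactly when all traces are equal. Your additions (the identity $e^\varepsilon - e^{\varepsilon'} = \eta(e^\varepsilon - 1)$, the min-versus-average step, and attainability via $E_k = I/K$) make explicit what the paper states without detail.
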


\begin{corollary}
Given an optimal measurement such that $\operatorname{Tr}[E_k] = \frac{d}{K}\forall k$, the composed mechanism $Q^{(\eta)} \circ A$ strictly improves the privacy guarantee (i.e., $\varepsilon' \leq \varepsilon$ and $\delta' \leq \delta$) if
$$
\eta \geq 1 - \frac{\delta}{(1-\delta)(1-e^{-\varepsilon}) - (e^\varepsilon-1)/K}
$$
    \label{col:loss amplifying}
\end{corollary}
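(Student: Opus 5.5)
The plan is to reduce everything to an explicit closed form in $\eta$ by substituting the definition of $\varepsilon'$ from Theorem~\ref{theorem:analysis 2} into $\delta'$. Under the optimal measurement of Corollary~\ref{col:measurement}, $\operatorname{Tr}[E_k] = d/K$ for all $k$, so $\varphi = 1/K$.

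\textbf{The privacy loss.} By definition,
$$e^{\varepsilon'} = 1 + (1-\eta)(e^\varepsilon - 1) = e^\varepsilon - \eta(e^\varepsilon - 1).$$
Since $\eta \ge 0$ and $e^\varepsilon \ge 1$, this gives $e^{\varepsilon'} \le e^\varepsilon$, i.e. $\varepsilon' \le \varepsilon$, for every $\eta \in [0,1]$. The same identity also gives the two combinations that appear in $\delta'$:
$$e^{\varepsilon'-\varepsilon} = 1 - \eta(1 - e^{-\varepsilon}), \qquad e^\varepsilon - e^{\varepsilon'} = \eta(e^\varepsilon - 1).$$

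\textbf{The failure probability.} Substituting these into the bracket of $\delta'$ gives
$$1 - (1-\delta)\bigl(1 - \eta(1-e^{-\varepsilon})\bigr) - \frac{\eta(e^\varepsilon - 1)}{K} = \delta + \eta D,$$
where $D := (1-\delta)(1-e^{-\varepsilon}) - (e^\varepsilon - 1)/K$ is exactly the denominator appearing in the statement. Hence
$$\delta' = (1-\eta)(\delta + \eta D),$$
and so
$$\delta - \delta' = \eta\bigl(\delta - (1-\eta)D\bigr).$$
Since $\eta \ge 0$, we get $\delta' \le \delta$ whenever $(1-\eta)D \le \delta$.

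\textbf{Conclusion and cases.} If $D > 0$, the condition $(1-\eta)D \le \delta$ is equivalent to $\eta \ge 1 - \delta/D$, which is the stated threshold, and the corollary follows. If $D \le 0$, then $(1-\eta)D \le 0 \le \delta$ holds for all $\eta \in [0,1]$, so amplification is unconditional. In that case the displayed threshold should be read as vacuous, and I would note this explicitly. The only step needing care is this sign bookkeeping on $D$ and the trivial case $\eta = 0$, where $\delta' = \delta$.

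I do not expect a genuine obstacle: once the factorization $\delta' = (1-\eta)(\delta + \eta D)$ is found, the rest is immediate.
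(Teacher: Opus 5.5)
Your proposal is correct and follows essentially the same route as the paper: substitute $e^{\varepsilon'-\varepsilon}=1-\eta+\eta e^{-\varepsilon}$, $e^{\varepsilon}-e^{\varepsilon'}=\eta(e^{\varepsilon}-1)$ and $\varphi=1/K$, reduce the bracket to $\delta+\eta D$, and then solve $(1-\eta)(\delta+\eta D)\le\delta$ for $\eta$. Your explicit case split on the sign of $D$, together with the $\eta=0$ case, is a small improvement: the paper divides by $D$ while tacitly assuming $D>0$.
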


\subsubsection{Third Analysis --- Extending to Other Quantum Noise Channels}
\label{sec:rebuttal:main:generalization}

While our two first analyses focus on depolarizing noise, the underlying mechanism responsible for privacy amplification extends naturally to a broader class of quantum channels. The central insight is whenever a quantum noise channel induces a non-trivial contraction of the quantum hockey-stick divergence, it will inherently lead to privacy amplification. In this subsection, we show how this principle generalizes our first analysis to two widely studied asymmetric noise channels: Generalized Amplitude Damping (GAD) and Generalized Dephasing (GD). A more detailed discussion is provided in Appendix~\ref{sec:rebuttal:generalization}.

\noindent\textbf{Amplification Under Generalized Amplitude Damping.}
GAD channel 
is inherently asymmetric and non-unital. Despite this, we show that it contracts trace distance by a factor of at most $(2\sqrt{\eta}-\eta)$, where $\eta$ is the damping strength of GAD channel. Substituting this contraction into the proof framework for Theorem~\ref{theorem:analysis 1} yields the following $(\varepsilon',\delta')$.

\begin{theorem}[Amplification Under Generalized Amplitude Damping Noise]
    Let $A: \mathbb{X} \rightarrow \mathcal{P}(\mathbb{Y})$ be a classical mechanism satisfying $(\varepsilon, \delta)$-DP where $A = f_{\text{par}} \circ f_{\text{cdp}}$, and let $Q^{(p,\eta)}: \mathbb{Y} \rightarrow \mathcal{P}(\mathbb{Z})$ be a quantum mechanism in $d$-dimensional Hilbert space defined as $Q^{(p,\eta)} = f_{\text{mea}} \circ f_{\mathrm{GAD}}^{(p,\eta)} \circ f_{\text{enc}}$. Then, the composed mechanism $Q^{(p,\eta)} \circ A$ satisfies $(\varepsilon', \delta')$-DP, where
    $$
    \varepsilon' = \varepsilon, \quad \delta' = (2\sqrt{\eta} - \eta) \delta.
    $$
    \label{theorem:main:analysis_GAD}
\end{theorem}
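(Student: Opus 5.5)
We follow the template of the proof of Theorem~\ref{theorem:analysis 1}, with the depolarizing contraction replaced by a contraction estimate for $f_{\mathrm{GAD}}^{(p,\eta)}$. Fix adjacent inputs $x, x'$ and write $\mu, \mu'$ for the output distributions of $A$ on $\mathbb{Y}$. By hypothesis, $\operatorname{D}_{e^\varepsilon}(\mu\|\mu')\le\delta$. Pushing through $f_{\text{enc}}$ gives the averaged states
$$\bar\rho=\int f_{\text{enc}}(y)\,d\mu(y), \qquad \bar\rho'=\int f_{\text{enc}}(y)\,d\mu'(y).$$

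\textbf{Step 1: reduce to quantum states.} Since the final output distribution is $k\mapsto \operatorname{Tr}[E_k f_{\mathrm{GAD}}(\bar\rho)]$, the classical hockey-stick divergence of the outputs is bounded by $\operatorname{Tr}[(f_{\mathrm{GAD}}(\bar\rho)-e^\varepsilon f_{\mathrm{GAD}}(\bar\rho'))_+]$. This is the measurement step already used for Theorem~\ref{theorem:analysis 1}: take $S$ to be any set of outcomes, and note that $\sum_{k\in S}E_k$ is an operator between $0$ and $I$.

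\textbf{Step 2: isolate the operator to be contracted.} Set $X=\bar\rho-e^\varepsilon\bar\rho'$. This is Hermitian with $\operatorname{Tr}[X_+]\le \operatorname{D}_{e^\varepsilon}(\mu\|\mu')\le\delta$. The bound holds because the quantum hockey-stick divergence is jointly convex and positively homogeneous. Concretely, decompose $\mu-e^\varepsilon\mu'=\nu_+-\nu_-$ with $\nu_+(\mathbb{Y})\le\delta$. Then
$$X=\int f_{\text{enc}}\,d\nu_+-\int f_{\text{enc}}\,d\nu_-,$$
and the first term is positive semidefinite with trace at most $\delta$. By linearity of the channel, $f_{\mathrm{GAD}}(\bar\rho)-e^\varepsilon f_{\mathrm{GAD}}(\bar\rho')=f_{\mathrm{GAD}}(X)$. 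It therefore suffices to show
$$\operatorname{Tr}[f_{\mathrm{GAD}}(X)_+]\le (2\sqrt\eta-\eta)\operatorname{Tr}[X_+].$$

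\textbf{Step 3 (main obstacle): contraction for GAD.} Write the single-qubit GAD in its Kraus form with damping $\gamma$ and temperature $p$, and compute its action on Pauli coordinates. It scales the off-diagonal (Bloch $x,y$) components by $\sqrt{1-\gamma}$, scales the $z$ component by $1-\gamma$, and adds a $z$-shift proportional to $\gamma\operatorname{Tr}[\cdot]$. With the paper's parametrization, these factors should combine so that the Dobrushin/trace-norm contraction coefficient on traceless operators is at most $2\sqrt\eta-\eta$. I would first verify this coefficient on traceless Hermitian operators, where the non-unital shift vanishes, using $\|f(Y)\|_1\le c\|Y\|_1$.

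To turn this into a hockey-stick statement, I would use the standard reduction that $\operatorname{Tr}[f(X)_+]-\operatorname{Tr}[X_+]$ is controlled by the trace-distance contraction coefficient. This coefficient also contracts $E_\gamma$ in the form $E_\gamma(f(\rho)\|f(\sigma))\le c\,E_\gamma(\rho\|\sigma)$ (Hirche et al.). That inequality is exactly what is needed here, and I would cite it rather than reprove it.

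The non-unital drift is the delicate point: $\operatorname{Tr}X=1-e^\varepsilon\neq0$, so $X$ is not traceless and the shift does not vanish. Splitting $X$ into a traceless part plus a multiple of the fixed point is where care is needed. For the $n$-qubit case, I would apply the channel qubitwise and argue that the tensor product does not worsen the single-qubit coefficient bound, which is at least valid for the stated $\delta'$. Combining Steps 1 to 3 gives $\delta'=(2\sqrt\eta-\eta)\delta$ with $\varepsilon'=\varepsilon$.
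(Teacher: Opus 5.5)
There is a genuine gap in Step 3 for $n\ge 2$, and it cannot be closed. First, the paper's channel is not qubitwise. $f^{(p,\eta)}_{\mathrm{GAD}}=I_{2^{n-1}}\otimes A^{(p,\eta)}_{\mathrm{GAD}}$ damps one designated qubit and acts as the identity elsewhere, so its worst-case trace-distance contraction coefficient is exactly $1$. To see this, take $\rho=|0\rangle\langle 0|\otimes\omega$ and $\sigma=|1\rangle\langle 1|\otimes\omega$, with the first factor undamped. Then $\|f(\rho)-f(\sigma)\|_1=\|(|0\rangle\langle 0|-|1\rangle\langle 1|)\otimes A(\omega)\|_1=2=\|\rho-\sigma\|_1$.

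Your fallback, that tensoring does not worsen the single-qubit coefficient, is also false. Trace-distance coefficients do not tensorize. For $0<\eta<1$ the outputs of $A^{\otimes n}$ on $|0\rangle^{\otimes n}$ and $|1\rangle^{\otimes n}$ have fidelity $F^n\to 0$, so their trace distance tends to $1$.

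In fact the statement itself fails for $n\ge 2$. At $\eta=0$ it gives $\delta'=0$. A Gaussian $A$ is $(\varepsilon,\delta_\varepsilon)$-DP with $\delta_\varepsilon<1$ for every $\varepsilon>0$, so the statement would force the output law to be independent of $x$. But encoding $y_1=w^\top(x+\kappa)+b_1$ by $e^{-iy_1X/2}$ on an undamped qubit and measuring that qubit gives outcome probability $\tfrac12\bigl(1-e^{-\sigma^2\|w\|_2^2/2}\cos(w^\top x+b_1)\bigr)$, which in general differs between adjacent inputs. The paper's proof gets past this step by citing a complete contraction coefficient of the single-qubit channel. With the ancilla-based definition it states, the example above shows that quantity is $1$, so the citation does not supply the bound either.

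Your Steps 1 and 2 are correct, and they follow a different route from the paper. The paper decomposes $\mu,\nu$ on the classical side (Balle et al.) and reduces to point masses $y\neq y'$ via the orthogonality lemma. It then relaxes $\operatorname{D}^{(q)}_{e^{\tilde\varepsilon}}\le\operatorname{D}^{(q)}_1$ on encoded states and uses $\operatorname{D}^{(q)}_1(\rho\|\rho')\le 1$. You instead push the mixtures into $\bar\rho,\bar\rho'$ and apply the quantum hockey-stick contraction bound, whose coefficient is at most the trace-distance one for $\gamma\ge 1$. Your route is shorter; the paper's needs only trace-distance contraction on pairs of states.

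Both routes reduce to the same ingredient, and in both the non-unital drift is harmless. The coefficient only ever acts on differences of two states, which are traceless, so nothing needs to be split off from $X$.

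What your Bloch computation does prove is the case $n=1$. Read $\eta$ as the transmissivity; this is the only sensible reading, since if $\eta$ were the damping strength, $\eta=0$ would be the identity channel yet give $\delta'=0$. Then off-diagonals scale by $\sqrt\eta$ and $z$ scales by $\eta$, so the coefficient is $\sqrt\eta\le 2\sqrt\eta-\eta$. Your Steps 1--3 then give $\delta'=\sqrt\eta\,\delta$, slightly better than claimed.

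The same argument covers encoders $\rho_y=\tau\otimes\rho^{(n)}_y$ whose $y$-dependence is confined to the damped qubit, since $\|\tau\otimes A(\Delta)\|_1=\|A(\Delta)\|_1$. For general $n$ you need an encoder hypothesis of this kind, analogous to the equatorial assumption used for GD.
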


\noindent\textbf{Generalized Dephasing Under Equatorial Encoding.}
Dephasing noise preserves classical populations but suppresses quantum coherences. Although its worst-case contraction coefficient is~$1$, we show that for many QML encoding schemes, including angle-based encoders, the encoded states lie in the equatorial plane of the Bloch sphere. Under this structure, all distinguishability is encoded in coherence terms directly affected by GD noise, enabling nontrivial contraction.

\begin{assumption}[Product Equatorial Encoding on All Qubits]
\label{assump:equatorial_all}
For each input $y\in\mathbb{Y}$, the encoder prepares a product state
$$
    \rho_y = f_{\mathrm{enc}}(y)
           = \bigotimes_{j=1}^n \rho^{(j)}_y,
$$
where each single-qubit factor $\rho^{(j)}_y$ is an equatorial state on the
Bloch sphere, i.e.,
$$
    \rho^{(j)}_y
    = \frac12\!\left(I + \cos\phi^{(j)}_y\,X + \sin\phi^{(j)}_y\,Y\right),
$$
for some angle $\phi^{(j)}_y\in\mathbb{R}$ and with no $Z$-component.
\end{assumption}

Under this assumption, the GD channel contracts all relevant coherence terms by a factor of $|1-2\eta|$ where $\eta$ is the dephasing parameter of the GD channel, leading to the following privacy guarantee.

\begin{theorem}
\label{theorem:main:GD_all_qubits_DP}
Let $A: \mathbb{X}\to\mathcal{P}(\mathbb{Y})$ be a classical mechanism
satisfying $(\varepsilon,\delta)$-DP, and let
$$
    Q^{(\eta)} := f_{\mathrm{mea}} \circ f^{(\eta)}_{\mathrm{GD}}
                  \circ f_{\mathrm{enc}}
$$
be an $n$-qubit quantum mechanism where $f^{(\eta)}_{\mathrm{GD}}$
is the $n$-qubit GD channel and $f_{\mathrm{enc}}$ satisfies
Assumption~\ref{assump:equatorial_all}. Then the composed mechanism
$Q^{(\eta)}\circ A$ satisfies $(\varepsilon',\delta')$-DP with
$$
    \varepsilon' = \varepsilon,
    \qquad
    \delta' = |1-2\eta|\cdot \delta.
$$
\end{theorem}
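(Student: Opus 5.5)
Following the template of Theorem~\ref{theorem:analysis 1}, we reduce the privacy of $Q^{(\eta)}\circ A$ to a contraction statement for the GD channel on the image of the encoder.

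\textbf{Step 1: reduction to trace-distance contraction.} Fix adjacent inputs $D_1,D_2$ and let $P_1,P_2$ be the output distributions of $A$ on $\mathbb{Y}$. By $(\varepsilon,\delta)$-DP, $\operatorname{D}_{e^\varepsilon}(P_1\|P_2)\le\delta$. The final output distribution is obtained by measuring the averaged states
$$\sigma_i=\int f^{(\eta)}_{\mathrm{GD}}(\rho_y)\,dP_i(y).$$
Since measurement is a channel, data processing for the quantum hockey-stick divergence gives
$$\operatorname{D}_{e^\varepsilon}(\mathrm{out}_1\|\mathrm{out}_2)\le \operatorname{Tr}[(\sigma_1-e^\varepsilon\sigma_2)_+].$$

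Write $\mu=P_1-e^\varepsilon P_2$ as a signed measure with Jordan decomposition $\mu=\mu_+-\mu_-$, where $\mu_+(\mathbb{Y})\le\delta$. Then
$$\sigma_1-e^\varepsilon\sigma_2=f^{(\eta)}_{\mathrm{GD}}(\tau_+)-f^{(\eta)}_{\mathrm{GD}}(\tau_-),\qquad \tau_\pm=\int\rho_y\,d\mu_\pm.$$
By linearity, the question becomes how much $f^{(\eta)}_{\mathrm{GD}}$ shrinks the positive part of a Hermitian operator built from equatorial product states. This is exactly where the generic contraction coefficient of $1$ has to be beaten.

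\textbf{Step 2: contraction on the equatorial subspace.} Under Assumption~\ref{assump:equatorial_all}, expand each $\rho_y$ in the Pauli basis. Each single-qubit factor has Bloch vector $(\cos\phi,\sin\phi,0)$. The single-qubit GD channel
$$\rho\mapsto(1-\eta)\rho+\eta Z\rho Z$$
fixes $I$ and $Z$ and multiplies $X$ and $Y$ by $(1-2\eta)$. Consequently every non-identity Pauli string appearing in $\rho_y$ consists only of $X$ and $Y$ factors, and a string of weight $w$ is scaled by $(1-2\eta)^w$.

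We split $\tau=\tau_+-\tau_-=c\,I/d+\Delta$, where $c=\mu(\mathbb{Y})$ and $\Delta$ lives in the span of the weight-$\ge1$ $X/Y$ strings. The plan is to show
$$\operatorname{Tr}[(f^{(\eta)}_{\mathrm{GD}}(\tau))_+]\le|1-2\eta|\operatorname{Tr}[\tau_+\text{-part}]\le|1-2\eta|\,\mu_+(\mathbb{Y})\le|1-2\eta|\,\delta.$$

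For a single qubit this is clean: $f^{(\eta)}_{\mathrm{GD}}$ acts on the equatorial plane as the scaling $r\mapsto(1-2\eta)r$. The hockey-stick divergence then becomes a function of the shrunken Bloch vector, and since the eigenvalues are $\tfrac{c}{2}\pm\tfrac{|r|}{2}$, the positive part scales at most linearly. For $n$ qubits we would try to factor the $n$-qubit GD as a composition of single-qubit GD channels, each acting as $(1-2\eta)$ on the relevant coherence. We would then argue, following the paper's remark that ``all distinguishability is encoded in coherence terms'', that the total trace norm of the off-diagonal part contracts by at least $|1-2\eta|$. In particular, weights $w\ge2$ contract by more.

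\textbf{Main obstacle.} The hardest step is turning coherence scaling into a bound on $\operatorname{Tr}[(\cdot)_+]$, because the hockey-stick divergence is not a norm and the identity component $cI/d$ is not contracted. If a direct argument fails, the first fallback is to mimic the proof of Theorem~\ref{theorem:analysis 1}: write
$$f^{(\eta)}_{\mathrm{GD}}(\rho_y)=|1-2\eta|\,\rho_y+(1-|1-2\eta|)\,\omega_y$$
for some state $\omega_y$ whose dependence on $y$ is neutralized. Averaged over the equatorial structure, this $\omega_y$ should look like the dephased diagonal state, which carries no information about the $\phi$'s. Joint convexity of the hockey-stick divergence then gives the factor $|1-2\eta|$ on the $\delta$ term. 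Verifying that the dephased diagonal part is genuinely $y$-independent is the crux. For a single qubit, dephasing an equatorial state gives $I/2$, and the product structure should carry this to $n$ qubits. We would also carefully check that mixing in the $(1-2\eta)^w$ factors for $w\ge2$ still admits such a decomposition.
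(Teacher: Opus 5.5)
\textbf{Verdict: there is a genuine gap, and it cannot be closed --- the statement itself fails for $n\ge 3$.} Your Step 1 reduction is sound. The single-qubit case is also fine: for $n=1$ the GD channel multiplies the $X,Y$ components by $\lambda:=1-2\eta$, and your positive-part bound goes through. The gap is the passage to $n$ qubits. The trace-norm contraction you aim for is false there.

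Take $\rho_{y_1}=(|+\rangle\langle+|)^{\otimes 3}$ and $\rho_{y_2}=(|-\rangle\langle-|)^{\otimes 3}$, i.e.\ angles $0$ and $\pi$ on every qubit, e.g.\ angle encoding at $y_1=0$, $y_2=(\pi,\pi,\pi)$. Then $\Delta=\rho_{y_1}-\rho_{y_2}=\tfrac14(X_1+X_2+X_3+X_1X_2X_3)$ and $f^{(\eta)}_{\mathrm{GD}}(\Delta)=\tfrac14\bigl(\lambda(X_1+X_2+X_3)+\lambda^3X_1X_2X_3\bigr)$. Both are diagonal in the product $X$-basis. Summing absolute eigenvalues gives
$D_1\bigl(f^{(\eta)}_{\mathrm{GD}}(\rho_{y_1})\,\|\,f^{(\eta)}_{\mathrm{GD}}(\rho_{y_2})\bigr)=\tfrac12(3\lambda-\lambda^3)>\lambda$ for $0<\lambda<1$, while $D_1(\rho_{y_1}\|\rho_{y_2})=1$.

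Now let $A(x)=\delta\tau_{y_1}+(1-\delta)\tau_{y_0}$ and $A(x')=\delta\tau_{y_2}+(1-\delta)\tau_{y_0}$. This is $(\varepsilon,\delta)$-DP for every $\varepsilon\ge0$. Measure in the product $X$-basis, which the framework allows. At $\varepsilon=0$ the composed failure probability is exactly $\tfrac{\delta}{2}(3\lambda-\lambda^3)>|1-2\eta|\,\delta$.

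For general $n$, the relevant trace distance is the total variation between $\mathrm{Bern}(1-\eta)^{\otimes n}$ and $\mathrm{Bern}(\eta)^{\otimes n}$, which tends to $1$. Taking $\rho_{y_0}=\rho_{y_2}$, the composed failure probability then tends to $\delta$ as $n$ grows, for any fixed $\varepsilon$.

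Your heuristic that weight-$w\ge2$ strings ``contract by more'' is exactly what breaks the bound. The $X_1X_2X_3$ term partially cancels the weight-one terms in trace norm, so damping it faster pushes $\|f^{(\eta)}_{\mathrm{GD}}(\Delta)\|_1/\|\Delta\|_1$ above $\lambda$.

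The paper's proof closes this step with a lemma asserting precisely this contraction. Its justification is that $f^{(\eta)}_{\mathrm{GD}}$ is diagonal in the Pauli basis with eigenvalues of modulus at most $|1-2\eta|$. But that basis is orthonormal only for the Hilbert--Schmidt inner product, so the argument controls $\|\cdot\|_2$, not $\|\cdot\|_1$. The example above refutes that lemma too, so the paper does not supply the missing idea either.

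Your fallback fails for the same reason. On product equatorial inputs,
$f^{(\eta)}_{\mathrm{GD}}(\rho_y)=\bigotimes_j\bigl(|\lambda|\,\tilde\rho^{(j)}_y+(1-|\lambda|)\tfrac{I}{2}\bigr)$,
where $\tilde\rho^{(j)}_y$ is $\rho^{(j)}_y$ or its antipode depending on the sign of $\lambda$. Expanding the tensor product, the only $y$-independent term is $(1-|\lambda|)^nI/d$. Every partially dephased term $\tilde\rho^{(S)}_y\otimes(I/2)^{\otimes(n-|S|)}$ still depends on $y$. So the product structure does not carry the single-qubit weight-$|\lambda|$ decomposition to $n$ qubits.

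What the fallback does give is the minorization $f^{(\eta)}_{\mathrm{GD}}(\rho_y)\succeq(1-|\lambda|)^nI/d$. Plug it into your Step 1: the constant part adds only a nonpositive multiple of $I$ to $\sigma_1-e^\varepsilon\sigma_2$. The remainder is $1-(1-|\lambda|)^n$ times a hockey-stick term, which data processing through $y\mapsto\omega_y$ bounds by $\delta$. This yields the correct statement $\varepsilon'=\varepsilon$, $\delta'=\bigl(1-(1-|1-2\eta|)^n\bigr)\delta$. That coincides with the claimed $|1-2\eta|\,\delta$ only at $n=1$. (With a computational-basis measurement, product equatorial states give a uniform, $y$-independent output, so the claim would hold only trivially.)
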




\subsection{Utility Bound}
We finally establish a rigorous framework to study the utility loss, defined as the absolute error between the noisy and 
noise-free versions of our mechanism. The final output of the mechanism is stochastic, due to the sampling-based measurement process. Thus, we analyze the difference between the expected values of their output. The expected value represents the average behavior of a mechanism and provides a deterministic quantity that we can use to measure utility loss.

Formally, we define the expectation measurement function $f_{\text{exp}}: \mathcal{H} \rightarrow \mathbb{R}$ as:

$$
f_{\text{exp}}(\rho) = \sum_k k \operatorname{Tr}[E_k \rho] = \operatorname{Tr}[E_{\text{exp}} \rho]
$$

where $E_{\text{exp}} = \sum_k k E_k$ is the expectation value observable.

Using this function, we define our deterministic expectation mechanisms. The \textbf{full mechanism}, including classical and quantum noise, is
$
\mathcal{M}_{\text{full}}(x) = (f_{\text{exp}} \circ f^{(\eta)}_{\text{dep}} \circ f_{\text{enc}} \circ f_{\text{par}} \circ f_{\text{cdp}})(x)
$. On the other hand, the 
\textbf{noise-free mechanism (clean)} is
$
\mathcal{M}_{\text{clean}}(x) = (f_{\text{exp}} \circ f_{\text{enc}} \circ f_{\text{par}})(x)
$.
The total utility loss is the worst-case absolute error between their expected outputs:
$$
\text{Error} = \sup_{x \in \mathbb{X}} \left| \mathcal{M}_{\text{full}}(x) - \mathcal{M}_{\text{clean}}(x) \right|
$$

\begin{theorem} [Utility Bound]
Let the classical noise be $\kappa \sim \mathcal{N}(0, \sigma^2 I)$ acting on an input space $\mathbb{X}$ of dimension $d_X = \dim(\mathbb{X})$. For any desired failure probability $p > 0$, the utility loss is bounded probabilistically as:
$$
\Pr \left( \text{Error} \leq L_{\infty} \cdot \sigma\sqrt{2 \ln{\frac{2d_X}{p}}} + 2\eta \|E_{\text{exp}}\|_{op} \right) \geq 1-p
$$
where $L_{\infty} = 2(1-\eta) \|E_{\text{exp}}\|_{op} \|W\|_{\infty} \left(\sum_j \|H_j\|_{op}\right)$.
\label{theorem:utility}
\end{theorem}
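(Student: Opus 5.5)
The plan is to split the error at a single input $x$ with a triangle inequality, inserting the intermediate quantity $\mathcal{M}_{\text{mid}}(x) = (f_{\text{exp}} \circ f^{(\eta)}_{\text{dep}} \circ f_{\text{enc}} \circ f_{\text{par}})(x)$, which has depolarizing noise but no classical noise. Write $y = Wx+b$, $y' = W(x+\kappa)+b$, and let $\rho_y$ denote the encoded state.

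\emph{First term.} Since $f^{(\eta)}_{\text{dep}}$ is affine, we get
$$\mathcal{M}_{\text{full}}(x)-\mathcal{M}_{\text{mid}}(x) = (1-\eta)\operatorname{Tr}[E_{\text{exp}}(\rho_{y'}-\rho_y)].$$
Its absolute value is at most $(1-\eta)\|E_{\text{exp}}\|_{op}\|\rho_{y'}-\rho_y\|_1$ by Hölder's inequality. For pure states, $\|\rho_{y'}-\rho_y\|_1 \le 2\||\psi_{y'}\rangle-|\psi_y\rangle\|_2$.

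Next, telescope the product of unitaries $U(y)=\prod_j e^{-iy_jH_j}$, replacing one factor at a time. Each replacement costs $\|e^{-iy'_jH_j}-e^{-iy_jH_j}\|_{op}\le |y'_j-y_j|\,\|H_j\|_{op}$, which gives
$$\||\psi_{y'}\rangle-|\psi_y\rangle\| \le \sum_j \|H_j\|_{op}|y'_j-y_j| \le \|y'-y\|_\infty \sum_j\|H_j\|_{op}.$$
Finally, $\|y'-y\|_\infty = \|W\kappa\|_\infty \le \|W\|_\infty\|\kappa\|_\infty$, using the induced $\infty$-norm (maximum row sum). Combining these gives
$$|\mathcal{M}_{\text{full}}-\mathcal{M}_{\text{mid}}| \le L_\infty\|\kappa\|_\infty,$$
with exactly the stated constant $L_\infty$. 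This bound is uniform in $x$, so taking the supremum is harmless.

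\emph{Second term.} We have
$$\mathcal{M}_{\text{mid}}(x)-\mathcal{M}_{\text{clean}}(x) = \eta\operatorname{Tr}\!\left[E_{\text{exp}}\left(\tfrac{I}{d}-\rho_y\right)\right].$$
Its absolute value is at most $\eta\|E_{\text{exp}}\|_{op}\|I/d-\rho_y\|_1 \le 2\eta\|E_{\text{exp}}\|_{op}$, since each state has unit trace norm.

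\emph{Probability step.} It remains to bound $\|\kappa\|_\infty$. Each coordinate satisfies $\Pr(|\kappa_i|>t)\le 2e^{-t^2/(2\sigma^2)}$. A union bound over the $d_X$ coordinates gives $\Pr(\|\kappa\|_\infty>t)\le 2d_X e^{-t^2/(2\sigma^2)}$. Setting this equal to $p$ yields $t=\sigma\sqrt{2\ln(2d_X/p)}$, which completes the claim with probability at least $1-p$.

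The main obstacle is the perturbation bound for the encoded state. It must deliver exactly the constant $2\|W\|_\infty\sum_j\|H_j\|_{op}$, and this requires care with the pure-state trace distance factor. I would first try the crude bound $\|\,|a\rangle\langle a|-|b\rangle\langle b|\,\|_1\le 2\||a\rangle-|b\rangle\|$, which is easy via the triangle inequality on $|a\rangle(\langle a|-\langle b|)+(|a\rangle-|b\rangle)\langle b|$. I would use that rather than the sharper sine formula, since the stated constant already absorbs the factor $2$.

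A second subtlety is that the classical noise is a single random draw shared across all $x$. The supremum over $x$ is therefore controlled only because the bound on the first term depends on $\kappa$ alone and not on $x$.
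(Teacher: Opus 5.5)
Your proof is correct and follows the paper's argument essentially step for step: the same quantum-noise-only intermediate mechanism (the paper's $\mathcal{M}_{\text{half}}$), the same Hölder plus telescoping-unitary bound yielding $L_\infty$, the same $2\eta\|E_{\text{exp}}\|_{op}$ bound from $\|I/d-\rho\|_1\le 2$, and the same Gaussian union bound on $\|\kappa\|_\infty$. The only differences are cosmetic. You compute the first term directly from the affine form of the depolarizing channel rather than multiplying per-stage Lipschitz constants, and you make explicit that $\kappa$ is a single draw shared across all $x$, which the paper uses only implicitly.
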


Theorem~\ref{theorem:utility} provides a utility bound that quantifies the trade-off between privacy and performance. The proof (detailed in Appendix~\ref{app:proof}) utilizes an \textbf{intermediate mechanism (half)} that includes only quantum noise as
$
\mathcal{M}_{\text{half}}(x) = (f_{\text{exp}} \circ f^{(\eta)}_{\text{dep}} \circ f_{\text{enc}} \circ f_{\text{par}})(x)
$. Specifically, first, we bound the error introduced by the quantum noise ($\left|\mathcal{M}_{\text{half}} - \mathcal{M}_{\text{clean}}\right|$), which is shown to be proportional to the quantum noise level $\eta$. Second, we bound the error from the classical noise by establishing a Lipschitz constant $L_{\infty}$ for the quantum-only mechanism. As the classical noise is unbounded, the final guarantee is a high-probability statement relating the utility loss to the classical ($\sigma$) and quantum ($\eta$) noise levels.



\section{Experimental Evaluation}\label{sec:exp}
In this section, we empirically evaluate \texttt{HYPER-Q}, focusing on adversarial robustness, a direct outcome of the DP guarantee in Corollary~\ref{col:robustness_condition}. First, we conduct a sensitivity analysis of the depolarizing noise parameter $\eta$ to identify the optimal balance between quantum distortion and privacy amplification. Then, we show that given a fixed privacy budget $(\varepsilon', \delta')$, \texttt{HYPER-Q} is more efficient for QML models than existing DP mechanisms. 



\noindent \textbf{Implementation Details.} 
We implement a QML model designed to incorporate \texttt{HYPER-Q}. In the QML architecture, the variational quantum circuit uses uses $5$-$10$ qubits and consists of three alternating layers of single-qubit \texttt{RX} rotations and entangling layers, corresponding to the encoding function $f_{\text{enc}}$. Entanglement is implemented via a circular pattern of \texttt{CNOT} gates, followed by computational-basis projective measurement and a final fully connected layer to produce the prediction. This architecture widely adopted in prior QML and QDP studies~\cite{Havlíček2019,Du2021Depolarizing,watkins2023quantum,Long2025}. The implementation uses the PennyLane library~\citep{pennylane}, with quantum circuits executed on simulators, which is a standard practice for prototyping and evaluating quantum applications~\citep{Cicero2025_simulator}. To ensure DP, Gaussian noise is added directly to the input and depolarizing noise is applied as a layer in the quantum circuit. 
Specifically, given a target privacy budget $(\varepsilon', \delta')$, the depolarizing noise level $\eta$ is fixed, while the Gaussian noise level $\sigma^2$ is computed according to Theorem~\ref{theorem:analysis 1}. Further details and justification 
are provided in Appendix~\ref{app:implement_exp}. 

\noindent \textbf{Datasets and Baselines.} We evaluate our approach on three standard image classification datasets: MNIST~\citep{lecun2002gradient}, FashionMNIST~\citep{xiao2017fashion}, and USPS~\citep{hull2002database}. We compare $\texttt{HYPER-Q}$ against classical DP mechanisms including Basic Gaussian, Analytic Gaussian~\citep{AnalyticGaussian} and DP-SGD~\cite{abadi2016deep,watkins2023quantum}. We note that the first two mechanisms apply noise at the input level, whereas DP-SGD performs noise injection at the gradient level of the QML model. Each experiment reports the averaged accuracy over 10 runs. More details are provided in Appendix~\ref{app:benchmarks}.


\noindent \textbf{Adversarial Robustness Settings.} We use a certified defense framework~\citep{Lecuyer2019} 
that trains models with noise layers calibrated by a DP budget $(\varepsilon', \delta')$ and a construction attack bound $L_{\text{cons}}$. We then evaluate robustness by measuring the model's accuracy against FGSM~\citep{goodfellow2014explaining} and PGD~\citep{madry2018towards} attacks, whose strength is defined by the empirical attack bound $L_{\text{attk}}$. More details are provided in Appendix~\ref{app:adv_tt}.

\begin{figure*}[t!]
    \centering
        \includegraphics[width=\textwidth]{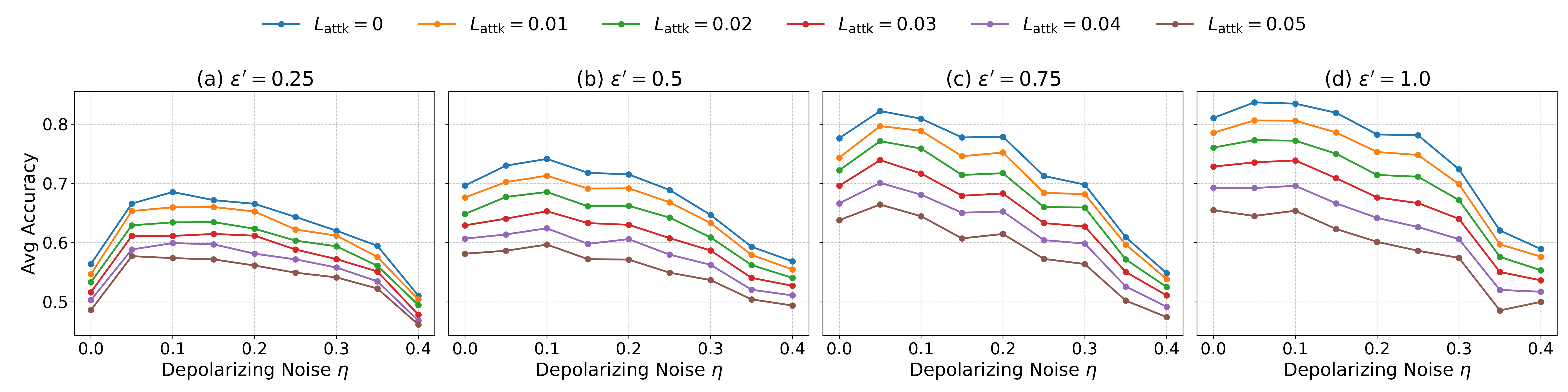}
        \caption{Subplots (a)–(d) illustrate the trade-off between quantum noise and model accuracy across MNIST, FashionMNIST, and USPS for privacy budgets $\varepsilon' \in \{0.25, 0.5, 0.75, 1.0\}$. Each curve denotes the mean accuracy under FGSM attacks of varying strengths ($L_{\text{attk}}$).}
    \label{fig:rebuttal:eta_tradeoff:overall}
\end{figure*}

\begin{figure*}[t]
    \centering
        \includegraphics[width=\textwidth]{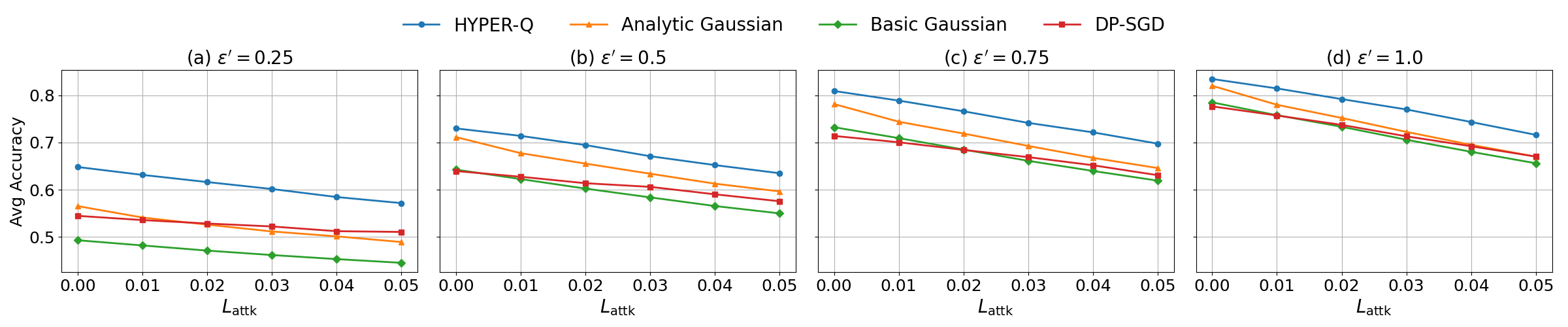}
        \caption{Average accuracy of noise-added mechanisms under FGSM and PGD attacks on MNIST, FashionMNIST, and USPS. Accuracy is averaged over all $L_{\text{cons}}$ settings for each $(L_{\text{attk}}, \varepsilon')$. $\texttt{HYPER-Q}$ is evaluated with $\eta = 0.1$ and $\delta' = 1\times 10^{-5}$.}
    \label{fig:exp1:overall}
\end{figure*}


\subsection{Sensitivity Analysis of $\eta$}
\label{sec:eta_analysis}

To characterize the impact of the depolarizing parameter on model utility, we evaluate model accuracy across varying levels of depolarizing noise $\eta \in \{0.05, 0.1, \dots, 0.4\}$. For each noise level $\eta$, we measured robustness against FGSM attacks with varying strengths $L_{\text{attk}} \in \{0, 0.01, \dots, 0.05\}$. We conducted these experiments across three datasets (MNIST, Fashion-MNIST, and USPS) under four distinct differential privacy guarantees $\varepsilon' \in \{0.25, 0.5, 0.75, 1\}$.

As illustrated in Figure~\ref{fig:rebuttal:eta_tradeoff:overall}, the relationship between $\eta$ and model accuracy exhibits a remarkably consistent unimodal structure across all evaluated settings. The accuracy curves follow a shared pattern. Specifically, utility initially increases as $\eta$ rises. This trend is driven by the ability of quantum noise to offload the requirement for disruptive classical Gaussian noise. Then, utility reaches a peak for $\eta=0.1$ with $\varepsilon' \leq 0.5$ and $\eta=0.05$ with $\varepsilon' > 0.5$. Finally, utility declines as the quantum distortion overwhelms the underlying data signal. Our sensitivity analysis reveals stable trends that provide a robust heuristic for depolarizing noise calibration. First, the optimal value for $\eta$ consistently falls within the narrow interval of $[0.05, 0.15]$ across all privacy budgets and attack strengths. Additionally, more relaxed privacy budgets generally require smaller values of $\eta$ for optimal results. More details are provided in Appendix~\ref{sec:appendix:eta_analysis}.



\subsection{Robustness Analysis of \texttt{HYPER-Q}}\label{sec:rob_analysis}
In this experiment, we illustrate that under the same privacy budget, \texttt{HYPER-Q} preserves adversarial robustness more efficiently than classical mechanisms in QML. We evaluate the adversarial robustness of \texttt{HYPER-Q} under the depolarizing noise $\eta = 0.1$ which is selected based on our sensitivity analysis. We compare its performance with Basic Gaussian, Analytic Gaussian and DP-SGD mechanisms. For fair comparisons, we ensure that all methods are evaluated under the same privacy budget and applied to the same QML model. 

Figure~\ref{fig:exp1:overall} presents the average accuracy on the MNIST, FashionMNIST, and USPS datasets under both FGSM and PGD attacks for four distinct privacy budgets $\varepsilon' \in \{0.25, 0.5, 0.75, 1\}$. We observe that \texttt{HYPER-Q} consistently outperforms all baseline methods, both in the absence of attack ($L_{\text{attk}} = 0$) and under attack ($L_{\text{attk}} > 0$). As the $\varepsilon'$ increases, the performance gap becomes more pronounced. Specifically, \texttt{HYPER-Q} surpasses the second-best method, Analytic Gaussian, by an average of $16.54\%$, $5.37\%$, $6.44\%$, and $5.20\%$ in accuracy across the four respective $\varepsilon'$ values. This demonstrates that replacing a reasonable amount of classical noise with quantum noise can enhance adversarial accuracy. More details are provided in Appendix~\ref{app:additional_exp:exp_1}.

For a complete evaluation, we refer readers to additional experimental studies in the Appendix, including: (i) a comparative benchmark against classical ML models, where the QML model equipped with \texttt{HYPER-Q} is compared to classical models protected by classical DP mechanisms (Appendix~\ref{app:exp:classicalML}); (ii) an analysis of dimensional scalability (Appendix~\ref{sec:rebuttal:dimension_scalability}); (iii) an empirical verification of the tightness of the utility bound in Theorem~\ref{theorem:utility} (Appendix~\ref{sec:rebuttal:tightness}); and (iv) an evaluation on the CIFAR-10 dataset (Appendix~\ref{sec:rebuttal:cifar10}).

\section{Conclusion}\label{sec:con}
In this work we have presented \texttt{HYPER-Q} as a hybrid privacy-preserving mechanism for quantum systems. Through extensive experimental analyses across three real-world datasets subjected to the FGSM and PGD attacks, we demonstrate that the combination of quantum and classical noise is both robust and scalable, while yielding notable improvements in privacy preservation and utility. 
Classical components ensure stable training and feasibility in interpretation, while quantum noise introduces natural randomness that enhances privacy without heavily degrading utility. 
As quantum hardware matures, we expect frameworks like \texttt{HYPER-Q} to be essential in shaping the future of privacy-preserving ML. A core direction for future work is to investigate the behavior of hybrid DP mechanisms on larger variational circuits deployed on actual quantum hardware.


\newpage
\section*{Impact Statement}
This paper presents work whose goal is to advance the field of machine learning. There are many potential societal consequences of our work, none of which we feel must be specifically highlighted here.
\bibliography{aaai2026}
\bibliographystyle{icml2026}

\newpage
\appendix
\onecolumn
\section*{Appendix}

\section{Additional Background}\label{app:additional_background}

\subsection{Quantum Neural Networks}
\label{app:QNN} 
Quantum neural networks (QNNs) are a class of quantum machine learning models that employ parameterized quantum circuits to learn from classical or quantum data. In this work, we focus on QNNs designed for classical input. In a supervised learning context, a QNN aims to approximate an unknown function $K: \mathbb{X} \to \mathbb{Y}$ by training on a dataset $S = \{(x_i, y_i)\}_{i=1}^{N}$, where each $x_i \in \mathbb{R}^d$ is an input data vector and $y_i$ is the associated label.

QNN models use parameterized quantum circuits to process data. The workflow for a typical QNN involves:

\begin{itemize}
    \item \textbf{Data Encoding:} Classical data is mapped into the quantum state of qubits using a parameterized "encoder" circuit. This step is crucial, as it can be trained to find powerful data representations and can introduce quantum features like entanglement to increase the model's capacity.
    \item \textbf{Model Circuit:} A sequence of parameterized quantum gates, analogous to the layers of a classical neural network, processes the encoded quantum state.

    \item \textbf{Measurement:} A measurement is performed on the final state to extract a classical output, which serves as the model's prediction.

\end{itemize}
Training a QNN is a hybrid quantum-classical process. The quantum computer executes the circuit and performs the measurement. A classical computer then calculates a loss function (e.g., Mean Squared Error) to quantify the error between the prediction and the true label. Given a differentiable loss function $f(\cdot)$, the objective is to minimize:
\begin{equation*}
    \mathcal{L}(\theta) = \sum_{i=1}^{N} f(\ell_i(\theta; y_i), y_i).
\end{equation*}

Finally, the classical computer uses gradient-based optimization to update the circuit's parameters, $\theta$. This process is repeated iteratively until the model converges. The goal is to find the optimal parameters $\theta^*$ that minimize the loss:
$$
\theta^* = \arg\min_{\theta} L(\theta).
$$



\subsection{Adversarial Robustness} 
\label{app:adversarial} 
A model is considered \emph{adversarially robust} if it can consistently make correct predictions even when its inputs are slightly altered by malicious perturbations. These altered inputs are known as \emph{adversarial samples}. Formally, we define a model $f: \mathbb{X} \to \mathbb{Y}$, which maps an input in the space $\mathbb{X}$ to an output distribution over labels $y = \{y_1,y_2,\dots, y_k\} \in \mathbb{Y}$. The model $f$ is considered adversarially robust if its prediction for an input $x$ is unchanged when a small perturbation $ \alpha$ is added to $x$. This can be stated as:
$$
\max_{i \in [1,k]} [f(x)]_i = \max_{i \in [1,k]} [f(x + \alpha)]_i, \quad \forall \alpha \in B_p(L),
$$

\noindent where $B_p(L_{\text{cons}})$ represents the $p$-norm ball of radius $L_{\text{cons}}$, that restricts the perturbation size to $\|\alpha\|_p \leq L_{\text{cons}}$. We also call $L_{\text{cons}}$ as the construction bound.

Recently, Differential Privacy (DP) has emerged as a promising approach to enhance model robustness. Originally developed to protect individual data in statistical databases, DP ensures that the output of an algorithm does not significantly change when a single individual's data is added or removed. This is typically achieved by injecting carefully calibrated randomness into the algorithm's computation. This property of prediction stability forms the foundation of the connection between DP and adversarial robustness, as explored in~\citep{Lecuyer2019}. By design, models trained with DP noise are inherently less sensitive to small input perturbations, thereby improving their resistance to adversarial attacks.

Formally, given a model $f$ which is $(\varepsilon, \delta)$ differentially privated under a $p$-norm metric, it is guaranteed to be robust against adversarial perturbations $\alpha$ of size $\|\alpha\|_p \leq 1$ if the following condition holds~\cite{Lecuyer2019}:

$$
E([f(x)]_k) > e^{2\varepsilon} \max_{i: i \neq k} E([f(x)]_i) + (1 + e^\varepsilon) \delta, \exists k \in K,
$$

\noindent where $E([f(x)]_k)$ is the expected confidence score for the correct label $k$, and $E([f(x)]_i)$ is the expected confidence score for other labels. 

This condition certifies that any input satisfying the inequality is immune to adversarial attacks within the defined perturbation size. A stronger DP guarantee, meaning smaller values for $\varepsilon$ and $\delta$, expands the set of inputs for which this robustness holds. In this work, our goal is to explore how quantum noise can amplify the DP guarantee, thereby significantly enhancing the model's overall robustness.

\section{Technical Proofs}
\label{app:proof}
In this section, we provide detailed proofs of the main theoretical results. 
\subsection{Proof of Theorem~4.1 and Supporting Lemmas}
\label{app:proof:dp1}

We investigate how the failure probability is amplified under quantum post-processing, assuming a fixed privacy loss parameter $\varepsilon$. Specifically, we aim to upper bound the quantity:
\begin{align}
    \sup_{x, x'} \operatorname{D}_{e^\varepsilon} \left( Q^{(\eta)} \circ A(x) \,\|\, Q^{(\eta)} \circ A(x') \right)
    \label{eq:failure distribution}
\end{align}

\begin{lemma}
Let $\mu$ and $\nu$ be probability distributions such that $\operatorname{D}_{e^\varepsilon}(\mu \| \nu) \leq \delta$, and define $\theta = \operatorname{D}_{e^\varepsilon}(\mu \| \nu)$. Then, there exist distributions $\mu'$, $\nu'$, and $\omega$, along with a parameter $ \tilde{\varepsilon} := \log\left(1 + \frac{e^\varepsilon - 1}{\theta} \right)$ such that:

$$
\mu = (1 - \theta)\omega + \theta \mu', \quad
\nu = \frac{1 - \theta}{e^\varepsilon} \omega + \left(1 - \frac{1 - \theta}{e^\varepsilon} \right) \nu',
$$

\noindent with disjoint distributions: $\mu' \perp \nu'$. Then, the following bound holds:

$$
\text{D}_{e^\varepsilon}(\mu \| \nu ) = \theta \cdot \text{D}_{e^{\tilde{\varepsilon}}}(\mu' \| \nu')
$$
\label{lemma:distribution gap}
\end{lemma}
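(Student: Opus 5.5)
The plan is to construct the decomposition explicitly from the positive and negative parts of $\mu - e^\varepsilon\nu$, following the advanced joint convexity construction of \citep{Balle2018Subsampling}. We work with densities with respect to a common dominating measure (e.g.\ $\mu+\nu$), still written $\mu(x),\nu(x)$. The starting point is the pointwise identity
$$\mu(x) = \min\bigl(\mu(x), e^\varepsilon\nu(x)\bigr) + \bigl(\mu(x)-e^\varepsilon\nu(x)\bigr)_+ .$$
Integrating it gives $\int \min(\mu, e^\varepsilon\nu) = 1-\theta$, because $\int(\mu-e^\varepsilon\nu)_+ = \theta$ by definition of the hockey-stick divergence.

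Assume first the generic case $0<\theta<1$. We define the three distributions as follows:
\begin{align*}
\omega &= \frac{\min(\mu, e^\varepsilon\nu)}{1-\theta},\\
\mu' &= \frac{(\mu-e^\varepsilon\nu)_+}{\theta},\\
\nu' &= \frac{\nu - e^{-\varepsilon}\min(\mu,e^\varepsilon\nu)}{1-(1-\theta)e^{-\varepsilon}}.
\end{align*}
Each is nonnegative. For $\nu'$ this holds because $\min(\mu,e^\varepsilon\nu)\le e^\varepsilon\nu$. Each integrates to one, since the numerator of $\nu'$ has mass $1-(1-\theta)e^{-\varepsilon}$. Rearranging these definitions gives the two mixture identities in the statement directly: the first is the pointwise identity above, and the second follows from the definition of $\nu'$.

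The key structural step is disjointness. The support of $\mu'$ is the set $S=\{x:\mu(x)>e^\varepsilon\nu(x)\}$. On $S$ we have $\min(\mu,e^\varepsilon\nu)=e^\varepsilon\nu$, so the numerator of $\nu'$ vanishes there. Hence $\mu'(S)=1$ and $\nu'(S)=0$, i.e.\ $\mu'\perp\nu'$. It follows that for every finite $\tilde\varepsilon$,
$$\operatorname{D}_{e^{\tilde\varepsilon}}(\mu'\|\nu')=\int_S \bigl(\mu'-e^{\tilde\varepsilon}\nu'\bigr)_+ = \int_S\mu' = 1.$$
Therefore $\theta\cdot\operatorname{D}_{e^{\tilde\varepsilon}}(\mu'\|\nu') = \theta = \operatorname{D}_{e^\varepsilon}(\mu\|\nu)$, and the specific value $\tilde\varepsilon=\log(1+(e^\varepsilon-1)/\theta)$ plays no role in this identity. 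That value is recorded only for its later use in the joint-convexity bound of Theorem~\ref{theorem:analysis 2}.

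The main obstacle is the degenerate cases, where the normalizations break down.
\begin{itemize}
\item If $\theta=0$, then $\mu\le e^\varepsilon\nu$ pointwise, so $\omega=\mu$ and the first identity holds with $\mu'$ arbitrary. We pick $\mu'$ to be any distribution singular to $\nu'$. The right-hand side is then $0$ regardless of $\tilde\varepsilon$, which is infinite here and is read as a formal parameter.
\item If moreover $\varepsilon=0$, then $\mu=\nu$, and the coefficient $1-(1-\theta)e^{-\varepsilon}$ vanishes. In that case $\nu'$ is also arbitrary, chosen singular to $\mu'$.
\item If $\theta=1$, then $\mu\perp\nu$. We set $\mu'=\mu$, $\nu'=\nu$, and $\omega$ arbitrary.
\end{itemize}
We check in each case that the decomposition and the identity still hold with the stated conventions. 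This completes the argument.
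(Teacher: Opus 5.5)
Your proof is correct for the statement as written. Your construction of $\omega,\mu',\nu'$ is the one from \citep{Balle2019MixingDiffusion} that the paper cites; the paper's text gives the same $p_\omega=\min(p_\mu,e^\varepsilon p_\nu)/(1-\theta)$. Your checks of nonnegativity, normalization, both mixture identities and $\mu'\perp\nu'$ are all sound.

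Where you differ is the last step. You get the identity from disjointness alone: $\operatorname{D}_{e^{\tilde\varepsilon}}(\mu'\|\nu')=1$, so both sides equal $\theta$ for every $\tilde\varepsilon$. That is valid, but it is not what the lemma is meant to encode. The cited argument instead subtracts the two mixtures. Since $e^\varepsilon\nu=(1-\theta)\omega+(e^\varepsilon-1+\theta)\nu'$, the $\omega$-terms cancel, and because $\theta e^{\tilde\varepsilon}=\theta+e^\varepsilon-1$ one gets, as signed measures,
\[
\mu-e^\varepsilon\nu=\theta\bigl(\mu'-e^{\tilde\varepsilon}\nu'\bigr).
\]
Taking positive parts gives the stated identity. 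More importantly, this identity is linear, so it passes through any Markov kernel:
\[
\operatorname{D}_{e^\varepsilon}(\mu Q\|\nu Q)=\theta\,\operatorname{D}_{e^{\tilde\varepsilon}}(\mu'Q\|\nu'Q).
\]
That is the form invoked right after the lemma and in the proofs of Theorems~\ref{theorem:analysis 1}, \ref{theorem:main:analysis_GAD} and \ref{theorem:main:GD_all_qubits_DP}. Your route cannot deliver it, because $\mu'Q$ and $\nu'Q$ are in general no longer disjoint.

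For the same reason, your remark about $\tilde\varepsilon$ is misplaced. Its value plays no role in Theorem~\ref{theorem:analysis 2}, whose proof uses advanced joint convexity with $\varepsilon'$. It is essential in Theorem~\ref{theorem:analysis 1}, where $\theta\eta(1-e^{\tilde\varepsilon})/d=\eta(1-e^\varepsilon)/d$ produces the additive term in $\delta'$.

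One small point on the degenerate case $\theta=0$. The first mixture forces $\omega=\mu$, which in turn forces $\nu'$. On a finite space where that $\nu'$ has full support, no $\mu'\perp\nu'$ exists (e.g.\ $\mu=\nu$ with full support and $\varepsilon>0$, which gives $\nu'=\nu$). Since $\tilde\varepsilon$ is undefined there anyway, it is cleaner to assume $\theta>0$. When $\theta=0$ the downstream bounds are trivial by data processing.
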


\begin{proof}
    Studied in~\citep{Balle2019MixingDiffusion}
\end{proof}

Let the output distributions of $A$ be denoted by $\mu = A(x)$ and $\nu = A(x')$, where $\mu, \nu \in \mathcal{P}(\mathbb{Y})$. Lemma~\ref{lemma:distribution gap}, originally studied in~\citep{Balle2019MixingDiffusion}, establishes a decomposition of two distributions $\mu$ and $\nu$ based on their divergence $\theta = \text{D}_{e^\varepsilon}(\mu \| \nu)$. Specifically, $\mu$ is decomposed into a mixture of an overlapping component $\omega$ and a residual component $\mu'$, while $\nu$ is similarly decomposed into $\omega$ and a residual $\nu'$. The shared component $\omega$ is defined via the density 
$
p_\omega = \frac{\min(p_\mu, e^\varepsilon p_\nu)}{1 - \theta}
$
The remaining distributions $\mu'$ and $\nu'$ correspond to the non-overlapping parts of $\mu$ and $\nu$, and it is shown that they have disjoint support (i.e., $\mu' \perp \nu'$). Lemma~\ref{lemma:distribution gap} also yields a transformation of the divergence between $\mu$ and $\nu$ in terms of the divergence between their respective components $\mu'$ and $\nu'$, specifically 
$
\text{D}_{e^\varepsilon}(\mu \| \nu) = \theta \cdot \text{D}_{e^{\tilde{\varepsilon}}}(\mu' \| \nu')
$.
Because the quantum process $Q$ is a linear map~\citep{Nielsen_Chuang_2010}, it preserves convex combinations of input distributions. Consequently, we obtain
\begin{align}
    \text{D}_{e^\varepsilon}(\mu Q \| \nu Q) = \theta \cdot \text{D}_{e^{\tilde{\varepsilon}}}(\mu' Q \| \nu' Q)
    \label{eq: equiv perp}
\end{align}

In addition, the orthogonality of $\mu'$ and $\nu'$ plays a crucial role in analyzing the contraction behavior of post-processing mechanisms, as will be demonstrated in the following lemma.

\begin{lemma}
    Given a post-process mechanism $Q$, we have:
$$
\sup_{\mu \perp \nu} \text{D}_\varepsilon(\mu Q \| \nu Q) \leq \sup_{y \neq y'} \text{D}_\varepsilon(Q(y) \| Q(y'))
$$
\label{lemma:orthogonality}
\end{lemma}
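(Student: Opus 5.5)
The plan is to reduce the supremum over orthogonal pairs of input distributions to a supremum over pairs of point masses. We treat the hockey-stick divergence (written $\text{D}_\varepsilon$ in the statement, meaning $\text{D}_{e^\varepsilon}$ for the relevant threshold) as a jointly convex function of its two arguments. We then use the linearity of $Q$ acting on distributions.

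\textbf{Step 1: joint convexity.} First we record that for any fixed threshold $\gamma \ge 1$ the map $(P,R) \mapsto \text{D}_{\gamma}(P\|R) = \int \max(0, P(z)-\gamma R(z))\,dz$ is jointly convex. This holds because the integrand $\max(0, a - \gamma b)$ is a convex, positively homogeneous function of $(a,b)$. By this homogeneity the divergence is even \emph{jointly} subadditive under a common mixing measure. In particular, if $P = \int \lambda(dt)\, P_t$ and $R = \int \lambda(dt)\, R_t$ with the same mixing weights $\lambda$, then $\text{D}_\gamma(P\|R) \le \int \lambda(dt)\, \text{D}_\gamma(P_t\|R_t)$.

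\textbf{Step 2: coupling the orthogonal pair.} Next we write $\mu Q = \int \mu(dy)\, Q(y)$ and $\nu Q = \int \nu(dy')\, Q(y')$. Since $\mu$ and $\nu$ are arbitrary, their mixing measures differ. We fix this by taking the product coupling $\pi = \mu \otimes \nu$ on $\mathbb{Y}\times\mathbb{Y}$. Its marginals give
\[
\mu Q = \int \pi(dy,dy')\, Q(y), \qquad \nu Q = \int \pi(dy,dy')\, Q(y').
\]
Applying Step 1 with $\lambda = \pi$ yields
\[
\text{D}_\gamma(\mu Q\|\nu Q) \le \int \pi(dy,dy')\, \text{D}_\gamma(Q(y)\|Q(y')).
\]

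\textbf{Step 3: using orthogonality.} Here orthogonality enters. Because $\mu \perp \nu$, there is a measurable set $S$ with $\mu(S)=1$ and $\nu(S)=0$. Hence $\pi$ is supported on $S \times S^c$, and on that set $y \neq y'$ always. Therefore the integrand is bounded by $\sup_{y\neq y'} \text{D}_\gamma(Q(y)\|Q(y'))$, and the integral is at most this supremum. Taking the supremum over all $\mu \perp \nu$ gives the lemma. Without orthogonality the coupling could put mass on the diagonal $y=y'$, where the divergence is $0$. That would only help the bound, so it is not harmful. The lemma, however, is stated for orthogonal pairs, and this is exactly the form needed to pass from \eqref{eq: equiv perp} to pointwise quantities.

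\textbf{Main obstacle.} The main obstacle is Step 1 in the general (possibly continuous) setting of $\mathbb{Y}$. One must justify joint convexity for integral mixtures rather than finite ones, and verify measurability of $(y,y')\mapsto \text{D}_\gamma(Q(y)\|Q(y'))$. For our quantum $Q$ the outputs live on the finite set $\mathbb{Z}$. Each entry $\operatorname{Tr}[E_k f^{(\eta)}_{\text{dep}}(f_{\text{enc}}(y))]$ is continuous in $y$, so the divergence is a finite sum of continuous functions. Joint convexity for integrals then follows by applying Jensen's inequality coordinatewise to $\max(0,a-\gamma b)$, using positive homogeneity. We would present the finite-output computation explicitly to avoid measure-theoretic subtleties.
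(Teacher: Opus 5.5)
Your proof is correct and follows the same route as the paper's sketch, which comes from Balle et al.: write $\mu Q$ and $\nu Q$ as mixtures of the point-mass outputs $Q(y)$, apply convexity of the hockey-stick divergence and linearity of $Q$, and use orthogonality so that only pairs with $y \neq y'$ contribute. Your product coupling $\pi = \mu\otimes\nu$ supplies the common mixing weights that joint convexity needs, which the paper's one-line ``by convexity of $\text{D}_{e^\varepsilon}$ and linearity of $Q$'' leaves implicit, so yours is a cleaner rendering of the same argument.
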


\begin{proof}
    Studied in~\citep{Balle2019MixingDiffusion}
\end{proof}

Lemma~\ref{lemma:orthogonality} establishes an upper bound on the divergence between two orthogonal distributions after applying a post-processing mechanism. 
Let $\tau_y$ denote the point mass distribution at $y$, i.e., $\tau_y(\tilde{y}) = 1$ if $\tilde{y} = y$ and $\tau_y(\tilde{y}) = 0$ otherwise. Then, $\mu = \sum_{y \in \operatorname{supp}(\mu)} \mu(y)\tau_y$ and similarly for $\nu$. By convexity of $\text{D}_{e^\varepsilon}$ and linearity of $Q$, we have:
\begin{align*}
&\text{D}_{e^\varepsilon}(\mu Q \| \nu Q) 
\leq \sup_{y \neq y'} \text{D}_{e^\varepsilon}(\tau_yQ \| \tau_{y'} Q) \leq \sup_{y \neq y'} \text{D}_{e^\varepsilon}(Q(y) \| Q(y'))
\end{align*}

Together, Lemmas~\ref{lemma:distribution gap} and~\ref{lemma:orthogonality} clarify how the divergence between the outputs of $A$ transforms under post-processing. It remains to analyze the divergence induced solely by $Q^{(\eta)}$, allowing us to focus on bounding:
\begin{align}
    \sup_{y\neq y'} \text{D}_{e^\varepsilon}(Q^{(\eta)}(y) \| Q^{(\eta)}(y'))
    \label{eq:failure quantum}
\end{align}

\begin{lemma}
    Given a measurement $E = \{E_i\}$ with $\sum_i E_i = I$, and two quantum states $\rho$ and $\rho'$, the classical hockey-stick divergence of the resulting probability distributions is less than or equal to the quantum hockey-stick divergence between the states.
    $$
    \operatorname{D}_\alpha(P \parallel P') \leq D^{(q)}_\alpha(\rho \parallel \rho')
    $$
    \label{lemma:QHSD_leq_HSD}
\end{lemma}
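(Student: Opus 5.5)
The plan is to prove Lemma~\ref{lemma:QHSD_leq_HSD} with a variational characterization of the quantum hockey-stick divergence. The two distributions are $P(i)=\operatorname{Tr}[E_i\rho]$ and $P'(i)=\operatorname{Tr}[E_i\rho']$. The classical divergence on the finite outcome set is $\operatorname{D}_\alpha(P\|P')=\sum_i \max(0,P(i)-\alpha P'(i))$, and we compare it with $\operatorname{Tr}[(\rho-\alpha\rho')_+]$.

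First, fix the outcome set $S=\{i: P(i)>\alpha P'(i)\}$. Then
\[
\operatorname{D}_\alpha(P\|P')=\sum_{i\in S}\operatorname{Tr}[E_i(\rho-\alpha\rho')]=\operatorname{Tr}[E_S(\rho-\alpha\rho')],
\]
where $E_S=\sum_{i\in S}E_i$. Since each $E_i\succeq 0$ and $\sum_i E_i=I$, we have $0\preceq E_S\preceq I$.

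Second, establish the variational bound $\operatorname{Tr}[M X]\le \operatorname{Tr}[X_+]$ for any Hermitian $X$ and any $0\preceq M\preceq I$. Write the Jordan decomposition $X=X_+-X_-$, with $X_\pm\succeq 0$ supported on orthogonal subspaces. Then:
\begin{itemize}
\item $\operatorname{Tr}[MX_-]\ge 0$, because $M$ and $X_-$ are both positive semidefinite, so $\operatorname{Tr}[MX]\le\operatorname{Tr}[MX_+]$;
\item $\operatorname{Tr}[MX_+]=\operatorname{Tr}[X_+^{1/2}MX_+^{1/2}]\le \operatorname{Tr}[X_+]$, because $M\preceq I$.
\end{itemize}
Applying this with $X=\rho-\alpha\rho'$ and $M=E_S$ gives the claim $\operatorname{D}_\alpha(P\|P')\le D^{(q)}_\alpha(\rho\|\rho')$.

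I expect no real obstacle. The only points needing care are that $\rho-\alpha\rho'$ is Hermitian, so its positive part is well defined, and that the POVM completeness relation is what yields $E_S\preceq I$. Equality holds when the POVM contains the projector onto the positive eigenspace of $\rho-\alpha\rho'$, which shows the bound is tight. This lemma then handles step (1) in the proof of Theorem~\ref{theorem:analysis 1}: it reduces the quantity in~\eqref{eq:failure quantum} to a quantum divergence, whose contraction under $f^{(\eta)}_{\text{dep}}$ is analyzed next.
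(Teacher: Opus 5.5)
Your proof is correct and is essentially the paper's argument. The paper bounds each term as $\max(0,\operatorname{Tr}[E_i A])\le \operatorname{Tr}[E_i A_+]$ and sums over all $i$ using $\sum_i E_i = I$, while you first group the positive outcomes into one effect $0\preceq E_S\preceq I$ and bound $\operatorname{Tr}[E_S X]\le\operatorname{Tr}[X_+]$, using the same two ingredients (positivity of the effects to discard $X_-$, completeness to compare against $\operatorname{Tr}[X_+]$) in a slightly different order.
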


\begin{proof}
    The quantum hockey-stick divergence is defined as:
    $$
    D^{(q)}_\alpha(\rho \parallel \rho') = \text{Tr}\big[(\rho - \alpha \rho')_+\big],
    $$
    where $A_+$ denotes the positive part of a Hermitian operator $A$. Let us define the operator $A = \rho - \alpha \rho'$.

    Applying measurement $E$ to $\rho$ and $\rho'$ yields probability distributions with elements:
    $$
    P(i) = \text{Tr}(E_i \rho), \quad P'(i) = \text{Tr}(E_i \rho').
    $$

    The classical hockey-stick divergence is defined as:
    $$
    D_\alpha(P \parallel P') = \sum_i [P(i) - \alpha P'(i)]_+,
    $$
    where $[x]_+ = \max(x, 0)$.

    We begin the proof from the definition of the classical divergence:
    \begin{align*}
        D_\alpha(P \parallel P')
        &= \sum_i \max\left(0, \text{Tr}(E_i \rho) - \alpha \text{Tr}(E_i \rho')\right) \\
        &= \sum_i \max\left(0, \text{Tr}\left(E_i (\rho - \alpha \rho')\right)\right)  \\
        &= \sum_i \max\left(0, \text{Tr}(E_i A)\right)
    \end{align*}
    For any positive semi-definite operator $E_i$ and any Hermitian operator $A$, it holds that $\text{Tr}(E_i A) \leq \text{Tr}(E_i A_+)$. Since $A_+$ is a positive semi-definite operator, $\text{Tr}(E_i A_+)$ is non-negative. Therefore, we can conclude that $\max(0, \text{Tr}(E_i A)) \leq \text{Tr}(E_i A_+)$.

    Applying this inequality to our expression, we get:
    \begin{align*}
        D_\alpha(P \parallel P')
        &\leq \sum_i \text{Tr}(E_i A_+) \\
        &= \text{Tr}\left(\sum_i E_i A_+\right) \\
        &= \text{Tr}\left(\left(\sum_i E_i\right) A_+\right) \\
        &= \text{Tr}(I \cdot A_+)\\
        &= D^{(q)}_\alpha(\rho \parallel \rho').
    \end{align*}
\end{proof}

Lemma~\ref{lemma:QHSD_leq_HSD} establishes the dependence between classical and quantum hockey-stick divergences under a fixed measurement. As a consequence, we can eliminate the explicit measurement map $f_{\text{mea}}$ from the post-processing pipeline. Specifically, we have:
\begin{align}
\text{D}_{\alpha}\big(Q^{(\eta)}(y) \,\|\, Q^{(\eta)}(y')\big) 
\leq \text{D}^{(q)}_{\alpha}\big(f^{(\eta)}_{\text{dep}} \circ f_{\text{enc}}(y) \,\|\, f^{(\eta)}_{\text{dep}} \circ f_{\text{enc}}(y')\big)
\label{eq:measurement elimination}
\end{align}

\begin{lemma}
    Given a depolarizing channel $f^{(\eta)}_{dep}(\rho) = \eta \frac{I}{d}+ (1-\eta) \rho$, for $\eta \in [0,1]$ and $\alpha \geq 1$, we have:
    \begin{align*}
        \text{D}^{(q)}_{\alpha} (f^{(\eta)}_{dep}(\rho)\parallel f^{(\eta)}_{dep}(\rho') ) 
    \leq \max\left\{0, (1-\alpha)\frac{\eta}{d}+(1-\eta) \text{D}^{(q)}_{\alpha}(\rho \parallel \rho')\right\}
    \end{align*}
    
    \label{lemma:QHSD_depolarizing}
\end{lemma}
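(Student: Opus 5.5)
The plan is to compute the operator inside the positive part directly and then bound the trace of its positive part using subadditivity. First I will expand the definition of the depolarizing channel:
\begin{align*}
f^{(\eta)}_{dep}(\rho) - \alpha f^{(\eta)}_{dep}(\rho')
= (1-\alpha)\frac{\eta}{d} I + (1-\eta)(\rho - \alpha\rho').
\end{align*}
Write $B = \rho - \alpha \rho'$ and $c = (1-\alpha)\eta/d \le 0$, since $\alpha \ge 1$. The operator of interest is then $cI + (1-\eta)B$. It commutes with $B$ and shares its eigenbasis. So if $\lambda_1,\dots,\lambda_d$ are the eigenvalues of $B$, the target quantity is exactly
\begin{align*}
\text{D}^{(q)}_{\alpha}\bigl(f^{(\eta)}_{dep}(\rho)\,\|\,f^{(\eta)}_{dep}(\rho')\bigr) = \sum_{i=1}^d \bigl[c + (1-\eta)\lambda_i\bigr]_+ .
\end{align*}
This reduces the problem to a scalar inequality over the spectrum of $B$.

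Next I split the index set into $S = \{i : c + (1-\eta)\lambda_i > 0\}$ and its complement. Because $c \le 0$, every $i \in S$ has $\lambda_i > 0$. Therefore
\begin{align*}
\sum_{i\in S}\bigl(c + (1-\eta)\lambda_i\bigr) = |S|\,c + (1-\eta)\sum_{i\in S}\lambda_i \le |S|\,c + (1-\eta)\,\text{Tr}[B_+].
\end{align*}
Since $\text{Tr}[B_+] = \text{D}^{(q)}_\alpha(\rho\|\rho')$, the remaining issue is the factor $|S|$ multiplying $c$.

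\begin{itemize}
\item If $S = \emptyset$, the divergence is $0$ and the bound holds trivially.
\item Otherwise $|S| \ge 1$, and $c \le 0$ gives $|S|\,c \le c$. The sum is then at most $c + (1-\eta)\text{D}^{(q)}_\alpha(\rho\|\rho')$.
\end{itemize}
In both cases the divergence is bounded by $\max\{0,\, c + (1-\eta)\text{D}^{(q)}_\alpha(\rho\|\rho')\}$, which is the claim.

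The step that needs care is keeping the $-\eta(\alpha-1)/d$ term rather than discarding it. Subadditivity of the positive part alone would lose it, which is why I work with the eigendecomposition of the shifted operator. The point that makes the argument go through is that a nonempty positive part pays the shift $c$ at least once. I do not expect any other obstacle. Combined with \eqref{eq:measurement elimination} and Lemmas~\ref{lemma:distribution gap}--\ref{lemma:orthogonality}, this lemma then yields Theorem~\ref{theorem:analysis 1}.
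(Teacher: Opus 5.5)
Your proof is correct and follows essentially the same route as the paper. The paper writes $\text{Tr}[U_+] = \text{Tr}[P_+ U]$, where $P_+$ is the projector onto the positive eigenspace of the shifted operator, and then uses $\text{Tr}[P_+] \ge 1$ together with $1-\alpha \le 0$; this is exactly your $|S|\,c \le c$ step. Your simultaneous diagonalization of $B$ and $cI + (1-\eta)B$ also makes explicit the bound $\text{Tr}[P_+(\rho - \alpha\rho')] \le \text{D}^{(q)}_\alpha(\rho\|\rho')$, which the paper uses without justification.
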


\begin{proof}
    Define the operator:
    $$
    U = f^{(\eta)}_{dep}(\rho) - \alpha f^{(\eta)}_{dep}(\rho) = (1-\eta)(\rho - \alpha \rho') + \eta(1-\alpha)\frac{I}{d}.
    $$
    Then:
    $$
    \text{D}^{(q)}_{\alpha} (f^{(\eta)}_{dep}(\rho)\parallel f^{(\eta)}_{dep}(\rho') ) = \text{Tr}[U_+],
    $$
    where $U_+$ denotes the positive part of $U$.

    Let $P_+$ be the projector onto the positive eigenspace of $U$. Since $\text{D}^{(q)}_{\alpha} (f^{(\eta)}_{dep}(\rho)\parallel f^{(\eta)}_{dep}(\rho')>0$, we have $\text{Tr}[P_+] \geq 1$. Then:
    \begin{align*}
        \text{Tr}[U_+]
        &= \text{Tr}[P_+ U] \\
        &= (1-\eta)\text{Tr}[P_+(\rho - \alpha \rho')] + (1-\alpha)\frac{\eta}{d} \text{Tr}[P_+] \\
        &\leq (1-\eta)\text{D}^{(q)}_\alpha(\rho \| \rho') + (1-\alpha)\frac{\eta}{d},
    \end{align*}
    since $\text{Tr}[P_+] \geq 1$ and $1 -\alpha \leq 0$.
\end{proof}

Lemma~\ref{lemma:QHSD_depolarizing} establishes that the quantum hockey-stick divergence contracts under a depolarizing channel by a factor of $(1-\eta)$, with an additive term depending on $\alpha$ and the dimension $d$. Applying this result with $\rho = f_{\text{enc}}(y)$ and $\rho' = f_{\text{enc}}(y')$ yields an upper bound on the right-hand side of Equation~\ref{eq:measurement elimination}, which in turn provides a bound for Equation~\ref{eq:failure quantum}.

\begin{theorem41}[Amplification on Failure Probability]
    Let $A: \mathbb{X} \rightarrow \mathcal{P}(\mathbb{Y})$ be a classical mechanism satisfying $(\varepsilon, \delta)$-DP where $A = f_{\text{par}} \circ f_{\text{cdp}}$, and let $Q^{(\eta)}: \mathbb{Y} \rightarrow \mathcal{P}(\mathbb{Z})$ be a quantum mechanism in a $d$-dimensional Hilbert space defined as $Q^{(\eta)} = f_{\text{mea}} \circ f^{(\eta)}_{\text{dep}} \circ f_{\text{enc}}$ where $0\leq \eta \leq 1$ is the depolarizing noise factor. Then, the composed mechanism $Q^{(\eta)} \circ A$ satisfies $(\varepsilon', \delta')$-DP, where
    $$
    \varepsilon' = \varepsilon, \quad \delta' = \left[ \frac{\eta(1 - e^\varepsilon)}{d} + (1 - \eta)\delta \right]_+
    $$
\end{theorem41}

\begin{proof}
    Let $\mu = A(x)$ and $\nu = A(x')$ be the output distributions of the mechanism $A$ on neighboring inputs $x$ and $x'$. We aim to bound the hockey-stick divergence
    $$
    \text{D}_{e^\varepsilon}(\mu Q^{(\eta)} \| \nu Q^{(\eta)}).
    $$
    By Lemma~\ref{lemma:distribution gap}, we can decompose $\mu$ and $\nu$ using a parameter $\theta = \text{D}_{e^\varepsilon}(\mu \| \nu)$ and define auxiliary distributions $\mu'$, $\nu'$, and $\omega$ with $\mu' \perp \nu'$ such that
    $$
    \mu = (1 - \theta)\omega + \theta \mu', \quad
    \nu = \frac{1 - \theta}{e^\varepsilon} \omega + \left(1 - \frac{1 - \theta}{e^\varepsilon} \right) \nu'.
    $$
    Additionally, define $\tilde{\varepsilon} = \log\left(1 + \frac{e^\varepsilon - 1}{\theta}\right)$. By Lemma~\ref{lemma:distribution gap}, it follows that
    $$
    \text{D}_{e^\varepsilon}(\mu \| \nu) \leq \theta \cdot \text{D}_{e^{\tilde{\varepsilon}}}(\mu' \| \nu').
    $$
    We now consider the post-processed outputs:
    {\small
    \begin{align*}
        &\text{D}_{e^\varepsilon}(\mu Q^{(\eta)} \| \nu Q^{(\eta)}) \\
        &\leq \theta \cdot \text{D}_{e^{\tilde{\varepsilon}}}(\mu' Q^{(\eta)} \| \nu' Q^{(\eta)}) \\
        &\leq \theta \cdot \sup_{y \neq y'} \text{D}_{e^{\tilde{\varepsilon}}}(Q^{(\eta)}(y) \| Q^{(\eta)}(y'))  \text{(Lemma~\ref{lemma:orthogonality})} \\
        &\leq \theta \cdot \sup_{y \neq y'} \text{D}^{(q)}_{e^{\tilde{\varepsilon}}}\left(f^{(\eta)}_{\text{dep}} \circ f_{\text{enc}}(y) \,\|\, f^{(\eta)}_{\text{dep}} \circ f_{\text{enc}}(y')\right)  \text{(Lemma~\ref{lemma:QHSD_leq_HSD})} \\
        &= \theta \cdot \sup_{\rho, \rho'} \text{D}^{(q)}_{e^{\tilde{\varepsilon}}}\left(f^{(\eta)}_{\text{dep}}(\rho) \,\|\, f^{(\eta)}_{\text{dep}}(\rho')\right) \\
        &\leq \theta \cdot \max\left\{0, \frac{\eta(1 - e^{\tilde{\varepsilon}})}{d} + (1 - \eta) \cdot \text{D}^{(q)}_{e^{\tilde{\varepsilon}}}(\rho \| \rho') \right\} \text{(Lemma~\ref{lemma:QHSD_depolarizing})} \\
        &\leq \max\left\{0, \frac{\theta \eta(1 - e^{\tilde{\varepsilon}})}{d} + \theta(1 - \eta) \right\} \text{(Because $\text{D}^{(q)}_{e^{\tilde{\varepsilon}}}(\rho \| \rho') \leq 1$)}
    \end{align*}
    }
    Recall that $e^{\tilde{\varepsilon}} = 1 + \frac{e^\varepsilon - 1}{\theta}$, we substitute this into the expression:
\begin{align*}
\frac{\theta \eta(1 - e^{\tilde{\varepsilon}})}{d} &= \frac{\theta \eta}{d} \left(1 - \left(1 + \frac{e^\varepsilon - 1}{\theta}\right)\right) \\
&= \frac{\theta \eta}{d} \left(-\frac{e^\varepsilon - 1}{\theta}\right) \\
&= \frac{\eta(1 - e^\varepsilon)}{d}
\end{align*}

Additionally, since the original mechanism $A$ is $(\varepsilon, \delta)$-DP, we have $\theta = \text{D}_{e^\varepsilon}(\mu \| \nu) \leq \delta$. Because $1-\eta \geq 0$, we have the final result:

$$\text{D}_{e^\varepsilon}(\mu Q^{(\eta)} \| \nu Q^{(\eta)}) \leq \left[ \frac{\eta(1 - e^\varepsilon)}{d} + (1 - \eta)\delta \right]_+$$
\end{proof}

Theorem~\ref{theorem:analysis 1} establishes a bound on the failure probability $\delta'$ of the composed mechanism $Q^{(\eta)} \circ A$, while keeping the privacy loss fixed at $\varepsilon' = \varepsilon$. This result is derived by sequentially applying Lemmas~\ref{lemma:orthogonality}, \ref{lemma:QHSD_leq_HSD}, and \ref{lemma:QHSD_depolarizing} to Equation~\ref{eq: equiv perp}. From the final bound, it follows that for $\varepsilon \in [0,1]$, we have $\delta' \leq \delta$. Therefore, the failure probability is strictly reduced, resulting in a privacy amplification effect, as formally stated in Corollary~\ref{col:failure amplifying}.

Based on~\citep{Lecuyer2019}, we derive an explicit condition for certifiable adversarial robustness of the composed mechanism $Q^{(\eta)} \circ A$ in Corollary~\ref{col:robustness_condition}. This condition defines a robustness threshold that the model's expected confidence scores must exceed. Notably, due to the privacy amplification effect formalized in Corollary~\ref{col:failure amplifying}, the robustness threshold under the composed mechanism (parameterized by $\delta'$) is strictly lower than that of the original classical mechanism (parameterized by $\delta$). As a result, quantum post-processing provably enlarges the set of inputs for which adversarial robustness can be guaranteed. For further details on adversarial robustness, we refer readers to Appendix~\ref{app:additional_background}.

\begin{corollary42}
    The composed mechanism $Q^{(\eta)} \circ A$ satisfies $(\varepsilon, \delta')$-DP with $\delta' < \delta$, thus strictly amplifying the overall failure probability.
\end{corollary42}

\begin{proof}
The goal is to show that $\delta' < \delta$ for any non-trivial case where quantum post-processing is active ($\eta > 0$). From Theorem~\ref{theorem:analysis 1}, we have:
$$
\delta' = \left[ \frac{\eta(1 - e^\varepsilon)}{d} + (1 - \eta)\delta \right]_+
$$
Let the first term be $C = \frac{\eta(1 - e^\varepsilon)}{d}$. Since $\eta > 0$, $d \ge 2$, and $\varepsilon > 0$, we have $C \le 0$. Since $C$ is strictly negative, $C + (1-\eta)\delta < (1-\eta)\delta \le \delta$. Thus, $\delta' < \delta$.

\end{proof}

\begin{corollary43}
The composed mechanism $C = Q^{(\eta)} \circ A$ is certifiably robust against adversarial perturbations for an input $x \in \mathbb{X}$ if the following condition holds for the correct class $k$:

$$\mathbb{E}[[C(x)]_k] > e^{2\varepsilon} \max_{i \neq k} \mathbb{E}[[C(x)]_i] + (1 + e^\varepsilon)\delta'$$

\end{corollary43}

\begin{proof}
    Studied in ~\citep{Lecuyer2019}.
\end{proof}

\subsection{Proof of Theorem~4.4 and Supporting Lemmas}
\label{app:proof:dp2}

We investigate how the composed mechanism $Q^{(\eta)} \circ A$ can simultaneously amplify both the privacy loss $\varepsilon$ and the failure probability $\delta$. Our approach relies on the \emph{Advanced Joint Convexity} theory, originally introduced in~\citep{Balle2018Subsampling}. We restate the theory below as Lemma~\ref{lemma:advanced joint convexity}.

\begin{lemma}[Advanced Joint Convexity]
Let $\mu, \mu'$ be probability distributions such that
$$
\mu = (1 - \sigma)\mu_0 + \sigma \mu_1, \quad \mu' = (1 - \sigma)\mu_0 + \sigma \mu_1',
$$
for some $\sigma \in [0,1]$, and distributions $\mu_0, \mu_1, \mu_1'$. Given $\alpha \geq 1$, define
$
\alpha' = 1 + \sigma(\alpha - 1), \quad \beta = \frac{\alpha'}{\alpha}
$.
Then the following inequality holds:
$$
D_{\alpha'}(\mu \| \mu') \leq (1 - \beta)\sigma D_\alpha(\mu_1 \| \mu_0) + \beta \sigma D_\alpha(\mu_1 \| \mu_1')
$$
\label{lemma:advanced joint convexity}
\end{lemma}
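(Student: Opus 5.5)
The proof is a direct algebraic identity followed by subadditivity of the positive part; no decomposition machinery from the earlier lemmas is needed. I will show that the signed measure $\mu - \alpha'\mu'$ is exactly $\sigma$ times a convex combination, with weights $1-\beta$ and $\beta$, of the two signed measures $\mu_1 - \alpha\mu_0$ and $\mu_1 - \alpha\mu_1'$. The hockey-stick divergence is the integral of the positive part, and $[a+b]_+ \le [a]_+ + [b]_+$ pointwise, so the bound then follows immediately.

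\textbf{Step 1 (parameters).} Check that $\beta \in [0,1]$. Since $\alpha \ge 1$ and $\sigma \in [0,1]$, we have $1 \le \alpha' = 1 + \sigma(\alpha-1) \le \alpha$. Hence $\beta = \alpha'/\alpha \in (0,1]$, and both $\beta$ and $1-\beta$ are nonnegative weights.

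\textbf{Step 2 (the identity).} Expand using the mixture forms of $\mu$ and $\mu'$:
\begin{align*}
\mu - \alpha'\mu' &= (1-\sigma)(1-\alpha')\mu_0 + \sigma\mu_1 - \alpha'\sigma\mu_1'.
\end{align*}
Using $1-\alpha' = -\sigma(\alpha-1)$, this becomes
\begin{align*}
\mu - \alpha'\mu' &= \sigma\bigl[\mu_1 - (1-\sigma)(\alpha-1)\mu_0 - \alpha'\mu_1'\bigr].
\end{align*}
On the other side, expand the target combination:
\begin{align*}
(1-\beta)(\mu_1 - \alpha\mu_0) + \beta(\mu_1 - \alpha\mu_1') &= \mu_1 - (\alpha-\alpha')\mu_0 - \alpha'\mu_1'.
\end{align*}
This uses $(1-\beta)\alpha = \alpha-\alpha'$ and $\beta\alpha = \alpha'$. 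Finally, $\alpha-\alpha' = (\alpha-1) - \sigma(\alpha-1) = (1-\sigma)(\alpha-1)$, so the two bracketed expressions coincide. Therefore
\begin{align*}
\mu - \alpha'\mu' &= \sigma\bigl[(1-\beta)(\mu_1 - \alpha\mu_0) + \beta(\mu_1 - \alpha\mu_1')\bigr].
\end{align*}

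\textbf{Step 3 (conclude).} Take positive parts pointwise in densities and integrate. Use $\sigma \ge 0$, $1-\beta \ge 0$ and $\beta \ge 0$, together with $[ca]_+ = c[a]_+$ for $c \ge 0$ and $[a+b]_+ \le [a]_+ + [b]_+$. This gives
\begin{align*}
D_{\alpha'}(\mu\|\mu') &\le (1-\beta)\sigma D_\alpha(\mu_1\|\mu_0) + \beta\sigma D_\alpha(\mu_1\|\mu_1').
\end{align*}

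The only real obstacle is spotting the right convex decomposition in Step 2, namely that the $\mu_0$ coefficient $(1-\sigma)(\alpha-1)$ equals exactly $(1-\beta)\alpha$. Once that identity is verified, the remaining steps are routine.
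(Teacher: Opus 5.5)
Your proof is correct. The paper gives no argument of its own for this lemma (it only cites Balle et al.\ (2018)), and your route is essentially the one in that source, where your Step 2 identity is written as $\mu - \alpha'\mu' = \sigma\bigl(\mu_1 - \alpha[(1-\beta)\mu_0 + \beta\mu_1']\bigr)$. That form yields the exact equality $D_{\alpha'}(\mu\|\mu') = \sigma D_\alpha\bigl(\mu_1 \| (1-\beta)\mu_0 + \beta\mu_1'\bigr)$, which is the form the paper invokes in the proof of Theorem 4.4, and convexity of $D_\alpha$ in its second argument then gives the bound, just as your subadditivity step does.
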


\begin{proof}
    Studied in \citep{Balle2018Subsampling}
\end{proof}

Lemma~\ref{lemma:advanced joint convexity} provides an upper bound on the divergence $D_{\alpha'}(\mu \| \mu')$ in terms of $D_\alpha$ divergences between the component distributions $\mu_0$, $\mu_1$, and $\mu_1'$. The bound becomes tighter as the contribution of the shared (overlapping) distribution $\mu_0$, controlled by the mixing parameter $\sigma$, increases. Returning to our analysis, given $\mu = A(x)$ and $\nu = A(x')$ for adjacent inputs $x \sim x'$, and a tighter privacy loss $\varepsilon' = \log [1+\sigma (e^\varepsilon-1)]$ we are able to bound $D_{e^{\varepsilon'}}(\mu Q^{(\eta)} \| \nu Q^{(\eta)})$ by identifying the shared component between the distributions $\mu Q^{(\eta)}$ and $\nu Q^{(\eta)}$ 
as illustrated in Lemma~\ref{lemma:measurement}.

\begin{lemma}
Let $\rho$ be a density matrix on a $d$-dimensional Hilbert space, and let 
$$
\rho' = f_{\text{dep}}(\rho) = \eta \frac{I}{d} + (1 - \eta) \rho
$$ 
be its depolarized version, where $0 \leq \eta \leq 1$. Let $\{E_k\}_{k=1}^K$ be a POVM satisfying $\sum_k E_k = I$. Then, the measurement probabilities satisfy:
$$
\zeta'(k) = \frac{\eta}{d}\operatorname{Tr}(E_k) + (1 - \eta) \zeta(k),
$$
where $\zeta' = f_{\text{mea}}(\rho')$ and $\zeta = f_{\text{mea}}(\rho)$ with $\zeta', \zeta \in \mathcal{P}(\mathbb{Z})$.
\label{lemma:measurement}

\end{lemma}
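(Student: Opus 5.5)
This follows directly from linearity of the trace and the POVM rule in the definition of $f_{\text{mea}}$. By definition, $\zeta'(k) = \operatorname{Tr}[E_k \rho']$. I will substitute the explicit form of the depolarized state, $\rho' = \eta \frac{I}{d} + (1-\eta)\rho$, and expand the trace.

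Linearity of the trace and of matrix multiplication in its second argument gives
\begin{align*}
\zeta'(k) = \operatorname{Tr}\Bigl[E_k\Bigl(\eta \tfrac{I}{d} + (1-\eta)\rho\Bigr)\Bigr] = \tfrac{\eta}{d}\operatorname{Tr}[E_k I] + (1-\eta)\operatorname{Tr}[E_k\rho].
\end{align*}
Since $E_k I = E_k$, the first term equals $\frac{\eta}{d}\operatorname{Tr}(E_k)$. The second term equals $(1-\eta)\zeta(k)$ because $\zeta = f_{\text{mea}}(\rho)$ assigns probability $\operatorname{Tr}[E_k\rho]$ to outcome $k$. This is the claimed identity.

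As a consistency check, I will verify that $\zeta'$ is a genuine probability distribution. Summing over $k$ and using $\sum_k E_k = I$ gives $\sum_k \operatorname{Tr}(E_k)/d = \operatorname{Tr}(I)/d = 1$. So $u(k) := \operatorname{Tr}(E_k)/d$ is itself the measurement distribution of the maximally mixed state $I/d$. Therefore $\zeta' = \eta\, u + (1-\eta)\zeta$ is a convex mixture of two distributions on $\mathbb{Z}$.

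This mixture form is what the subsequent application of Lemma~\ref{lemma:advanced joint convexity} needs. The shared component is $\mu_0 = u$, which is independent of the input, and the mixing weight is $\sigma = 1-\eta$.

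There is no real obstacle here. The only care required is to note that $u$ does not depend on $\rho$, so the same $u$ appears for both neighboring inputs when the lemma is used in the proof of Theorem~\ref{theorem:analysis 2}.
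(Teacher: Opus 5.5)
Your proposal is correct and takes essentially the same approach as the paper, which also substitutes $\rho' = \eta\frac{I}{d} + (1-\eta)\rho$ into $\operatorname{Tr}(E_k\rho')$ and expands by linearity of the trace. Your extra check that $\operatorname{Tr}(E_k)/d$ is the input-independent distribution of the maximally mixed state, giving the shared component with weight $\sigma = 1-\eta$ for the advanced joint convexity step, is also consistent with how the paper uses the lemma.
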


\begin{proof}
By linearity of the trace operator,
\begin{align*}
\zeta'(k) &= \operatorname{Tr}(E_k \rho') \\
&= \operatorname{Tr} \left( E_k \left( \eta \frac{I}{d} + (1 - \eta) \rho \right) \right) \\
&= \eta \operatorname{Tr} \left( E_k \frac{I}{d} \right) + (1 - \eta) \operatorname{Tr}(E_k \rho) \\
&= \frac{\eta}{d} \operatorname{Tr}(E_k) + (1 - \eta) \zeta_k.
\end{align*}
\end{proof}

Lemma~\ref{lemma:measurement} establishes how depolarizing noise in the quantum system $\mathcal{H}$ affects the resulting classical output distribution over $\mathbb{Z}$. Specifically, it shows that the measurement distribution under depolarization becomes a convex combination of the original (noiseless) distribution and that of a maximally mixed state, with the noise strength $\eta$ controlling the mixing ratio.

Based on Lemma~\ref{lemma:measurement}, we can decompose the output distributions $\mu Q^{(\eta)}$ and $\nu Q^{(\eta)}$ accordingly. By definition, the quantum mechanism $Q^{(\eta)}$ can be expressed as a convex combination of two mechanisms: $Q^{(0)}$ (applies no noise) 
and $Q^{(1)}$ (applies full depolarizing noise). 
The mechanism $Q^{(1)}$ is constant, as it always outputs the measurement distribution of a maximally mixed state. That is, for all $y \in \mathbb{Y}$, we have
$
Q^{(1)}(y)(k) = \frac{\operatorname{Tr}(E_k)}{d},
$
where $E_k$ is the $k$-th POVM element and $d = \dim(\mathcal{H})$. We denote this constant output distribution as $\zeta_{\text{mix}}$. On the other hand, $Q^{(0)}(y)$ corresponds to the noiseless distribution $\zeta$, and $Q^{(\eta)}(y)$ corresponds to the distribution $\zeta'$ defined in Lemma~\ref{lemma:measurement}. Using the decomposition given by the lemma, we have
$$
Q^{(\eta)}(y) = \eta Q^{(1)}(y) + (1 - \eta) Q^{(0)}(y), \quad \forall y \in \mathbb{Y}
$$

Using the linearity of $Q^{(\eta)}$ and the representations $\mu = \sum_{y \in \operatorname{supp}(\mu)} \mu(y)\tau_y$ and $\nu = \sum_{y \in \operatorname{supp}(\nu)} \nu(y)\tau_y$, we obtain
$
\mu Q^{(\eta)} = \eta \zeta_{\text{mix}} + (1 - \eta) \mu Q^{(0)},
$ and $
\nu Q^{(\eta)} = \eta \zeta_{\text{mix}} + (1 - \eta) \nu Q^{(0)}
$.

By applying the \textit{Advanced Joint Convexity} theory (Lemma~\ref{lemma:advanced joint convexity}) on $\mu Q^{(\eta)} $ and $\nu Q^{(\eta)}$ with $\varepsilon' = \log\bigl(1 + (1-\eta) (e^{\varepsilon} - 1)\bigr)$ and $\beta = e^{\varepsilon' - \varepsilon}$, we have:
\begin{align}
        \text{D}_{e^{\varepsilon'} }\bigl(\mu Q^{(\eta)}||\nu Q^{(\eta)}\bigl) 
        \leq (1- \eta)\bigg( (1-\beta) \text{D}_{e^{\varepsilon}} (\mu Q^{(0)}||\zeta_{\text{mix}}) + \beta \text{D}_{e^{\varepsilon}}(\mu Q^{(0)}||\nu Q^{(0)}) \bigg)
        \label{eq:advanced disjoint}
    \end{align}

\begin{lemma}
    Given the measurement distribution of a maximally mixed state $\zeta_{\text{mix}}$ and an arbitrary distribution $z \in \mathcal{P}(\mathbb{Z})$, we have:
    $$\text{D}_{\alpha} (z||\zeta_{\text{mix}}) \leq 1 - \alpha \min_k \bigl(\frac{\operatorname{Tr}(E_k)}{d}\bigr)$$
    \label{lemma:trivial}
\end{lemma}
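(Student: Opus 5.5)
The plan is to compute the hockey-stick divergence directly from its discrete definition over the finite outcome space $\mathbb{Z}=\{0,\dots,K-1\}$, using only that $\zeta_{\text{mix}}(k)=\operatorname{Tr}(E_k)/d$ is bounded below by $\varphi:=\min_k \operatorname{Tr}(E_k)/d$, and that $\zeta_{\text{mix}}$ is a probability distribution because $\sum_k E_k = I$ gives $\sum_k \operatorname{Tr}(E_k)/d = 1$. We take $\alpha \ge 1$, as is the case everywhere this lemma is used, namely with $\alpha=e^{\varepsilon}$ in \eqref{eq:advanced disjoint}.

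The first step is to write
$$\text{D}_{\alpha}(z\|\zeta_{\text{mix}})=\sum_{k\in S}\bigl(z(k)-\alpha\,\zeta_{\text{mix}}(k)\bigr),$$
where $S=\{k : z(k)>\alpha\zeta_{\text{mix}}(k)\}$. If $S$ is empty, the divergence is $0$. Then we must only check that the claimed right-hand side is nonnegative. This holds whenever $\alpha\varphi\le 1$, which is automatic in the regime of interest. In general, $\varphi\le 1/K$, because the minimum is at most the average of the $K$ values $\operatorname{Tr}(E_k)/d$, which sum to $1$.

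The second step handles nonempty $S$. We use $\sum_{k\in S} z(k)\le 1$ and
$$\sum_{k\in S}\alpha\,\zeta_{\text{mix}}(k)\ \ge\ \alpha\,|S|\,\varphi\ \ge\ \alpha\varphi.$$
Together these give $\text{D}_\alpha(z\|\zeta_{\text{mix}})\le 1-\alpha\varphi$.

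The only delicate point is the empty-$S$ case when $\alpha\varphi>1$, where the bound as stated could become negative. I would resolve this by stating the bound as $[1-\alpha\varphi]_+$. Alternatively, I would note that in the application $\alpha=e^{\varepsilon}$ with the small $\varepsilon$ considered, so $e^{\varepsilon}\varphi\le e^{\varepsilon}/K\le 1$. Apart from this caveat, the argument is a two-line estimate.
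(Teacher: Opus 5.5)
Your argument is correct and is essentially the paper's proof. The paper also evaluates the discrete hockey-stick divergence directly: it replaces each $\zeta_{\text{mix}}(k)$ by its lower bound $\varphi$ inside the positive part and then uses $\sum_k z(k)=1$, which is the same estimate you carry out on the positivity set $S$.

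The edge case you flag is real, and the paper's last step $\sum_k [z(k)-\alpha\varphi]_+\le 1-\alpha\varphi$ silently skips it. When $\alpha\varphi>1$ the left side is $0$ and the right side is negative. So stating the bound as $[1-\alpha\varphi]_+$, or assuming $\alpha\varphi\le 1$, is the right repair. Note, however, that your appeal to the application needs $\varepsilon\le\ln K$, which fails for example when $K=2$ and $\varepsilon=1$ with equal-trace POVMs.
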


\begin{proof}
Recall the definition of the hockey-stick divergence:
$$
\text{D}_{\alpha}(z \| \zeta_{\text{mix}}) = \sum_k [z(k) - \alpha \zeta_{\text{mix}}(k)]_+,
$$
where $[x]_+ = \max\{x, 0\}$. Since $\zeta_{\text{mix}}(k) = \frac{\operatorname{Tr}(E_k)}{d} \geq \varphi = \min_k \left( \frac{\operatorname{Tr}(E_k)}{d} \right)$, we have
$$
[z(k) - \alpha \zeta_{\text{mix}}(k)]_+ \leq [z(k) - \alpha \varphi]_+.
$$
Summing over $k$ yields
$$
\text{D}_{\alpha}(z \| \zeta_{\text{mix}}) \leq \sum_k [z(k) - \alpha \varphi]_+.
$$
Since $\sum_k z(k) = 1$, it follows that
$$
\sum_k [z(k) - \alpha \varphi]_+ \leq 1 - \alpha \varphi.
$$
Therefore,
$$
\text{D}_{\alpha}(z \| \zeta_{\text{mix}}) \leq 1 - \alpha \min_k \left( \frac{\operatorname{Tr}(E_k)}{d} \right).
$$
\end{proof}

Based on Lemma~\ref{lemma:trivial}, we can derive an upper bound on $\text{D}_{e^{\varepsilon}}(\mu Q^{(0)} \| \zeta_{\text{mix}})$ in terms of the trace values of the POVM elements. Additionally, from the data-processing inequality for the hockey-stick divergence, we have
$
\text{D}_{e^{\varepsilon}}(\mu Q^{(0)} \| \nu Q^{(0)}) \leq \text{D}_{e^{\varepsilon}}(\mu \| \nu) \leq \delta
$.
Combining these results, we obtain an improved bound for Equation~\ref{eq:advanced disjoint}:
\begin{align*}
    \text{D}_{e^{\varepsilon'}}\bigl(\mu Q^{(\eta)} \| \nu Q^{(\eta)}\bigr) \leq (1 - \eta) \left(1 - e^{\varepsilon' - \varepsilon}(1 - \delta) - (e^{\varepsilon} - e^{\varepsilon'})\varphi\right)
\end{align*}

, where $\varphi = \min_k \left( \frac{\operatorname{Tr}(E_k)}{d} \right)$. This result is formalized in Theorem~\ref{theorem:analysis 2}, which characterizes how depolarizing noise amplifies the privacy guarantees of the composed mechanism $Q^{(\eta)} \circ A$. Specifically, the mechanism satisfies $(\varepsilon', \delta')$-DP, where
$
\varepsilon' = \log\left(1 + (1 - \eta)(e^{\varepsilon} - 1)\right)
$
and
$
\delta' = (1 - \eta) \left[1 - e^{\varepsilon' - \varepsilon}(1 - \delta) - (e^{\varepsilon} - e^{\varepsilon'})\varphi\right].
$

Theorem~\ref{theorem:analysis 2} reveals that the amplified failure probability $\delta'$ depends on the choice of POVMs. In particular, $\delta'$ becomes tighter as $\varphi = \min_k \left( \frac{\operatorname{Tr}(E_k)}{d} \right)$ increases. This insight leads to Corollary~\ref{col:measurement}, 
highlighting that $\delta'$ is minimized when all POVM elements $E_k$ have equal trace (i.e., $\operatorname{Tr}(E_k) = \frac{1}{K}$).

Contrarily, $\varepsilon' \leq \varepsilon$ for all $\eta \in [0, 1]$, the privacy loss in terms of $\varepsilon$ is always reduced. However, the bound on $\delta$ is only improved (i.e., $\delta' \leq \delta$) when the noise level $\eta$ exceeds the threshold given in Corollary~\ref{col:loss amplifying}. This condition highlights that a sufficient level of quantum noise is required to achieve strict amplification of the privacy guarantee in both parameters.

\begin{theorem4x}
    [Amplification on Privacy Loss] Let $A = f_{\text{par}} \circ f_{\text{cdp}}$ be $(\varepsilon, \delta)$-DP, and $Q^{(\eta)} = f_{\text{mea}} \circ f^{(\eta)}_{\text{dep}}\circ f_{\text{enc}}$ be a quantum mechanism in a $d$-dimensional Hilbert space where $0\leq \eta \leq 1$ is the depolarizing noise factor. Then, the composition $Q^{(\eta)} \circ A$ is $(\varepsilon', \delta')$-DP where $\varepsilon' = \log\bigl(1 + (1-\eta) (e^{\varepsilon} - 1)\bigr)$ and $\delta' = (1-\eta) \bigl(1 - e^{\varepsilon' - \varepsilon}({1 - \delta}) -(e^\varepsilon - e^{\varepsilon'})\varphi\bigl)$ with $\varphi = \min_k \left( \frac{\operatorname{Tr}(E_k)}{d} \right)$.
\end{theorem4x}

\begin{proof}
    Let $x, x' \in \mathbb{X}$ be neighboring inputs, i.e., $x \simeq x'$. Let $\mu = A(x)$ and $\nu = A(x')$ denote the output distributions of $A$. From the definition, we have $Q^{(0)}$ and $Q^{(1)}$ which are the mechanisms without noise and with full noise. We can see that $Q^{(1)}$ is a constant mechanism because the output of $Q^{(1)}$ is always the measurement of a maximally mixed state, i.e., $Q^{(1)}(y)(k) = \frac{\operatorname{Tr}(E_k)}{d}$ with $\forall y \in \mathbb{Y}$. Based on Lemma~\ref{lemma:measurement}, we have:
    $$Q^{(\eta)}(y) = \eta Q^{(1)}(y) + (1-\eta) Q^{(0)}(y), \forall y\in \mathbb{Y}$$.
    Thus, we can write $Q^{(\eta)}$ as a mixture of $Q^{(0)}$ and $Q^{(1)}$ where $Q^{(\eta)} = \eta Q^{(1)} + (1-\eta) Q^{(0)}$.

    Let the constant output of $Q^{(1)}$ be $\zeta_{\text{mix}}$. Based on the advanced joint convexity theorem in ~\citep{Balle2018Subsampling}, given $\varepsilon' = \log\bigl(1 + (1-\eta) (e^{\varepsilon} - 1)\bigr)$, we have:
    
    \begin{align*}
        &\text{D}_{e^{\varepsilon'} }\bigl(\mu Q^{(\eta)}||\nu Q^{(\eta)}\bigl)\\ &= \text{D}_{e^{\varepsilon'} }\bigl(\eta\mu Q^{(1)} + (1-\eta)\mu Q^{(0)}||\eta\nu Q^{(1)} + (1-\eta)\nu Q^{(0)}\bigl) \\
        &=\text{D}_{e^{\varepsilon'} }\bigl(\eta\zeta_{\text{mix}} + (1-\eta)\mu Q^{(0)}||\eta\zeta_{\text{mix}} + (1-\eta)\nu Q^{(0)}\bigl)\\
        &= (1-\eta) \text{D}_{e^{\varepsilon} }\bigl(\mu Q^{(0)}||(1-\beta)\zeta_{\text{mix}} + \beta \nu Q^{(0)}\bigl) \\ &\text{(Based on the advanced joint convexity theorem, $\beta=e^{\varepsilon' - \varepsilon}$)}\\
        &\leq (1- \eta)\bigg( (1-\beta) \text{D}_{e^{\varepsilon}} (\mu Q^{(0)}||\zeta_{\text{mix}}) + \beta \text{D}_{e^{\varepsilon}}(\mu Q^{(0)}||\nu Q^{(0)}) \bigg)
    \end{align*}
    
    We have $\text{D}_{e^{\varepsilon}} (\mu Q^{(0)}||\zeta_{\text{mix}}) \leq 1 - e^\varepsilon \min_k \left( \frac{\operatorname{Tr}(E_k)}{d} \right) = 1-e^\varepsilon \varphi$ and $\text{D}_{e^{\varepsilon}}(\mu Q^{(0)}||\nu Q^{(0)}) \leq \text{D}_{e^{\varepsilon}}(\mu ||\nu ) \leq \delta$. Thus, we can conclude:
    {\small
    $$\text{D}_{e^{\varepsilon'} }\bigl(\mu Q^{(\eta)}||\nu Q^{(\eta)}\bigl) \leq (1-\eta) \bigl(1 - e^{\varepsilon' - \varepsilon}({1 - \delta}) -(e^\varepsilon - e^{\varepsilon'})\varphi\bigl)$$}
    
\end{proof}

\begin{corollary45}
Let $\{E_k\}_{k=1}^K$ be the POVM used in $f_{\text{mea}}$. Then, the amplified failure probability $\delta'$ in Theorem~\ref{theorem:analysis 2} is minimized when all POVM elements have equal trace (i.e., $\operatorname{Tr}[E_k] = \frac{d}{K}$ for all $k \in \{1, \dots, K\}$).
\end{corollary45}

\begin{proof}
The goal is to minimize the amplified failure probability $\delta'$ with respect to the choice of the POVM $\{E_k\}_{k=1}^K$. From Theorem~\ref{theorem:analysis 2}, the expression for $\delta'$ is:
$$
\delta' = (1-\eta) \bigl(1 - e^{\varepsilon' - \varepsilon}({1 - \delta}) -(e^\varepsilon - e^{\varepsilon'})\varphi\bigl)
$$
All terms in this expression are independent of the specific measurement choice except for $\varphi = \min_k \left( \frac{\operatorname{Tr}(E_k)}{d} \right)$.

To analyze how $\delta'$ depends on $\varphi$, we examine the sign of $-(1-\eta)(e^\varepsilon - e^{\varepsilon'})$. Since $\eta \in [0,1]$ and $\varepsilon' \le \varepsilon$, this coefficient is non-positive. Thus, $\delta'$ is a monotonically decreasing function of $\varphi$.

Therefore, to minimize $\delta'$, we must maximize $\varphi$. This is equivalent to maximizing $\min_k (\operatorname{Tr}(E_k))$ subject to the POVM completeness constraint $\sum_{k=1}^K E_k = I$. Taking the trace of the completeness relation gives:
$$
\sum_{k=1}^K \operatorname{Tr}(E_k) = \operatorname{Tr}(I) = d
$$
The function $\min_k (\operatorname{Tr}[E_k])$ is maximized when all $\operatorname{Tr}[E_k]$ are equal. Thus, the optimal choice is to have $\operatorname{Tr}[E_k] = d/K$ for all $k$.
\end{proof}

\begin{corollary46}
Given an optimal measurement such that $\operatorname{Tr}[E_k] = \frac{d}{K}\forall k$, the composed mechanism $Q^{(\eta)} \circ A$ strictly improves the privacy guarantee (i.e., $\varepsilon' \leq \varepsilon$ and $\delta' \leq \delta$) if
$$
\eta \geq 1 - \frac{\delta}{(1-\delta)(1-e^{-\varepsilon}) - (e^\varepsilon-1)/K}
$$
\end{corollary46}

\begin{proof}
We find the condition on $\eta$ that ensures $\delta' \le \delta$ under the assumption of an optimal measurement, where, from Corollary~\ref{col:measurement}, we have $\varphi = 1/K$. The guarantee $\varepsilon' \le \varepsilon$ holds for all $\eta \in [0, 1]$.

We start with the inequality $\delta' \le \delta$ using the expression from Theorem~\ref{theorem:analysis 2}:
$$(1-\eta) \bigl(1 - e^{\varepsilon' - \varepsilon}({1 - \delta}) -(e^\varepsilon - e^{\varepsilon'})\varphi\bigr) \le \delta$$
Substitute the identities $e^{\varepsilon'-\varepsilon} = 1 - \eta + \eta e^{-\varepsilon}$, $e^\varepsilon - e^{\varepsilon'} = \eta(e^\varepsilon - 1)$, and $\varphi=1/K$, we have:
$$(1-\eta) \left(1 - (1 - \eta + \eta e^{-\varepsilon})(1 - \delta) - \frac{\eta(e^\varepsilon - 1)}{K}\right) \le \delta$$

The expression inside the main brackets simplifies to $\delta + \eta(1-\delta)(1-e^{-\varepsilon}) - \frac{\eta(e^\varepsilon-1)}{K}$. Substituting this back, expanding, and simplifying for $\eta > 0$, we have:
$$(1-\delta)(1-e^{-\varepsilon}) - \frac{e^\varepsilon-1}{K} - \delta \le \eta\left((1-\delta)(1-e^{-\varepsilon}) - \frac{e^\varepsilon-1}{K}\right)$$
Solving for $\eta$ gives the threshold:
\begin{align*}
    \eta &\ge \frac{(1-\delta)(1-e^{-\varepsilon}) - (e^\varepsilon-1)/K - \delta}{(1-\delta)(1-e^{-\varepsilon}) - (e^\varepsilon-1)/K} \\ 
    &= 1 - \frac{\delta}{(1-\delta)(1-e^{-\varepsilon}) - (e^\varepsilon-1)/K}
\end{align*}
\end{proof}


\subsection{Proof of Theorem~4.10 and Supporting Lemmas}
\label{app:proof:utility}
Here, we establish a rigorous framework to study the utility loss, defined as the absolute error between the noisy and 
noise-free versions of our mechanism. The final output of the mechanism is stochastic due to the sampling-based measurement process. Thus, we analyze the difference between the expected values of their output. The expected value represents the average behavior of a mechanism and provides a deterministic quantity that we can use to measure utility loss.

Formally, we define the expectation measurement function $f_{\text{exp}}: \mathcal{H} \rightarrow \mathbb{R}$ as:

$$
f_{\text{exp}}(\rho) = \sum_k k \operatorname{Tr}[E_k \rho] = \operatorname{Tr}\left[\left(\sum_k k E_k\right) \rho\right] = \operatorname{Tr}[E_{\text{exp}} \rho]
$$

where $E_{\text{exp}} = \sum_k k E_k$ is the expectation value observable.

Using this function, we define our deterministic expectation mechanisms. The \textbf{full mechanism}, including classical and quantum noise, is
$
\mathcal{M}_{\text{full}}(x) = (f_{\text{exp}} \circ f^{(\eta)}_{\text{dep}} \circ f_{\text{enc}} \circ f_{\text{par}} \circ f_{\text{cdp}})(x)
$. On the other hand, the 
\textbf{noise-free mechanism (clean)} is
$
\mathcal{M}_{\text{clean}}(x) = (f_{\text{exp}} \circ f_{\text{enc}} \circ f_{\text{par}})(x)
$.
The total utility loss is the worst-case absolute error between their expected outputs:
$$
\text{Error} = \sup_{x \in \mathbb{X}} \left| \mathcal{M}_{\text{full}}(x) - \mathcal{M}_{\text{clean}}(x) \right|
$$
To analyze this error, we introduce an \textbf{intermediate mechanism (half)} that includes only quantum noise as
$
\mathcal{M}_{\text{half}}(x) = (f_{\text{exp}} \circ f^{(\eta)}_{\text{dep}} \circ f_{\text{enc}} \circ f_{\text{par}})(x)
$.

\begin{lemma}
    The intermediate mechanism $\mathcal{M}_{\text{half}}$ is $L_\infty$-Lipschitz with respect to the input perturbation $\kappa$, satisfying $|\mathcal{M}_{\text{half}}(x+\kappa) - \mathcal{M}_{\text{half}}(x)| \leq L_{\infty} \|\kappa\|_{\infty}$. $L_\infty$ is given by:
$$L_{\infty} = 2(1-\eta) \|E_{\text{exp}}\|_{op} \|W\|_{\infty} \left(\sum_j \|H_j\|_{op}\right)$$
\label{lemma:half_full_bound}
\end{lemma}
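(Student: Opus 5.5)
The plan is to prove the Lipschitz bound by writing $\mathcal{M}_{\text{half}}$ as a composition and controlling each factor separately. Set $y = Wx + b$ and $y' = W(x+\kappa) + b$, so that $y' - y = W\kappa$. Using linearity of the depolarizing channel and of the trace, we get
\begin{align*}
\mathcal{M}_{\text{half}}(x+\kappa) - \mathcal{M}_{\text{half}}(x) = (1-\eta)\operatorname{Tr}\bigl[E_{\text{exp}}(\rho_{y'} - \rho_y)\bigr].
\end{align*}
The maximally mixed contributions $\eta \operatorname{Tr}[E_{\text{exp}}]/d$ cancel exactly. Hölder's inequality for Schatten norms then gives $|\cdot| \le (1-\eta)\|E_{\text{exp}}\|_{op}\,\|\rho_{y'}-\rho_y\|_1$. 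This produces the factor $(1-\eta)\|E_{\text{exp}}\|_{op}$ in $L_\infty$.

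Next we bound the trace distance of the two pure states by the distance between the state vectors. For pure states,
\begin{align*}
\bigl\| |\psi\rangle\langle\psi| - |\phi\rangle\langle\phi| \bigr\|_1 \le 2\,\| |\psi\rangle - |\phi\rangle \|_2.
\end{align*}
This follows by writing the difference as $|\psi\rangle(\langle\psi|-\langle\phi|) + (|\psi\rangle - |\phi\rangle)\langle\phi|$ and applying the triangle inequality. Each term is rank one, so its trace norm is the product of the vector norms. This step is the source of the factor $2$.

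We then bound $\||\psi_{y'}\rangle - |\psi_y\rangle\|_2$ with a telescoping (hybrid) argument over the product $U(y) = \prod_{j=1}^n e^{-iy_jH_j}$. Replace one factor at a time. By unitary invariance, the $j$-th step contributes at most $\|e^{-iy'_jH_j} - e^{-iy_jH_j}\|_{op}$. Since $e^{-iy_j H_j}$ commutes with $H_j$, this equals $\|e^{-i(y'_j-y_j)H_j} - I\|_{op} \le |y'_j - y_j|\,\|H_j\|_{op}$, using $|e^{i\theta}-1| \le |\theta|$ on the spectrum. The sum is therefore $\sum_j |y'_j-y_j|\,\|H_j\|_{op} \le \|y'-y\|_\infty \sum_j \|H_j\|_{op}$. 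Finally, $\|y'-y\|_\infty = \|W\kappa\|_\infty \le \|W\|_\infty\|\kappa\|_\infty$, where $\|W\|_\infty$ is the induced max-row-sum norm. Chaining the three bounds yields exactly $L_\infty\|\kappa\|_\infty$.

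The main obstacle is the telescoping step. It requires care in ordering the non-commuting factors: intermediate products mix primed and unprimed exponentials. Unitary invariance of the norm resolves this because each hybrid difference is of the form $V_1(e^{-iy'_jH_j}-e^{-iy_jH_j})V_2$ with $V_1, V_2$ unitary. The remaining steps are standard norm inequalities.
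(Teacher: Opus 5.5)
Your proof is correct and follows essentially the same route as the paper. The paper also chains the Lipschitz constants of $f_{\text{par}}$ ($\|W\|_\infty$), $f_{\text{enc}}$ ($2\sum_j \|H_j\|_{op}$), $f^{(\eta)}_{\text{dep}}$ ($1-\eta$, since the $\eta I/d$ term cancels) and $f_{\text{exp}}$ ($\|E_{\text{exp}}\|_{op}$, via H\"older). Where the paper bounds the encoder step by $2\|U_{\text{enc}}(y+\kappa)-U_{\text{enc}}(y)\|$ and cites a known result for the per-factor bound, you prove both explicitly: the rank-one splitting, the unitarily invariant telescoping, and $|e^{i\theta}-1|\le|\theta|$.
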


\begin{proof}
    First, we prove that a Lipschitz bound for a composition of functions can be obtained as the product of their individual Lipschitz constants.
Specifically, suppose that $f$ can be written as
$$
f = f_1 \circ f_2 \circ \cdots \circ f_h,
$$
where $\circ$ denotes function composition, and each $f_i$ admits a Lipschitz constant $L_i$ for $i = 1, \ldots, h$.
Then, for any inputs $x$ and a small deviation $\kappa$, it holds that
\begin{equation*}
\begin{aligned}
&\| f(x+\kappa) - f(x) \| \\
&\leq L_1 \left\| f_2 \circ \cdots \circ f_h(x+\kappa) - f_2 \circ \cdots \circ f_h(x) \right\| \\
&\leq L_1 L_2 \left\| f_3 \circ \cdots \circ f_h(x+\kappa) - f_3 \circ \cdots \circ f_h(x) \right\| \\
&\;\;\vdots \\
&\leq \left( \prod_{i=1}^h L_i \right) \| \kappa \|.
\end{aligned}
\end{equation*}

Since the mechanism $\mathcal{M}_{\text{half}}$ is expressed as a composition of $f_{\text{exp}}$, $f^{\eta}_{\text{dep}}$, $f_{\text{enc}}$, and $f_{\text{par}}$, our goal is to determine the Lipschitz bound for each individual function.

\noindent \textbf{Lipschitz bound of $f_{\text{par}}$}:

The function $f_{\text{par}}: \mathbb{X} \to \mathbb{Y}$ is defined as
$$
f_{\text{par}}(x) = W x + b,
$$

`Since $b$ is a constant shift (which does not affect Lipschitz continuity), we have:
$$
\| f_{\text{par}}(x+\kappa) - f_{\text{par}}(x) \| = \| W \kappa \| \leq \|W\|_{\infty} \|\kappa\|_{\infty},
$$

Thus, $f_{\text{par}}$ is $\|W\|_{\infty}$-Lipschitz.

\noindent \textbf{Lipschitz bound of $f_{\text{enc}}$}:

The function $f_{\text{enc}}: \mathbb{Y} \to \mathcal{H}$ encodes a classical vector $y$ into a density matrix $f_{\text{enc}}(y) = U_{\text{enc}}(y)|0\rangle \langle0| U_{\text{enc}}(y)^\dagger$ where $U_{\text{enc}}(y) = \prod_{j=1}^N e^{-i(\mathbf{w}_j \cdot y_j + b_j) H_j}$. We need to bound the trace norm distance $f_{\text{enc}}(y+\kappa) - f_{\text{enc}}(y)$ in terms of $\|\kappa\|_{\infty}$. 

\begin{align*} 
&\|f_{\text{enc}}(y+\kappa)- f_{\text{enc}}(y)\| \\
&= \|U_{\text{enc}}(y+\kappa)\rho_0 U_{\text{enc}}(y+\kappa)^\dagger - U_{\text{enc}}(y)\rho_0 U_{\text{enc}}(y)^\dagger\| \\ &\leq 2 \|U_{\text{enc}}(y+\kappa) - U_{\text{enc}}(y)\| \end{align*} 
where $\rho_0 = |0\rangle\langle 0|$ and we used the triangle inequality and properties of the trace norm. The difference between the unitary operators is bounded by:
\begin{align*} 
&\|U_{\text{enc}}(y+\kappa) - U_{\text{enc}}(y)\| \\&\leq \sum_{j=1}^N \|e^{-i(y_j+\kappa_j)H_j} - e^{-iy_j H_j}\| \\&\leq \sum_{j=1}^N |\kappa_j| \|H_j\| \leq \sum_{j=1}^N \|H_j\| \|\kappa\|_\infty \\ &\text{(Based on~\citep{Berberich2024})}
\end{align*}

Thus, $f_{\text{enc}}$ is $2\left( \sum_{j=1}^n \|H_j\| \right)$-Lipschitz.

\noindent \textbf{Lipschitz bound of $f^{(\eta)}_{\text{dep}}$}:

The function $f^{(\eta)}_{\text{dep}}: \mathcal{H} \rightarrow \mathcal{H}$ models the depolarizing noise:
$$
f^{(\eta)}_{\text{dep}}(\rho) = (1-\eta)\rho + \eta \frac{I}{d},
$$
where $I$ is the identity matrix and $d$ is the dimension of the Hilbert space.

Since the term $\eta \frac{I}{d}$ is constant, the difference between two outputs is:
$$
\| f^{(\eta)}_{\text{dep}}(\rho) - f^{(\eta)}_{\text{dep}}(\sigma) \| = (1-\eta) \| \rho - \sigma \|.
$$
Thus, $f^{(\eta)}_{\text{dep}}$ is $(1-\eta)$-Lipschitz.

\noindent \textbf{Lipschitz bound of $f_{\text{exp}}$}:

The measurement function $f_{\text{mea}} : \mathcal{H} \to \mathbb{R}^K$, defined by a set of POVMs $\{E_k\}$, maps a quantum state $\rho$ to a probability vector:
$$
f_{\text{exp}}(\rho) = \sum_k k\operatorname{Tr}(E_k \rho).
$$

Given $E_{\text{exp}} = \sum_k kE_k$, by trace duality and Hölder’s inequality, we have:
{\small
$$
\|f_{\text{exp}}(\rho) - f_{\text{exp}}(\rho')\| = \left| \operatorname{Tr}(E_{\text{exp}}(\rho - \rho')) \right| \leq \|E_{\text{exp}}\|_{op} \|\rho - \rho'\|.
$$
}
Therefore, $f_{\text{exp}}$ is $\|E_{\text{exp}}\|_{op} $-Lipschitz.

As a result, the mechanism $\mathcal{M}_{\text{half}}$ is $L_\infty$-Lipschitz where $L_{\infty} = 2(1-\eta) \|E_{\text{exp}}\|_{op} \|W\|_{\infty} \left(\sum_j \|H_j\|_{op}\right)
$.

\end{proof}

Lemma~\ref{lemma:half_full_bound} establishes a bound on the sensitivity of $\mathcal{M}_{\text{half}}$ with respect to perturbations in its classical input. We use $\|\cdot\|_p$ to denote the $p$-norm, and $\|\cdot\|_{op}$ to denote the operator norm. The proof leverages the chain rule for Lipschitz continuity, where the overall Lipschitz constant $L_{\infty}$ is given by the product of the individual constants associated with each component function in the composition, namely, $f_{\text{exp}}$, $f_{\text{dep}}$, $f_{\text{enc}}$, and $f_{\text{par}}$. In addition, we observe that if $\kappa \sim \mathcal{N}(0, \sigma^2I)$, then $\mathcal{M}_{\text{half}}(x + \kappa)$ is equivalent in distribution to $\mathcal{M}_{\text{full}}(x)$. Thus, this lemma results in a bound on the difference between these two mechanisms.

\begin{lemma}

The absolute difference between the expected outputs of the intermediate and noise-free mechanisms is uniformly bounded by:
$$ \left|\mathcal{M}_{\text{half}}(x) - \mathcal{M}_{\text{clean}}(x)\right| \leq 2\eta \|E_{\text{exp}}\|_{op}$$
\label{lemma:half_clean_bound}
\end{lemma}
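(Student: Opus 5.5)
Fix $x$ and write $\rho = (f_{\text{enc}} \circ f_{\text{par}})(x)$, a pure density matrix. The half and clean mechanisms differ only in the depolarizing channel, so by linearity of the trace
$$
\mathcal{M}_{\text{half}}(x) - \mathcal{M}_{\text{clean}}(x) = \operatorname{Tr}\bigl[E_{\text{exp}}\bigl(f^{(\eta)}_{\text{dep}}(\rho) - \rho\bigr)\bigr] = \eta\,\operatorname{Tr}\Bigl[E_{\text{exp}}\Bigl(\frac{I}{d} - \rho\Bigr)\Bigr].
$$
This uses only the definition $f^{(\eta)}_{\text{dep}}(\rho) = (1-\eta)\rho + \eta \frac{I}{d}$, which gives $f^{(\eta)}_{\text{dep}}(\rho) - \rho = \eta\bigl(\frac{I}{d} - \rho\bigr)$.

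Next, bound the remaining trace with Hölder's inequality (trace duality), exactly as in the Lipschitz bound for $f_{\text{exp}}$ in Lemma~\ref{lemma:half_full_bound}:
$$
\Bigl|\operatorname{Tr}\bigl[E_{\text{exp}}(I/d - \rho)\bigr]\Bigr| \le \|E_{\text{exp}}\|_{op}\,\|I/d - \rho\|_1 .
$$
It then remains to show $\|I/d - \rho\|_1 \le 2$. By the triangle inequality, $\|I/d - \rho\|_1 \le \|I/d\|_1 + \|\rho\|_1 = 1 + 1 = 2$, since both operators are density matrices (positive semidefinite with unit trace). Multiplying by $\eta$ gives $|\mathcal{M}_{\text{half}}(x) - \mathcal{M}_{\text{clean}}(x)| \le 2\eta\|E_{\text{exp}}\|_{op}$ uniformly in $x$. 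The bound does not depend on $x$, $W$, $b$, or the encoding Hamiltonians, so taking the supremum over $x$ keeps the same constant.

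There is no real obstacle. The only point to check is that the norm paired with $\|\cdot\|_{op}$ is the trace norm, so that Hölder's inequality applies with the right duality. This is the same convention as in Lemma~\ref{lemma:half_full_bound}. (A sharper constant is available, since $\|I/d - \rho\|_1 = 2(1 - 1/d)$ for pure $\rho$, but the stated bound only needs the crude estimate.)
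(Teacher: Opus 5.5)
Your proof is correct and takes essentially the same route as the paper. Both write the difference as $\operatorname{Tr}[E_{\text{exp}}(f^{(\eta)}_{\text{dep}}(\rho)-\rho)]$, apply the operator-norm/trace-norm H\"older bound (the Lipschitz property of $f_{\text{exp}}$), use $f^{(\eta)}_{\text{dep}}(\rho)-\rho=\eta(I/d-\rho)$, and bound $\|I/d-\rho\|_1\le 2$. The only differences are that you factor out $\eta$ before applying H\"older rather than after, and your side remark that the constant sharpens to $2(1-1/d)$ for pure $\rho$, which is also correct.
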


\begin{proof}
Let $\rho(x) = (f_{\text{enc}} \circ f_{\text{par}})(x)$ be the clean quantum state.

\begin{align*}
    &\left|\mathcal{M}_{\text{half}}(x) - \mathcal{M}_{\text{clean}}(x)\right|\\ &= \left|\operatorname{Tr}[f_{\text{exp}} \cdot f^{\eta}_{\text{dep}}(\rho(x))) - \operatorname{Tr}(f_{\text{exp}} \cdot \rho(x))\right| \\
    &\leq \|E_{\text{exp}}\|_{op} \cdot \|f^{\eta}_{\text{dep}}(\rho(x)) - \rho(x)\| \quad \text{(Lipschitz property of } f_{\text{exp}}\text{)}
\end{align*}

The trace distance term is bounded as:
$$
\|f^{\eta}_{\text{dep}}(\rho) - \rho\| = \|((1-\eta)\rho + \eta \frac{I}{d}) - \rho\| = \eta \left\|\frac{I}{d} - \rho\right\|
$$
Since $\rho$ and $I/d$ are both valid density matrices, the trace distance between them is at most 2. Thus, $\|\frac{I}{d} - \rho\| \leq 2$. Substituting this back gives the final bound of $2\eta \|E_{\text{exp}}\|_{op}$.
\end{proof}

Lemma~\ref{lemma:half_clean_bound} directly bounds the difference between $\mathcal{M}_{\text{half}}$ and $\mathcal{M}_{\text{clean}}$. The proof leverages the Lipschitz property of the function $f_{\text{exp}}$ and the fundamental property that the trace norm difference between any two density matrices is at most $2$. Along with the result in Lemma~\ref{lemma:half_full_bound}, we can establish the bound on the absolute error.

\begin{theorem410} [Utility Bound]
Let the classical noise be $\kappa \sim \mathcal{N}(0, \sigma^2 I)$ acting on an input space $\mathbb{X}$ of dimension $d_X = \dim(\mathbb{X})$. For any desired failure probability $p > 0$, the utility loss is bounded probabilistically as:
$$
\Pr \left( \text{Error} \leq L_{\infty} \cdot \sigma\sqrt{2 \ln{\frac{2d_X}{p}}} + 2\eta \|E_{\text{exp}}\|_{op} \right) \geq 1-p
$$
where $L_{\infty} = 2(1-\eta) \|E_{\text{exp}}\|_{op} \|W\|_{\infty} \left(\sum_j \|H_j\|_{op}\right)$.
\end{theorem410}

\begin{proof}
We use the triangle inequality for the absolute error for a given $x$ and classical noise $\kappa$:
\begin{align*}
    &|\mathcal{M}_{\text{full}}(x) - \mathcal{M}_{\text{clean}}(x)|\\ &= |\mathcal{M}_{\text{half}}(x+\kappa) - \mathcal{M}_{\text{clean}}(x)| \\
    &\leq |\mathcal{M}_{\text{half}}(x+\kappa) - \mathcal{M}_{\text{half}}(x)| + |\mathcal{M}_{\text{half}}(x) - \mathcal{M}_{\text{clean}}(x)|
\end{align*}
Applying our two lemmas, the first term is bounded by $L_\infty \cdot \|\kappa\|_\infty$ and the second term is bounded by $2\eta \|E_{\text{exp}}\|_{op}$.
$$
\text{Error} \leq L_\infty \cdot \|\kappa\|_\infty + 2\eta \|E_{\text{exp}}\|_{op}
$$
The stochastic error depends on the magnitude of $\|\kappa\|_{\infty} = \max_i |\kappa_i|$, where each component $\kappa_i$ of the noise vector is an independent draw from a Gaussian distribution, $\kappa_i \sim \mathcal{N}(0, \sigma^2)$.

To obtain a high-probability bound on the maximum of $d$ independent Gaussian variables, we can apply a standard union bound on their tails. For any desired failure probability $p > 0$, with probability at least $1-p$, the infinity norm of $\kappa$ is bounded by:
$$
\|\kappa\|_{\infty} \le \sigma\sqrt{2 \ln(2d_X/p)}
$$

By combining these bounds, we can state that for any $p > 0$, the total utility loss is bounded with probability at least $1-p$:
$$\text{Error} \leq L_\infty \cdot \sigma\sqrt{2 \ln{\frac{2d_X}{p}}} + 2\eta \|E_{\text{exp}}\|_{op}$$

\end{proof}

Theorem~\ref{theorem:utility} combines the previous results to provide a single utility guarantee. The proof 
exploits the triangle inequality to additively combine the bounds from Lemmas \ref{lemma:half_full_bound} and \ref{lemma:half_clean_bound}. As the classical noise is unbounded, the final guarantee is a high-probability statement 
showing the trade-off between utility loss and the classical ($\sigma$) and quantum ($\eta$) noise level.

\subsection{Third Analysis --- Extending to Other Quantum Noise Channels (extended)}
\label{sec:rebuttal:generalization}

In this section, beside delivering the proofs for Theorem~\ref{theorem:main:analysis_GAD} and ~\ref{theorem:main:GD_all_qubits_DP}, we provide more discussion on how the privacy amplification result of Theorem~\ref{theorem:analysis 1} can be extended to a broad class of quantum noise channels beyond depolarizing noise. First, we identify the essential mechanism responsible for privacy amplification. Then, we illustrate the generalization by analyzing two asymmetric and physically relevant noise processes: the Generalized Amplitude Damping (GAD) channel and the Generalized Dephasing (GD) channel.

\subsubsection{Key Insight Behind the Generalization}

Here, first we review the proof trajectory of Theorem~\ref{theorem:analysis 1} presented in Appendix~\ref{app:proof:dp1}. The analysis begins by decomposing the output distributions of the mechanism on neighboring inputs using Lemma~\ref{lemma:distribution gap}. It then reduces the divergence analysis to the worst-case pair of orthogonal inputs via Lemma~\ref{lemma:orthogonality}. Crucially, Lemma~\ref{lemma:QHSD_leq_HSD} establishes that the classical hockey-stick divergence of the measurement outcomes is upper-bounded by the quantum hockey-stick divergence of the evolved quantum states. 

We can see that in Theorem~\ref{theorem:analysis 1}, the privacy amplification is derived from Lemma~\ref{lemma:QHSD_depolarizing}, which establishes the contraction of the quantum hockey-stick divergence $D_{\alpha}^{(q)}$ under the depolarizing channel. While Theorem~\ref{theorem:analysis 1} utilizes the specific form of $D_{e^{\epsilon}}^{(q)}$, we discuss that even if we relax the bound to the standard trace distance $D_1^{(q)}$, the privacy guarantee still holds. Specifically, for any privacy parameter $\alpha \ge 1$ (where $\alpha = e^{\tilde{\varepsilon}}$ in our context), the quantum hockey-stick divergence is upper-bounded by the trace distance divergence:
$$
D_{\alpha}^{(q)}(\rho \| \sigma) = \mathrm{Tr}[(\rho - \alpha \sigma)_+] \le \mathrm{Tr}[(\rho - \sigma)_+] = D_{1}^{(q)}(\rho \| \sigma).
$$
This inequality holds because subtracting a larger multiple of $\sigma$ (since $\alpha \ge 1$) reduces the positive part of the operator difference.

Then, the insight is that to generalize Theorem~\ref{theorem:analysis 1} to an arbitrary noise channel $\mathcal{E}$, we need to identify its contraction coefficient under the trace distance (or $D_1$ divergence). If a channel $\mathcal{E}$ satisfies a contraction bound $\kappa(\mathcal{E})$ such that:
$$
\sup_{\rho \neq \sigma} \frac{D_1(\mathcal{E}(\rho) \| \mathcal{E}(\sigma))}{D_1(\rho \| \sigma)} \leq \kappa(\mathcal{E}),
$$
then the composed mechanism naturally satisfies a privacy amplification where the failure probability $\delta$ is scaled by $\kappa(\mathcal{E})$. In the following subsections, we apply this insight to two asymmetric noise channels.

\subsubsection{Generalized Amplitude Damping Channel (Proof of Theorem 4.7)}
\label{sec:rebuttal:generalization:GAD}

Generalized Amplitude Damping (GAD) channel is a noise process describing energy exchange between a qubit and its thermal environment. Unlike depolarizing, GAD is inherently asymmetric because it drives the qubit toward a temperature-dependent equilibrium state while simultaneously suppressing quantum coherence. The channel is parameterized by a damping strength $\eta \in [0,1]$ and an excitation probability $p \in [0,1]$, where $p=0$ corresponds to relaxation toward $\ket{0}$, $p=1$ toward $\ket{1}$, and intermediate values represent nonzero-temperature behavior. This asymmetry makes GAD a realistic noise model for superconducting and trapped-ion devices. To formalize this, we consider the $n$-qubit channel acting on a single designated qubit:
$$
    f_{\mathrm{GAD}}^{(p,\eta)} = I_{2^{n-1}} \otimes A_{\mathrm{GAD}}^{(p,\eta)}.
$$

Despite its non-unital nature, the GAD channel contracts distinguishability between quantum states. Differences in excitation probabilities shrink because all states relax toward the same thermal fixed point, while differences in coherence decay due to energy dissipation. In Lemma~\ref{lemma:GAD}, we construct the contraction coefficient of $f_{\mathrm{GAD}}^{(p,\eta)}$.

\begin{lemma}
    Let $A^{(p,\eta)}_{\text{GAD}}$ be the generalized amplitude damping (GAD) channel on a 
    single qubit, with damping parameter $\eta\in[0,1]$ and excitation 
    parameter $p\in[0,1]$.  Define the $n$-qubit channel
    $$
        f_{\mathrm{GAD}}^{(p,\eta)} := \mathrm{I}_{2^{n-1}} \otimes A^{(p,\eta)}_{\text{GAD}},
    $$
    Then the contraction coefficient of 
    $f_{\mathrm{GAD}}^{(p,\eta)}$ satisfies
    $$
        \kappa(f_{\mathrm{GAD}}^{(p,\eta)})
        := \sup_{\rho\neq\sigma}
           \frac{D_1\!\left(f_{\mathrm{GAD}}^{(p,\eta)}(\rho)\,\middle\|\,
                           f_{\mathrm{GAD}}^{(p,\eta)}(\sigma)\right)}
                {D_1(\rho\|\sigma)}
        \;\le\;
        2\sqrt{\eta} - \eta
    $$
    \label{lemma:GAD}
\end{lemma}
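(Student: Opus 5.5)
The plan is to use the Kraus (operator-sum) form of the single-qubit channel and split $A^{(p,\eta)}_{\text{GAD}}$ into a replacement part plus a residual part. This reduces the contraction of $f_{\mathrm{GAD}}^{(p,\eta)}$ on a traceless difference $X=\rho-\sigma$ to a triangle-inequality bound. Throughout I will use $D_1(\rho\|\sigma)=\tfrac12\|\rho-\sigma\|_1$, so it suffices to show
$$\|(\mathrm{I}_{2^{n-1}}\otimes A^{(p,\eta)}_{\text{GAD}})(X)\|_1\le(2\sqrt\eta-\eta)\|X\|_1$$
for every Hermitian traceless $X$. The supremum then follows by homogeneity.

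\textbf{Step 1 (Kraus form and convex split).} I will write the standard Kraus operators:
\begin{align*}
K_0=\sqrt{1-p}\begin{pmatrix}1&0\\0&\sqrt{1-\eta}\end{pmatrix},\quad K_1=\sqrt{1-p}\begin{pmatrix}0&\sqrt\eta\\0&0\end{pmatrix},\\
K_2=\sqrt{p}\begin{pmatrix}\sqrt{1-\eta}&0\\0&1\end{pmatrix},\quad K_3=\sqrt{p}\begin{pmatrix}0&0\\\sqrt\eta&0\end{pmatrix}.
\end{align*}
I will then use the elementary inequality $\sqrt{1-\eta}\ge 1-\sqrt\eta$ to peel off a component of weight $(1-\sqrt\eta)^2$ from the diagonal Kraus pair $K_0,K_2$. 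The aim is an identity of the form
$$A^{(p,\eta)}_{\text{GAD}}=(1-\sqrt\eta)^2\,\mathcal{R}+\bigl(2\sqrt\eta-\eta\bigr)\,\Psi,$$
where $\Psi$ is completely positive and trace preserving, and $\mathcal{R}$ annihilates the part of $X$ relevant to distinguishability. The target for $\mathcal{R}$ is a map that only retains the thermal fixed point weighted by the trace. I will check complete positivity of $\Psi$ by exhibiting its Kraus operators as the leftover pieces, for example $K_0-(1-\sqrt\eta)\sqrt{1-p}\,\mathrm{diag}(1,1)$ together with $K_1$, and the analogous leftovers for $p$.

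\textbf{Step 2 (contraction from the split).} Given such a split, tensoring with $\mathrm{I}_{2^{n-1}}$ preserves complete positivity and trace preservation. Therefore
$$\|(\mathrm{I}\otimes\Psi)(X)\|_1\le\|X\|_1,$$
and the residual term contributes at most $(2\sqrt\eta-\eta)\|X\|_1$. It then remains to show that the $\mathcal{R}$-term vanishes, or is dominated, on differences $X=\rho-\sigma$. This is where $\operatorname{Tr}X=0$ is used.

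\textbf{Main obstacle.} When $\mathcal{R}$ is tensored with the identity on the other $n-1$ qubits, it does not simply see $\operatorname{Tr}X$. It sees the partial trace $\operatorname{Tr}_{\text{last}}X$, which need not vanish. The crux is therefore to choose the decomposition in Step 1 so that the $(1-\sqrt\eta)^2$ component acts only on the coherence/population difference of the damped qubit and cancels there. Alternatively, if that fails, I will bound this term by $(1-\sqrt\eta)^2\|\operatorname{Tr}_{\text{last}}X\|_1$ and absorb it using the structure of the encoded differences $f_{\text{enc}}(y)-f_{\text{enc}}(y')$ that actually enter Theorem~\ref{theorem:main:analysis_GAD}.

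As a first attempt, I will compute directly in the Pauli basis of the damped qubit. I will write $X=\sum_{P\in\{I,X,Y,Z\}}X_P\otimes P$, note that GAD maps the coherences $X,Y\mapsto\sqrt{1-\eta}\,(\cdot)$, and map $Z\mapsto(1-\eta)Z$ plus an affine shift proportional to $X_I$. I will then check that each block's scaling, combined with the $X_I$ shift, is bounded by $2\sqrt\eta-\eta$ after the triangle inequality.

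Finally, the bound on $\kappa(f_{\mathrm{GAD}}^{(p,\eta)})$ is substituted into the trace-distance relaxation $D^{(q)}_\alpha\le D^{(q)}_1$ in the proof chain of Theorem~\ref{theorem:analysis 1}.
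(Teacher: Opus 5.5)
You located the real problem in your ``Main obstacle'', but it cannot be engineered away: for $n\ge 2$ the inequality you set out to prove is false. Split the qubits into the untouched register $R$ and the damped qubit $Q$. Take $\rho=\xi\otimes\omega$ and $\sigma=\xi'\otimes\omega$, with $\xi,\xi'$ orthogonal pure states on $R$ and $\omega$ any state on $Q$. Then $f_{\mathrm{GAD}}^{(p,\eta)}(\rho)-f_{\mathrm{GAD}}^{(p,\eta)}(\sigma)=(\xi-\xi')\otimes A^{(p,\eta)}_{\text{GAD}}(\omega)$ has trace norm $2=\|\rho-\sigma\|_1$. Hence $\kappa(f_{\mathrm{GAD}}^{(p,\eta)})=1>2\sqrt\eta-\eta$ for every $p$ and every $\eta<1$.

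Your Pauli-block attempt fails for the same reason. The block $X_I\otimes I$ is mapped to $X_I\otimes A^{(p,\eta)}_{\text{GAD}}(I)$, which has the same trace norm, so no blockwise triangle inequality can give a factor below $1$. Falling back on the encoder does not rescue it in general either. With single-qubit generators $H_j$ acting on qubit $j$, two inputs that differ only in a coordinate feeding an undamped qubit give ratio $1$ in the same way.

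The paper's own proof has the same defect. It bounds $\kappa$ by a complete contraction coefficient defined as an unrestricted $\sup_k\sup_{\rho\neq\sigma}$ over $\mathrm{I}_k\otimes A^{(p,\eta)}_{\text{GAD}}$. By the same example, that quantity equals $1$ for every $k\ge 2$. The Doeblin bound $1-\alpha$ it imports is valid for $n=1$, and for pairs with $\operatorname{Tr}_Q\rho=\operatorname{Tr}_Q\sigma$, but not for the unrestricted supremum. That equal-marginal condition is exactly what kills your $\mathcal{R}$-term.

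Even with that restriction, Step 1 does not work as written, for three reasons.
\begin{itemize}
\item With your Kraus operators, $\eta$ is the damping probability, so $\eta=0$ is the identity channel. Its coefficient is $1$ already for $n=1$, while $2\sqrt\eta-\eta=0$. The bound requires $\eta$ to be the transmissivity $1-\gamma$, which is the convention behind the Doeblin value $(1-\sqrt\eta)^2$.
\item Writing $K_0=cI+L_0$ does not split the channel: $K_0\rho K_0^\dagger$ contains cross terms $c(L_0\rho+\rho L_0^\dagger)$. Moreover, the piece $c^2\rho$ you would peel off is an identity component, which contracts nothing.
\item $\mathcal{R}$ cannot in general be a physical replacement channel onto the thermal state with $\Psi$ CPTP. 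For $p=0$ and $0<\eta<1$, amplitude damping is an extreme point of the qubit channels, since its operators $K_i^\dagger K_j$ are linearly independent. So no nontrivial convex split of that kind exists.
\end{itemize}

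What works is a Hermitian, not necessarily positive, $M$ with $\operatorname{Tr}M=(1-\sqrt\eta)^2$ and $\Gamma\ge I\otimes M$, where $\Gamma=\sum_{i,j}|i\rangle\langle j|\otimes A^{(p,\eta)}_{\text{GAD}}(|i\rangle\langle j|)$. In the transmissivity convention with $p=0$, take $M=\mathrm{diag}\bigl(1-\sqrt\eta,\,-\sqrt\eta(1-\sqrt\eta)\bigr)$. For general $p$, mix this $M$ with its bit-flipped version using weights $1-p$ and $p$.

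Then $\Psi=\bigl(A^{(p,\eta)}_{\text{GAD}}-\operatorname{Tr}(\cdot)\,M\bigr)/(2\sqrt\eta-\eta)$ is CPTP for $\eta>0$. For $\Delta=\rho-\sigma$ with $\operatorname{Tr}_Q\Delta=0$, you get $(\mathrm{I}\otimes A^{(p,\eta)}_{\text{GAD}})(\Delta)=(2\sqrt\eta-\eta)(\mathrm{I}\otimes\Psi)(\Delta)$, which gives the bound. This is the Doeblin argument the paper cites, made explicit. For $n=1$ the bound is loose: the Bloch map $\mathrm{diag}(\sqrt\eta,\sqrt\eta,\eta)$ gives the exact value $\sqrt\eta$.
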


\begin{proof}
    By definition $D_1(\rho\|\sigma) = \tfrac12\|\rho-\sigma\|_1$, so
    $\kappa(f_{\mathrm{GAD}}^{(p,\eta)})$ is the trace-distance contraction
    coefficient of the channel $f_{\mathrm{GAD}}^{(p,\eta)} = \mathrm{I}_{2^{n-1}} \otimes A^{(p,\eta)}_{\text{GAD}}$:
    $$
        \kappa\bigl(f_{\mathrm{GAD}}^{(p,\eta)}\bigr)
        = \sup_{\rho\neq\sigma}
          \frac{\|f_{\mathrm{GAD}}^{(p,\eta)}(\rho)
                -f_{\mathrm{GAD}}^{(p,\eta)}(\sigma)\|_1}
               {\|\rho-\sigma\|_1}.
    $$
    The supremum is over all $n$-qubit states $\rho,\sigma$, which may be
    entangled across the ancilla system and the noisy qubit.

    Based on ~\cite{Hirche_DoeblinCoefficients2024}, this is upper-bounded by the complete trace-distance contraction
    coefficient of the single-qubit channel $A^{(p,\eta)}_{\text{GAD}}$, defined as
    $$
        \eta_{Tr}^{c}(A^{(p,\eta)}_{\text{GAD}})
        := \sup_{k\ge 1}\; \sup_{\rho\neq\sigma}
           \frac{\|(\mathrm{I}_k\otimes A^{(p,\eta)}_{\text{GAD}})(\rho)
                 -(\mathrm{I}_k\otimes A^{(p,\eta)}_{\text{GAD}})(\sigma)\|_1}
                {\|\rho-\sigma\|_1}.
    $$
    Since $f_{\mathrm{GAD}}^{(p,\eta)}$ is exactly $\mathrm{I}_{2^{n-1}}\otimes A^{(p,\eta)}_{\text{GAD}}$ 
    for one particular ancilla dimension, we have
    $$
        \kappa\bigl(f_{\mathrm{GAD}}^{(p,\eta)}\bigr)
        \le \eta_{Tr}^{c}(A^{(p,\eta)}_{\text{GAD}}).
    $$

    Based on the Lemma 9 and Proposition 28 in ~\cite{Hirche_DoeblinCoefficients2024}, we have:
    \begin{align*}
        \kappa(f_{\mathrm{GAD}}^{(p,\eta)}) &= \eta_{Tr}^{c}(A^{(p,\eta)}_{\text{GAD}}) \\
        &\le 1 - \alpha(A^{(p,\eta)}_{\text{GAD}}) \\
        &= 1 - (1-\sqrt{\eta})^2 \\
        &= 1 - (1 - 2\sqrt{\eta} + \eta) \\
        &= 2\sqrt{\eta} - \eta.
    \end{align*}
\end{proof}

Based on Lemma~\ref{lemma:GAD}, we now derive a privacy amplification result for the GAD channel in Theorem~\ref{theorem:main:analysis_GAD}.

\begin{theorem47}[Amplification Under Generalized Amplitude Damping Noise]
    Let $A: \mathbb{X} \rightarrow \mathcal{P}(\mathbb{Y})$ be a classical mechanism satisfying $(\varepsilon, \delta)$-DP where $A = f_{\text{par}} \circ f_{\text{cdp}}$, and let $Q^{(p,\eta)}: \mathbb{Y} \rightarrow \mathcal{P}(\mathbb{Z})$ be a quantum mechanism in $d$-dimensional Hilbert space defined as $Q^{(p,\eta)} = f_{\text{mea}} \circ f_{\mathrm{GAD}}^{(p,\eta)} \circ f_{\text{enc}}$. Then, the composed mechanism $Q^{(p,\eta)} \circ A$ satisfies $(\varepsilon', \delta')$-DP, where
    $$
    \varepsilon' = \varepsilon, \quad \delta' = (2\sqrt{\eta} - \eta) \delta.
    $$
\end{theorem47}

\begin{proof}
    Let $\mu = A(x)$ and $\nu = A(x')$ be the output distributions of the mechanism $A$ on neighboring inputs $x$ and $x'$. We aim to bound the hockey-stick divergence
    $$
    \text{D}_{e^\varepsilon}(\mu Q^{(p,\eta)} \| \nu Q^{(p,\eta)}).
    $$
    By Lemma~\ref{lemma:distribution gap}, we can decompose $\mu$ and $\nu$ using a parameter $\theta = \text{D}_{e^\varepsilon}(\mu \| \nu)$ and define auxiliary distributions $\mu'$, $\nu'$, and $\omega$ with $\mu' \perp \nu'$ such that
    $$
    \mu = (1 - \theta)\omega + \theta \mu', \quad
    \nu = \frac{1 - \theta}{e^\varepsilon} \omega + \left(1 - \frac{1 - \theta}{e^\varepsilon} \right) \nu'.
    $$
    Additionally, define $\tilde{\varepsilon} = \log\left(1 + \frac{e^\varepsilon - 1}{\theta}\right)$.

    We now consider the post-processed outputs:
    \begin{align*}
        &\text{D}_{e^\varepsilon}(\mu Q^{(p,\eta)} \| \nu Q^{(p,\eta)}) \\
        &\leq \theta \cdot \text{D}_{e^{\tilde{\varepsilon}}}(\mu' Q^{(p,\eta)} \| \nu' Q^{(p,\eta)}) & \text{(Lemma~\ref{lemma:distribution gap})} \\
        &\leq \theta \cdot \sup_{y \neq y'} \text{D}_{e^{\tilde{\varepsilon}}}(Q^{(p,\eta)}(y) \| Q^{(p,\eta)}(y')) & \text{(Lemma~\ref{lemma:orthogonality})} \\
        &\leq \theta \cdot \sup_{y \neq y'} \text{D}^{(q)}_{e^{\tilde{\varepsilon}}}\left(f_{\mathrm{GAD}}^{(p,\eta)} \circ f_{\text{enc}}(y) \,\|\, f_{\mathrm{GAD}}^{(p,\eta)} \circ f_{\text{enc}}(y')\right) & \text{(Lemma~\ref{lemma:QHSD_leq_HSD})} \\
        &= \theta \cdot \sup_{\rho, \rho'} \text{D}^{(q)}_{e^{\tilde{\varepsilon}}}\left(f_{\mathrm{GAD}}^{(p,\eta)}(\rho) \,\|\, f_{\mathrm{GAD}}^{(p,\eta)}(\rho')\right) & (\text{where $\rho, \rho'$ are pure}) \\
        &\leq \theta \cdot \sup_{\rho, \rho'} \text{D}^{(q)}_{1}\left(f_{\mathrm{GAD}}^{(p,\eta)}(\rho) \,\|\, f_{\mathrm{GAD}}^{(p,\eta)}(\rho')\right) & (\text{$e^{\tilde{\varepsilon}} \geq 1$}) \\
        &\leq \theta \cdot \sup_{\rho, \rho'} \left( (2\sqrt{\eta} - \eta) \cdot \text{D}^{(q)}_{1}(\rho \| \rho') \right) & \text{(Lemma~\ref{lemma:GAD})} \\
        &\leq \theta \cdot (2\sqrt{\eta} - \eta) \cdot 1 & \text{(Because $\text{D}^{(q)}_{e^{\tilde{\varepsilon}}}(\rho \| \rho') \leq 1$)}
    \end{align*}
    
    Since the original mechanism $A$ is $(\varepsilon, \delta)$-DP, we have $\theta = \text{D}_{e^\varepsilon}(\mu \| \nu) \leq \delta$. We substitute this into the final bound:
    $$
    \text{D}_{e^\varepsilon}(\mu Q^{(p,\eta)} \| \nu Q^{(p,\eta)}) \leq (2\sqrt{\eta} - \eta) \cdot \delta.
    $$
    This yields the advertised DP parameters.
\end{proof}

\subsubsection{Generalized Dephasing Channel (Proof of Theorem 4.9)}
\label{sec:rebuttal:generalization:GD}

Generalized Dephasing (GD) channel is one of the most fundamental and widely studied noise processes in quantum information. It suppresses quantum coherence while leaving classical populations unchanged. Specifically, this channel is formulated as:
$$
A^{(\eta)}_{\text{GD}}(\rho) = (1-\eta)\rho + \eta Z\rho Z
$$
where $\eta \in [0,1]$ is the dephasing parameter and $Z$ is the Pauli-Z operator.
From Proposition 33 in ~\cite{Hirche_DoeblinCoefficients2024}, the complete trace-distance contraction coefficient of a single-qubit GD channel is exactly $1$ (i.e., $\eta^c_{Tr}(A^{(\eta)}_{\text{GD}}) =1$). This implies that no worst-case privacy amplification can be guaranteed under dephasing noise. In other words, for some input states, the noise does not reduce distinguishability at all. However, this worst case is only attained for states distinguished solely through diagonal differences. In many common QML architectures such as those using angle encoding, the encoded data occupies families of states where all information is carried in the off-diagonal components (coherences). In this setting, GD noise does provide nontrivial contraction, and thus we obtain privacy amplification. One instance of this setting is formally presented in Assumption~\ref{assump:equatorial_all}.

\begin{assumption48}[Product Equatorial Encoding on All Qubits]
For each input $y\in\mathbb{Y}$, the encoder prepares a product state
$$
    \rho_y = f_{\mathrm{enc}}(y)
           = \bigotimes_{j=1}^n \rho^{(j)}_y,
$$
where each single-qubit factor $\rho^{(j)}_y$ is an equatorial state on the
Bloch sphere, i.e.,
$$
    \rho^{(j)}_y
    = \frac12\!\left(I + \cos\phi^{(j)}_y\,X + \sin\phi^{(j)}_y\,Y\right),
$$
for some angle $\phi^{(j)}_y\in\mathbb{R}$ and with no $Z$-component.
\end{assumption48}

This equatorial–state assumption is satisfied by common QML encoders where data are mapped into
phases and superpositions via single-qubit rotations and Hadamard-type
preparation such as circuits of the form $H \to R_Z(\phi^{(j)}_y)$ on each qubit~\cite{Schuld2019_rxry,PerezSalinas2020datareuploading_rxry,Hatakeyama2023_rxry}. Beyond QML, equatorial states also play a central role in quantum communication and quantum key distribution (QKD), where they are used for phase encoding and coherence-based information transfer~\cite{Fisher2014,Xiao2014}. Thus, analyzing privacy amplification of GD channel under this assumption is both realistic and practically meaningful.

We consider the $n$-qubit GD channel acting independently on every qubit as follow:
$$
    f^{(\eta)}_{\mathrm{GD}}
    = \bigotimes_{j=1}^n A^{(\eta)}_{\mathrm{GD}}, 
$$

We now establish a contraction bound for $f^{(\eta)}_{\mathrm{GD}}$ under
Assumption~\ref{assump:equatorial_all} in Lemma~\ref{lemma:GD_all_qubits_contraction}.

\begin{lemma}
\label{lemma:GD_all_qubits_contraction}
Let $f^{(\eta)}_{\mathrm{GD}}$ be the $n$-qubit GD channel defined
above, and assume the encoder $f_{\mathrm{enc}}$ satisfies
Assumption~\ref{assump:equatorial_all}. Then the trace-distance contraction
coefficient of $f^{(\eta)}_{\mathrm{GD}}$ over the encoder family
$\{\rho_y\}_{y\in\mathbb{Y}}$ satisfies
$$
    \kappa\bigl(f^{(\eta)}_{\mathrm{GD}}\bigr)
    := \sup_{y\neq y'}
       \frac{D_1\!\bigl(f^{(\eta)}_{\mathrm{GD}}(\rho_y)
                        \,\big\|\,
                        f^{(\eta)}_{\mathrm{GD}}(\rho_{y'})\bigr)}
            {D_1(\rho_y\|\rho_{y'})}
    \;\le\;
    |1-2\eta|.
$$
\end{lemma}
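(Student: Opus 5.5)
The plan is to reduce the $n$-qubit generalized dephasing channel, restricted to the encoder family of Assumption~\ref{assump:equatorial_all}, to a family of local depolarizing-type maps. From there the trace distance $D_1(\rho\|\sigma)=\tfrac12\|\rho-\sigma\|_1$ should contract by $|1-2\eta|$. Throughout I write $c:=1-2\eta$.

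\textbf{Step 1: single-qubit action.} For an equatorial state $\rho=\tfrac12(I+\cos\phi\,X+\sin\phi\,Y)$ we have $Z\rho Z=\tfrac12(I-\cos\phi\,X-\sin\phi\,Y)$. Hence
$$A^{(\eta)}_{\mathrm{GD}}(\rho)=\tfrac12\bigl(I+c\cos\phi\,X+c\sin\phi\,Y\bigr).$$
If $c\ge 0$ this equals $c\,\rho+(1-c)\tfrac{I}{2}$, which is a single-qubit depolarizing map with parameter $1-c$. If $c<0$ it equals $|c|\,Z\rho Z+(1-|c|)\tfrac I2$, which is the same depolarizing map preceded by the fixed unitary $Z$.

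\textbf{Step 2: reduce to a depolarizing form.} Every factor of $\rho_y$ is equatorial, and the unitary $Z^{\otimes n}$ (applied when $c<0$) is the same for both $\rho_y$ and $\rho_{y'}$. Since trace distance is unitarily invariant, it therefore suffices to bound $\|\mathcal D^{\otimes n}(\rho_y-\rho_{y'})\|_1$, where $\mathcal D(\tau)=|c|\tau+(1-|c|)\tfrac I2\operatorname{Tr}\tau$.

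\textbf{Step 3: Pauli expansion.} Write $\Delta:=\rho_y-\rho_{y'}$. Expanding $\Delta$ in Pauli strings built from $\{I,X,Y\}$ (no $Z$ appears, by the assumption), the identity string cancels because $\operatorname{Tr}\Delta=0$. Every remaining string has support $S\neq\emptyset$, and $\mathcal D^{\otimes n}$ multiplies the $S$-component $\Delta_S$ by $|c|^{|S|}\le|c|$. So the whole difference sits in the coherence sector that the noise damps, which is the structural content of Assumption~\ref{assump:equatorial_all}.

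\textbf{Step 4: turn damping into a trace-norm bound.} This is where I expect the main obstacle. Componentwise damping of Pauli coefficients controls the Hilbert–Schmidt norm, giving $\|\mathcal D^{\otimes n}\Delta\|_2\le|c|\,\|\Delta\|_2$, but not directly the trace norm. I would try two routes.

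The first route writes $\mathcal D^{\otimes n}=\mathcal D_1\circ\cdots\circ\mathcal D_n$ and uses a Doeblin-style argument as in Lemma~\ref{lemma:GAD}. Concretely, I would show that on the span of differences of product equatorial states each $\mathcal D_j$ is the identity scaled by $|c|$ plus a term annihilated on that span.

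The second route uses the product structure. Both $\rho_y$ and $\rho_{y'}$ are tensor products, so I would write $\Delta$ as a telescoping sum of terms that differ in a single factor, and apply the single-qubit contraction $|c|$ (from Step 1) to each term.

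The delicate point is that this telescoping loses a factor that grows with the number of qubits whose angles differ, so reconciling it with the uniform constant $|c|$ is the crux. I would first verify the bound when $\rho_y$ and $\rho_{y'}$ differ on one qubit, where it is exact. I would then test the many-qubit case carefully, since for many slightly different factors both the input and output trace distances approach $1$. That case is the step that decides whether the uniform bound $|1-2\eta|$ goes through as stated.
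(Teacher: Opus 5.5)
Your Step~4 is a genuine gap, and it cannot be closed: the uniform constant $|1-2\eta|$ is false once $n\ge 3$, in exactly the many-qubit regime you were worried about.

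Take $\rho_y=|+\rangle\langle+|^{\otimes 3}$ (all $\phi^{(j)}_y=0$) and $\rho_{y'}=|-\rangle\langle-|^{\otimes 3}$ (all $\phi^{(j)}_{y'}=\pi$). Both satisfy Assumption~\ref{assump:equatorial_all}, and they are orthogonal, so $D_1(\rho_y\|\rho_{y'})=1$. With $c=1-2\eta$ the outputs are $\bigl(\tfrac12(I\pm cX)\bigr)^{\otimes 3}$. These commute, and in the $X$ basis they are the product laws $\mathrm{Bern}(p)^{\otimes 3}$ and $\mathrm{Bern}(q)^{\otimes 3}$, with $p=\tfrac{1+c}{2}$ and $q=\tfrac{1-c}{2}$. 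Their trace distance is
\[
|p^3-q^3|+3pq\,|p-q|=|c|\,(1+2pq)=|c|\,\frac{3-c^2}{2}>|c|\qquad(0<|c|<1).
\]
For example, at $\eta=1/4$ this gives $11/16>1/2$. With $n$ such factors the output distance tends to $1$ for every $\eta\neq\tfrac12$, because trace-distance contraction coefficients do not tensorize.

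The paper's proof does exactly your Step~3: Pauli expansion, no strings in $\{I,Z\}^{\otimes n}$, eigenvalues $(1-2\eta)^k$. It then asserts that a map diagonal in the orthonormal Pauli basis, with eigenvalues of modulus at most $|1-2\eta|$, has trace-norm operator norm at most $|1-2\eta|$. That is precisely the Hilbert--Schmidt-versus-trace-norm step you declined to take, and the example shows it is invalid.

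Your Steps~1--3 and the one-differing-qubit check (ratio exactly $|c|$) are correct. Route one fails because $\mathcal D_j$ acts as the identity, not as $|c|$, on every Pauli component that is $I$ on qubit $j$, and the span of differences contains such components. Route two cannot be refined to recover the constant $|c|$, since the true ratio exceeds it.

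What Step~2 does give you is a correct replacement bound. After the $Z^{\otimes n}$ conjugation when $c<0$, write $\mathcal D=|c|\,\mathrm{id}+(1-|c|)\mathcal R$ with $\mathcal R(\tau)=\operatorname{Tr}(\tau)\,\tfrac I2$, and expand
\[
\mathcal D^{\otimes n}=\sum_{S\subseteq[n]}|c|^{|S|}(1-|c|)^{n-|S|}\,\mathrm{id}_S\otimes\mathcal R_{S^c}.
\]
Each term sends $\Delta$ to $\operatorname{Tr}_{S^c}\Delta\otimes(I/2)^{\otimes|S^c|}$, whose trace norm is at most $\|\Delta\|_1$. The term vanishes whenever $S$ misses the set $T$ of qubits on which the two product states differ; in particular it vanishes for $S=\emptyset$. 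Hence
\[
\kappa\bigl(f^{(\eta)}_{\mathrm{GD}}\bigr)\le 1-\bigl(1-|1-2\eta|\bigr)^{m},\qquad m=\max_{y\neq y'}|T(y,y')|\le n .
\]
This recovers $|1-2\eta|$ only when $m=1$ and tends to $1$ as $m$ grows, consistent with the counterexample. Consequently the proof of Theorem~\ref{theorem:main:GD_all_qubits_DP} yields $\delta'=\bigl(1-(1-|1-2\eta|)^m\bigr)\delta$ rather than $|1-2\eta|\,\delta$.
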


\begin{proof}
For each $y$, we have
$$
    \rho_y
    = \bigotimes_{j=1}^n \rho^{(j)}_y,
    \qquad
    \rho^{(j)}_y
    = \tfrac12\bigl(I + \cos\phi^{(j)}_y X + \sin\phi^{(j)}_y Y\bigr).
$$
Let $\Delta = \rho_y - \rho_{y'}$ for two distinct inputs $y\neq y'$. Expanding
$\Delta$ in the $n$-qubit Pauli basis, we have:
$$
    \Delta = \sum_{P\in\mathcal{P}_n} c_P P,
$$
where $\mathcal{P}_n = \{I,X,Y,Z\}^{\otimes n}$ is the $n$-qubit Pauli group, and $c_P\in\mathbb{R}$ since $\Delta$ is
Hermitian.

Because each single-qubit factor $\rho^{(j)}_y$ contains only $I$, $X$, and
$Y$ components and no $Z$ component, any product state
$\rho_y = \bigotimes_{j} \rho^{(j)}_y$ expands only in Pauli strings whose
single-qubit factors are in $\{I,X,Y\}$. The same holds for $\rho_{y'}$, and
therefore their difference $\Delta=\rho_y-\rho_{y'}$ has no support on any
string consisting solely of $I$'s and $Z$'s. In particular,
$$
    c_P = 0
    \quad\text{for all } P\in\mathcal{P}_n
    \text{ such that } P\in\{I,Z\}^{\otimes n}.
$$
Equivalently, every nonzero coefficient $c_P$ corresponds to a Pauli string $P$
that contains at least one factor $X$ or $Y$.

The $n$-qubit GD channel acts diagonally in the Pauli basis:
$$
    f^{(\eta)}_{\mathrm{GD}}(P)
    = \lambda_P P,
$$
where
$$
    \lambda_P = \prod_{j=1}^n \lambda_{P_j},\quad
    \lambda_{I} = \lambda_{Z} = 1,\quad
    \lambda_{X} = \lambda_{Y} = 1-2\eta,
$$
and $P = P_1\otimes\cdots\otimes P_n$ with $P_j\in\{I,X,Y,Z\}$. Thus, for any
Pauli string $P$ that contains at least one $X$ or $Y$, we have
$$
    \lambda_P = (1-2\eta)^k
$$
for $k\ge 1$. As a result, we have:
$$
    |\lambda_P| \le |1-2\eta|.
$$

It implies that:
$$
    f^{(\eta)}_{\mathrm{GD}}(\Delta)
    = \sum_{P\in\mathcal{P}_n} c_P \lambda_P P,
$$
with each nonzero coefficient satisfying $|\lambda_P|\le |1-2\eta|$. As a
linear map on the subspace spanned by Pauli strings with at least one $X$ or
$Y$, $f^{(\eta)}_{\mathrm{GD}}$ is diagonal in an orthonormal
operator basis with eigenvalues bounded in modulus by $|1-2\eta|$. Thus,
its operator norm on any unitarily invariant norm, in particular the trace
norm, is at most $|1-2\eta|$ on this subspace. Concretely,
$$
    \|f^{(\eta)}_{\mathrm{GD}}(\Delta)\|_1
    \le |1-2\eta|\,\|\Delta\|_1.
$$

Since $D_1(\rho\|\sigma) = \tfrac12\|\rho-\sigma\|_1$, we conclude that for all
$y\neq y'$,
$$
    \frac{D_1\!\bigl(f^{(\eta)}_{\mathrm{GD}}(\rho_y)
                     \,\big\|\,
                     f^{(\eta)}_{\mathrm{GD}}(\rho_{y'})\bigr)}
         {D_1(\rho_y\|\rho_{y'})}
    = \frac{\tfrac12\|f^{(\eta)}_{\mathrm{GD}}(\Delta)\|_1}
           {\tfrac12\|\Delta\|_1}
    \le |1-2\eta|.
$$
Taking the supremum over all $y\neq y'$ gives the desired bound.
\end{proof}

Finally, similar to the amplification analysis for depolarizing noise and GAD noise, we
now derive a privacy amplification theorem for the GD channel acting on all
qubits.  The result follows immediately by combining the contraction bound in
Lemma~\ref{lemma:GD_all_qubits_contraction} with the classical
post-processing and distribution-decomposition tools used earlier.

\begin{theorem49}
Let $A: \mathbb{X}\to\mathcal{P}(\mathbb{Y})$ be a classical mechanism
satisfying $(\varepsilon,\delta)$-DP, and let
$$
    Q^{(\eta)} := f_{\mathrm{mea}} \circ f^{(\eta)}_{\mathrm{GD}}
                  \circ f_{\mathrm{enc}}
$$
be an $n$-qubit quantum mechanism where $f^{(\eta)}_{\mathrm{GD}}$
is the $n$-qubit GD channel defined above and $f_{\mathrm{enc}}$ satisfies
Assumption~\ref{assump:equatorial_all}. Then the composed mechanism
$Q^{(\eta)}\circ A$ satisfies $(\varepsilon',\delta')$-DP with
$$
    \varepsilon' = \varepsilon,
    \qquad
    \delta' = |1-2\eta|\cdot \delta.
$$
\end{theorem49}

\begin{proof}
Let $\mu = A(x)$ and $\nu = A(x')$ be the output distributions of $A$ on
neighboring inputs $x,x'$. As in Theorem~\ref{theorem:main:analysis_GAD}, we apply
Lemma~\ref{lemma:distribution gap} to decompose $\mu,\nu$ with parameter
$\theta = D_{e^\varepsilon}(\mu\|\nu)\le\delta$ and reduce the analysis to the
worst-case pair of orthogonal inputs. Using Lemma~\ref{lemma:orthogonality} and
Lemma~\ref{lemma:QHSD_leq_HSD}, we can bound
$$
    D_{e^\varepsilon}(\mu Q^{(\eta)}\|\nu Q^{(\eta)})
    \le \theta \cdot
        \sup_{\rho\neq\rho'}
        D_1\!\bigl(f^{(\eta)}_{\mathrm{GD}}(\rho)
                  \,\big\|\,
                  f^{(\eta)}_{\mathrm{GD}}(\rho')\bigr),
$$
where the supremum is taken over encoded states $\rho,\rho'$ in the image of
$f_{\mathrm{enc}}$.

By Lemma~\ref{lemma:GD_all_qubits_contraction},
$$
    D_1\!\bigl(f^{(\eta)}_{\mathrm{GD}}(\rho)
              \,\big\|\,
              f^{(\eta)}_{\mathrm{GD}}(\rho')\bigr)
    \le |1-2\eta|\,D_1(\rho\|\rho')
    \le |1-2\eta|.
$$
Therefore,
$$
    D_{e^\varepsilon}(\mu Q^{(\eta)}\|\nu Q^{(\eta)})
    \le \theta\,|1-2\eta|
    \le |1-2\eta|\,\delta,
$$
which yields the claimed privacy parameters $\varepsilon'=\varepsilon$ and
$\delta' = |1-2\eta|\delta$.
\end{proof}


\section{Implementation}\label{app:implement_exp}
We implement all experiments with Python 3.8. Each experiment is conducted on a single GPU-assisted compute node installed with a Linux 64-bit operating system. Our testbed resources include 72 CPU cores with 377 GB of RAM in total. Our allocated node is also provisioned with 2 GPUs with 40GB of VRAM per GPU.

\noindent \textbf{Implementation of \texttt{HYPER-Q}.}
\texttt{HYPER-Q} was implemented using 
the PennyLane QML 
simulator~\citep{pennylane}. The detailed architecture implements the general mechanism proposed and analyzed in Section~\ref{sec:mech_overview}. Specifically, each input image first passes through two convolutional layers, each followed by batch normalization and max pooling to reduce spatial dimensions and extract salient features. The resulting feature maps are flattened and passed through two fully connected layers to produce a low-dimensional feature vector. This vector is then encoded into a quantum circuit (with 5-10 qubits) comprising three alternating layers of single-qubit rotations (implemented via \texttt{RX} gates) and entangling layers. This corresponds to the encoding function $f_{\text{enc}}$, where Hermitian generators are given by \texttt{RX} gates. The entangling layers employ a circular arrangement of \texttt{CNOT} gates, such that each qubit $i$ is entangled with qubit $i+1$, with the last qubit entangled with the first. A projective measurement is applied in the computational basis to extract the quantum outputs, which are then processed by a final fully connected layer to produce the prediction. 


Additionally, the classical and quantum noise levels are set as follows. Given a target differential privacy budget $(\varepsilon', \delta')$, we first fix the quantum depolarizing noise factor $\eta$, and then calibrate the classical Gaussian noise variance $\sigma^2$ to satisfy the budget based on Theorem~\ref{theorem:analysis 1}. Specifically, $\sigma^2$ is chosen so that the classical mechanism $A$ achieves $(\varepsilon, \delta)$-DP with
$$
\varepsilon = \varepsilon', \qquad 
\delta = \frac{\delta' - \frac{\eta (1 - e^\varepsilon)}{d}}{1 - \eta}.
$$
The variance $\sigma^2$ is then computed using the Analytic Gaussian mechanism~\citep{AnalyticGaussian}, ensuring that the classical mechanism $A$ satisfies $(\varepsilon, \delta)$-DP and the composed mechanism $Q^{(\eta)} \circ A$ satisfies the target $(\varepsilon', \delta')$-DP.

\noindent \textbf{Justification for QML architecture.} The selection of the QML architecture for our experimental evaluation is consistent with established practices in the QML and quantum differential privacy literature, where models on the scale of 2 to 12 qubits are standard due to the computational limits of simulators and current hardware~\cite{Havlíček2019,Du2021Depolarizing,Hur2022,Melo2023pulseefficient,watkins2023quantum,Long2025}. Our objective in this study is to analyze the performance and stability of the \texttt{HYPER-Q} DP mechanism within a hybrid QML model rather than to optimize the underlying QML model for state-of-the-art classification accuracy. While large-scale noisy quantum training remains an open challenge characterized by barren plateaus and circuit depth constraints, we demonstrate that our \texttt{HYPER-Q} effectively scales and remains stable as the number of qubits increases by experimental results in Section~\ref{sec:rebuttal:dimension_scalability} and Section~\ref{sec:rebuttal:cifar10}.


\begin{table}[t!]
    \centering
    \footnotesize
    \caption{Dataset descriptions.}
    \begin{tabular}{|c|c|c|c|c|c|}
        \hline
        Dataset & Image Dims. & Training & Testing & No. of Labels & Description \\ \hline
        MNIST & 28$\times$28 & 60,000 & 10,000 & 10 & Handwritten digits \\ \hline
        USPS & 16$\times$16 & $\approx$ 7,300 & $\approx$ 2,000 & 10 & Scanned U.S. postal envelopes \\ \hline
        FashionMNIST & 28$\times$28 & 60,000 & 10,000 & 10 & Clothing items \\ \hline
        CIFAR-10 & 32$\times$32$\times$ 3 & 50,000 & 10,000 & 10 & Natural objects items \\ \hline
    \end{tabular}
    
    \label{table:datasets}
\end{table}

\section{Description of Datasets \& Baselines}\label{app:benchmarks}
\noindent \textbf{Datasets:} We evaluate our approach on three 
image classification datasets: MNIST~\citep{lecun2002gradient}, FashionMNIST~\citep{xiao2017fashion}, and USPS~\citep{hull2002database}. Table~\ref{table:datasets} briefly describes each of them. 

\noindent \textbf{Classical DP Mechanisms:} In order to evaluate our DP mechanism \texttt{HYPER-Q}, we compare it against classical DP mechanisms. Below, we summarize DP mechanisms used for our comparison:

\begin{itemize}
\item \textbf{Basic Gaussian.} The Gaussian mechanism is a standard approach for providing $(\varepsilon, \delta)$-DP. The Gaussian mechanism introduces noise from a normal distribution with zero mean and a variance calibrated to predefined privacy parameters~\citep{dong2022gaussian}. It's well-suited for functions whose sensitivity is measured using the $\ell_2$ norm. 
Given a function $f$ with a construction bound $L_{\text{cons}}$ measured in $\ell_2$ norm, the mechanism achieves $(\varepsilon, \delta)$-DP by adding noise $\mathcal{N}(0, \sigma^2I)$ with $\sigma$ is computed as:
$$\sigma = \sqrt{2\ln(\frac{1.25}{\delta})} L_{\text{cons}}/\varepsilon$$ \\
\item \textbf{Analytic Gaussian.} The analytic Gaussian mechanism improves on the basic Gaussian approach by exploiting tighter bounds derived from the privacy loss distribution~\citep{AnalyticGaussian}. Specifically, we can implicitly characterize the privacy loss as~\citep{Cullen2024}:
$$
\delta(\varepsilon) = \Phi\left(-\frac{L_{\text{cons}}}{2\sigma} + \frac{\varepsilon \sigma}{L_{\text{cons}}} \right) - e^{\varepsilon} \cdot \Phi\left(-\frac{L_{\text{cons}}}{2\sigma} - \frac{\varepsilon \sigma}{L_{\text{cons}}} \right)
$$
where $\Phi(\cdot)$ is the cumulative distribution function of the standard Gaussian distribution. This formulation allows us to numerically solve for the minimum $\sigma$ required to satisfy a target $(\varepsilon, \delta)$. \\


\item \textbf{DP-SGD.} The Differentially Private Stochastic Gradient Descent (DP-SGD) algorithm achieves privacy by clipping per-sample gradients and adding calibrated Gaussian noise during each optimization step~\cite{abadi2016deep}. As a gradient-perturbation mechanism, DP-SGD is designed to provide $(\varepsilon, \delta)$-DP while maintaining compatibility with large-scale deep learning. The privacy guarantee arises from controlling the sensitivity of gradient updates and injecting noise proportional to this sensitivity. Although DP-SGD is a well-established baseline in classical machine learning, its behavior in QML settings remains largely unexplored. Beyond a direct application presented in~\cite{watkins2023quantum}, the interplay between DP-SGD’s gradient noise and quantum gradient estimation has not been systematically examined. For completeness, we include DP-SGD in our comparison as a representative gradient-level classical mechanism.
\end{itemize}

\noindent \textbf{Classical ML Models:} In order to evaluate the practical viability, we compare QML models armed with \texttt{HYPER-Q} with three classical ML models armed with classical DP mechanisms: Multi-Layer Perceptron (MLP), ResNet-9, and Vision Transformer (ViT). We describe the implementations of those classical ML models below:
\begin{itemize}
    \item \textbf{MLP:} We implement an MLP with a feedforward network composed of fully connected layers and ReLU activations. It consists of one hidden layer with 100 units and a final linear output layer corresponding to the number of classes. It is identical to the default MLP from the \texttt{Sci-Kit Learn} library \footnote{\url{https://scikit-learn.org/stable/}}. We implemented it without the library as it is not tailored for GPU usage out of the box. Our \textit{from scratch} version is parallelizable on GPUs. 
    \item \textbf{ResNet-9:} We implement a ResNet-9 model inspired by the original in \citep{he2016deep}. It is comprised of a series of convolutional layers and two residual blocks that include skip connections. It processes inputs through increasing feature dimensions: [32, 64, 128]. We employ batch normalization and ReLU activations throughout the model following by MaxPooling layers. The model ends with a fully-connected layer for classification.
    \item \textbf{ViT:} We implement a ViT model inspired by \citep{dosovitskiyimage}. It splits input images into non-overlapping patches and linearly embeds them before adding positional encodings and a class token. Multiple self-attention layers processes each sequence before classifying via a fully connected head applied to the class token.
\end{itemize}

\section{Adversarial Training and Testing}\label{app:adv_tt}
We evaluate the adversarial robustness of \texttt{HYPER-Q} via an adversarial training and testing framework inspired by the PixelDP mechanism \citep{Lecuyer2019}. 
Similar to PixelDP, during training, we define a \textit{construction attack bound} $L_{cons}$ to represent the theoretical robustness guarantee in terms of $\ell_2$ norm. Specifically, this bound establishes the maximum allowable adversarial perturbation under which the model is certified to preserve its prediction capabilities. In our experiments, we vary this value where $L_{cons}=\{0.1, 0.2, 0.3, 0.4\}$. In both \texttt{HYPER-Q} and classical baseline models, $\ell_2$-based noise is injected directly into the input. 
This setup permits a fair comparison of robustness guarantees between quantum and classical models despite their underlying architectural differences. 

To evaluate empirical robustness beyond certified guarantees, we assess each model against adversarial perturbations constrained by the $\ell_{\infty}$ norm. Specifically, for every $L_{cons}$ value, we experiment against empirical attack bounds $L_{attk}$. In our experiments, we vary this value where $L_{attk}=\{0, 0.01, 0.02, 0.03, 0.04, 0.05\}$ while implementing two adversarial attacks: Fast Gradient Sign Method (FGSM) and Projected Gradient Descent (PGD). With this, we are able to observe model performance under realistic threats that may not satisfy the constraints of our certified threat model. In addition, we adopt the randomized smoothing technique proposed by~\citep{Cohen2019Smoothing} to provide certified predictions against adversarial examples.

\section{Additional Experiments}\label{app:additional_exp}

\begin{figure*}[t!]
    \centering
        \includegraphics[width=\textwidth]{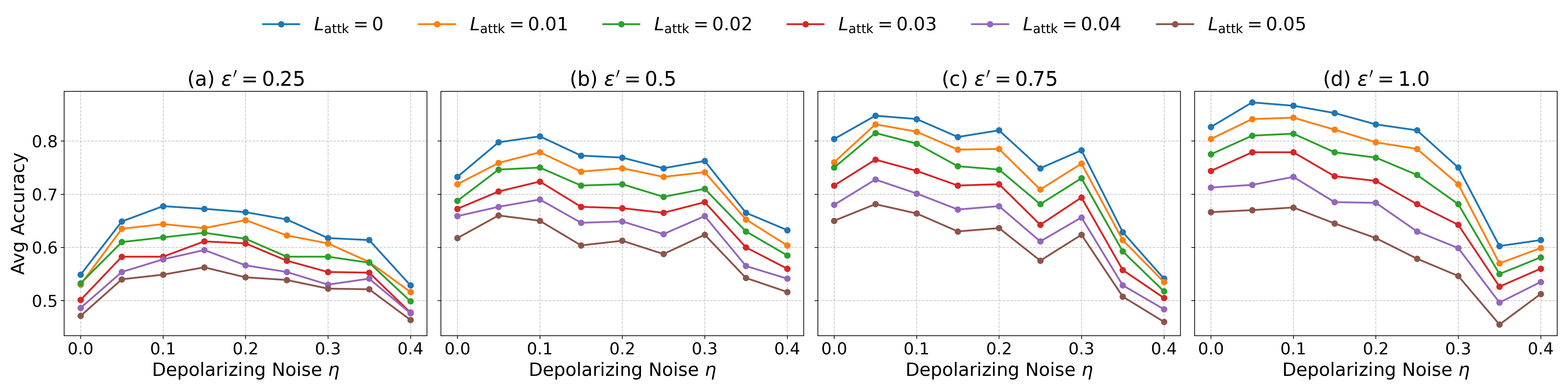}
        \caption{Impact of depolarizing noise $\eta$ on model utility for the MNIST dataset. Subplots (a)-(d) show performance under varying privacy budgets $\varepsilon' \in \{0.25, 0.5, 0.75, 1.0\}$. Each curve represents the average accuracy against FGSM attacks with varying strengths $L_{\text{attk}} \in \{0, \dots, 0.05\}$.}
    \label{fig:rebuttal:eta_tradeoff:MNIST}
\end{figure*}

\begin{figure*}[t!]
    \centering
        \includegraphics[width=\textwidth]{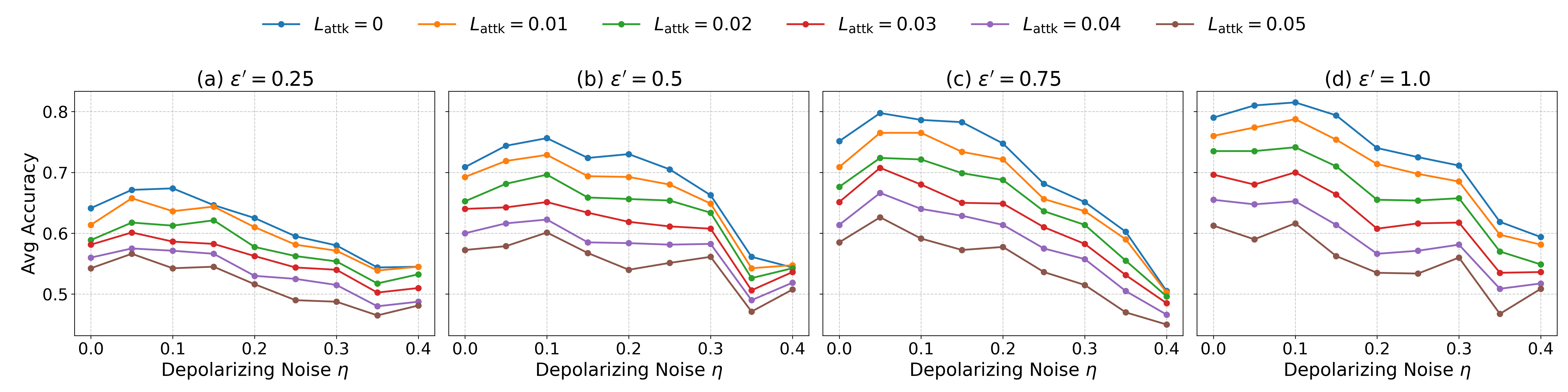}
        \caption{Impact of depolarizing noise $\eta$ on model utility for the Fashion-MNIST dataset. Subplots (a)-(d) show performance under varying privacy budgets $\varepsilon' \in \{0.25, 0.5, 0.75, 1.0\}$. Each curve represents the average accuracy against FGSM attacks with varying strengths $L_{\text{attk}} \in \{0, \dots, 0.05\}$.}
    \label{fig:rebuttal:eta_tradeoff:FMNIST}
\end{figure*}

\begin{figure*}[t!]
    \centering
        \includegraphics[width=\textwidth]{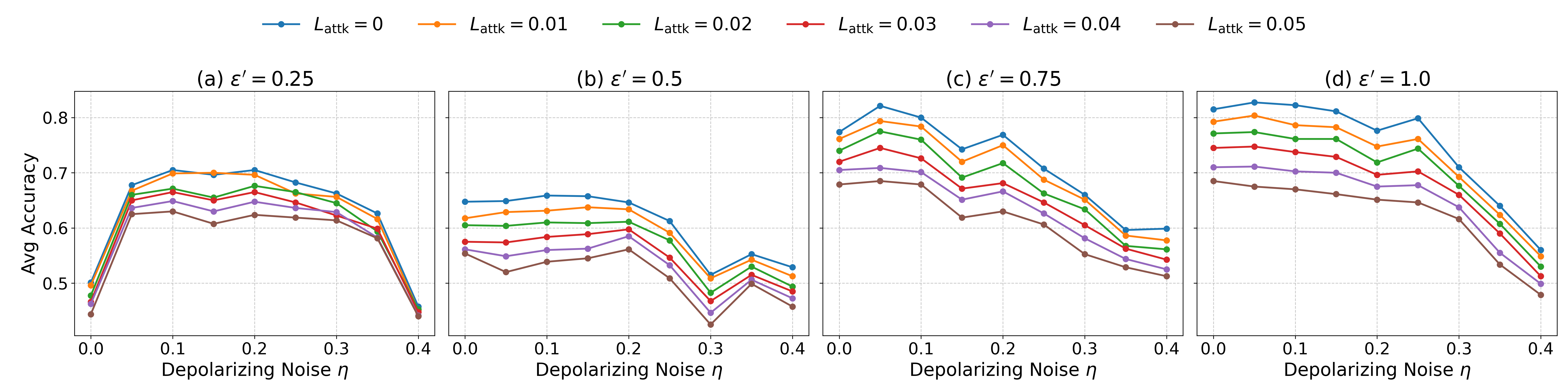}
        \caption{Impact of depolarizing noise $\eta$ on model utility for the USPS dataset. Subplots (a)-(d) show performance under varying privacy budgets $\varepsilon' \in \{0.25, 0.5, 0.75, 1.0\}$. Each curve represents the average accuracy against FGSM attacks with varying strengths $L_{\text{attk}} \in \{0, \dots, 0.05\}$.} 
    \label{fig:rebuttal:eta_tradeoff:USPS}
\end{figure*}

\subsection{Sensitivity Analysis of $\eta$ (extended)}
\label{sec:appendix:eta_analysis}


In this section, we present more details for  sensitivity analysis in Section~\ref{sec:eta_analysis}. Figures~\ref{fig:rebuttal:eta_tradeoff:MNIST},\ref{fig:rebuttal:eta_tradeoff:FMNIST} and \ref{fig:rebuttal:eta_tradeoff:USPS} collectively show that the relationship between depolarizing noise $\eta$ and and model performance exhibits a remarkably consistent structure across all three datasets: MNIST, Fashion-MNIST and USPS. Despite differences in dataset complexity, the accuracy curves share the same unimodal shape. Specifically, performance initially increases as $\eta$ moves away from $0$, reach a peak and, then declines once the quantum distortion dominates. This pattern is visible in every privacy budget $\varepsilon' \in \{0.25, 0.5, 0.75, 1\}$ and across all attack bounds $L_{\text{attk}}$.

When comparing the location of the performance peaks across datasets, we observe a highly aligned trend. Under stricter privacy budgets ($\varepsilon' \le 0.5$), the best performance is usually achieved at $\eta = 0.1$ or $\eta = 0.05$. For example, MNIST and Fashion-MNIST peak at $\eta = 0.1$ and USPS similarly peaks at $\eta = 0.1-0.15$ for $\varepsilon' = 0.25$. As the privacy requirement becomes more relaxed ($\varepsilon' \ge 0.75$), all three datasets shift their peaks toward smaller noise levels, typically $\eta = 0.05$. We observe that the optimal $\eta$ consistently falls within the narrow interval $0.05$-$0.15$. Although $\eta$ is often difficult to calibrate precisely in practice~\cite{hardtocalibrate}, this stable peak range provides a robust guideline that users can reliably calibrate $\eta$ within this interval without requiring an exhaustive sweep.

\subsection{Robustness Analysis of \texttt{HYPER-Q} (extended)}
\label{app:additional_exp:exp_1}
As in Section~\ref{sec:rob_analysis}, we evaluate the adversarial robustness of \texttt{HYPER-Q} under the depolarizing noise level $\eta = 0.1$. We compare its performance with Basic Gaussian, Analytic Gaussian and DP-SGD mechanisms, ensuring that all methods are evaluated under the same privacy budget and applied to the same QML model. Figures~\ref{fig:exp1:FGSM:MNIST}, \ref{fig:exp1:FGSM:FMNIST} and \ref{fig:exp1:FGSM:USPS} present the results of the FGSM attack on the FashionMNIST and USPS datasets, respectively with $\varepsilon' \in \{0.25, 0.5, 0.75, 1\}$. On MNIST and FashionMNIST, \texttt{HYPER-Q} consistently outperforms all baseline mechanisms across the evaluated privacy budgets. In contrast, on USPS, Analytic Gaussian achieves the highest accuracy for larger privacy budgets ($\varepsilon' > 0.25$), while \texttt{HYPER-Q} yields the best performance under the strictest privacy constraint ($\varepsilon' = 0.25$).


Figures~\ref{fig:exp1:PGD:MNIST}, \ref{fig:exp1:PGD:FMNIST}, and \ref{fig:exp1:PGD:USPS} report the performance under PGD attacks for \texttt{HYPER-Q} and the baseline mechanisms on MNIST, FashionMNIST, and USPS, respectively. The observed trends closely mirror those under FGSM attacks. Specifically, \texttt{HYPER-Q} consistently achieves the highest accuracy on MNIST and FashionMNIST, whereas Analytic Gaussian attains the best performance on USPS.


\begin{figure*}[t!]
    \centering
        \includegraphics[width=\textwidth]{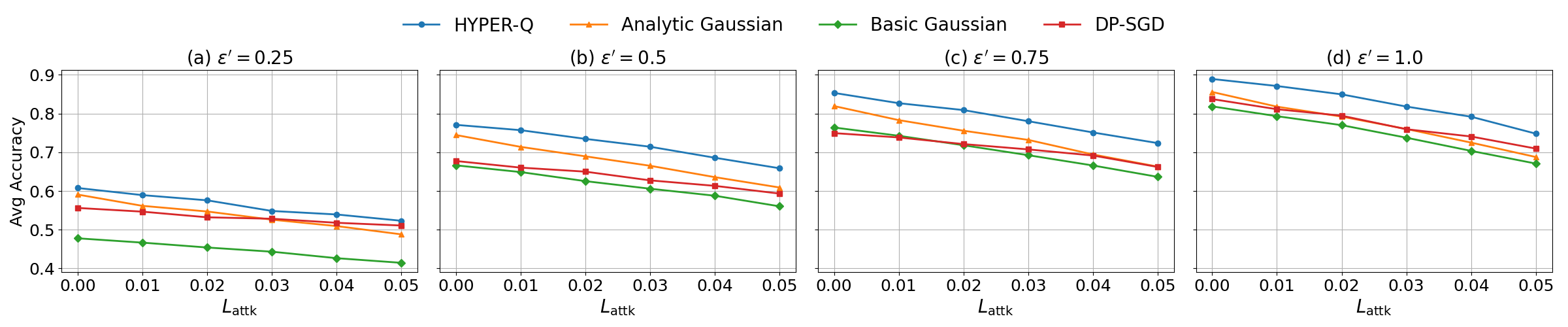}
        \caption{Accuracy of various noise-added mechanisms under the FGSM attack on the MNIST dataset with different $\varepsilon'$ values and $\delta' = 1\times 10^{-5}$. For each pair of $(L_{\text{attk}}, \varepsilon')$, the reported accuracy is averaged over all $L_{\text{cons}}$ settings. $\texttt{HYPER-Q}$ is examined with $\eta = 0.1$.} 
    \label{fig:exp1:FGSM:MNIST}
\end{figure*}

\begin{figure*}[t!]
    \centering
        \includegraphics[width=\textwidth]{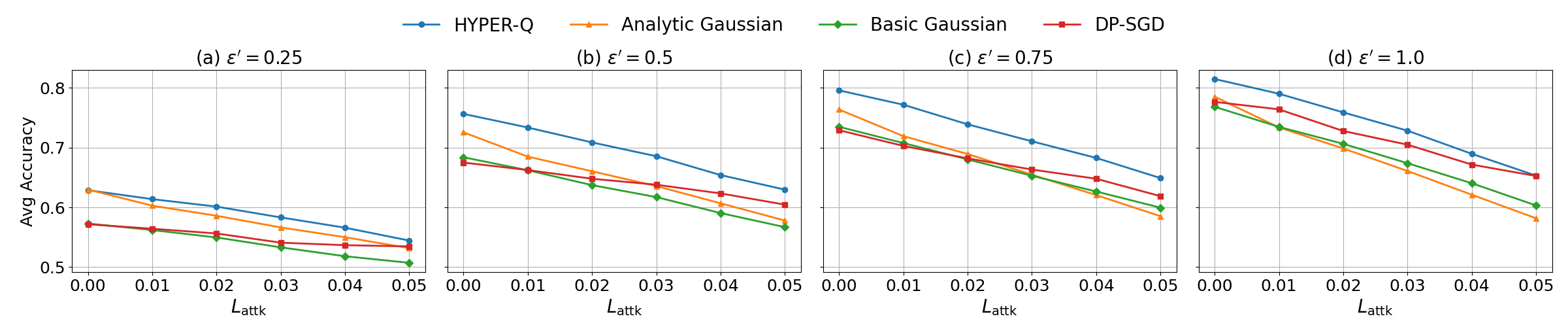}
        \caption{Accuracy of various noise-added mechanisms under the FGSM attack on the FashionMNIST dataset with different $\varepsilon'$ values and $\delta' = 1\times 10^{-5}$. For each pair of $(L_{\text{attk}}, \varepsilon')$, the reported accuracy is averaged over all $L_{\text{cons}}$ settings. $\texttt{HYPER-Q}$ is examined with $\eta =0.1$.} 
    \label{fig:exp1:FGSM:FMNIST}
\end{figure*}

\begin{figure*}[t!]
    \centering
        \includegraphics[width=\textwidth]{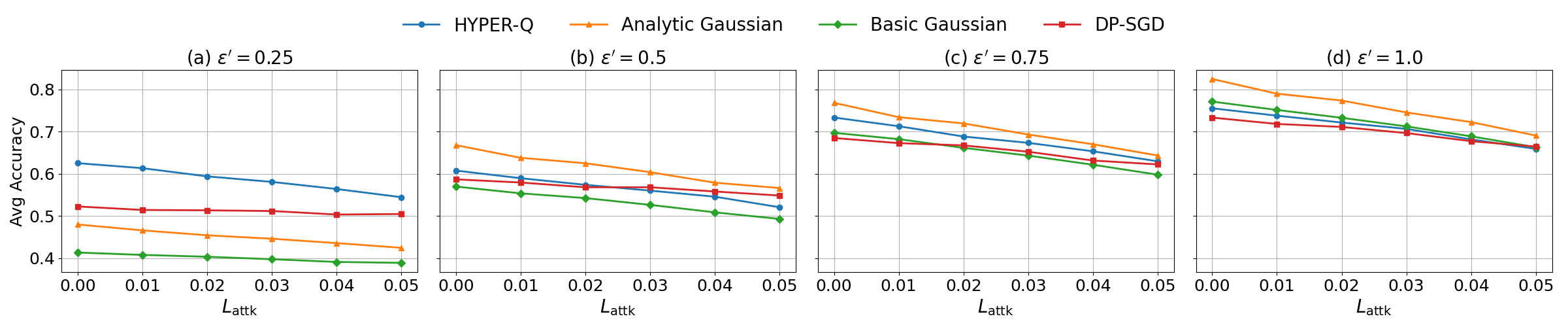}
        \caption{Accuracy of various noise-added mechanisms under the FGSM attack on the USPS dataset with different $\varepsilon'$ values and $\delta' = 1\times 10^{-5}$. For each pair of $(L_{\text{attk}}, \varepsilon')$, the reported accuracy is averaged over all $L_{\text{cons}}$ settings. $\texttt{HYPER-Q}$ is examined with $\eta = 0.1$.} 
    \label{fig:exp1:FGSM:USPS}
\end{figure*}

\begin{figure*}[t!]
    \centering
        \includegraphics[width=\textwidth]{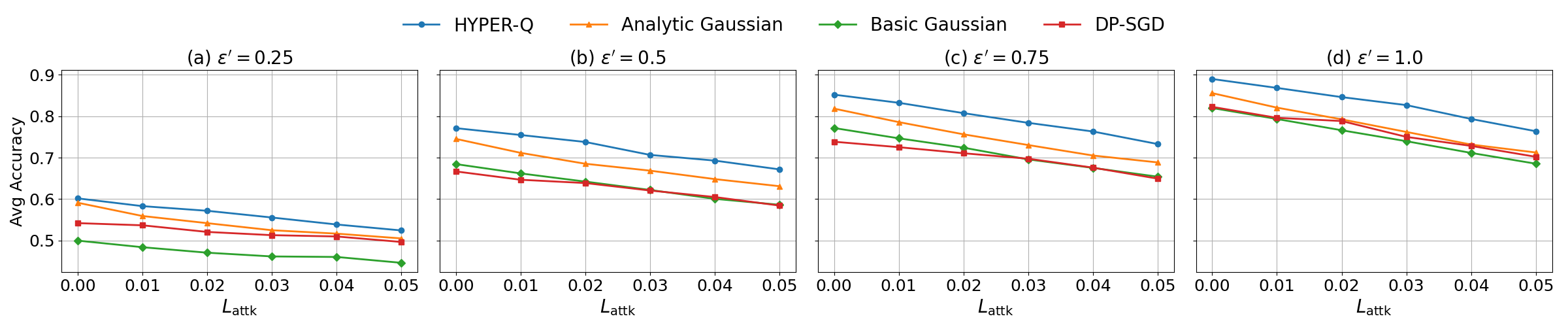}
        \caption{Accuracy of various noise-added mechanisms under the PGD attack on the MNIST dataset with different $\varepsilon'$ values and $\delta' = 1\times 10^{-5}$. For each pair of $(L_{\text{attk}}, \varepsilon')$, the reported accuracy is averaged over all $L_{\text{cons}}$ settings. $\texttt{HYPER-Q}$ is examined with $\eta = 0.1$.} 
    \label{fig:exp1:PGD:MNIST}
\end{figure*}

\begin{figure*}[t!]
    \centering
        \includegraphics[width=\textwidth]{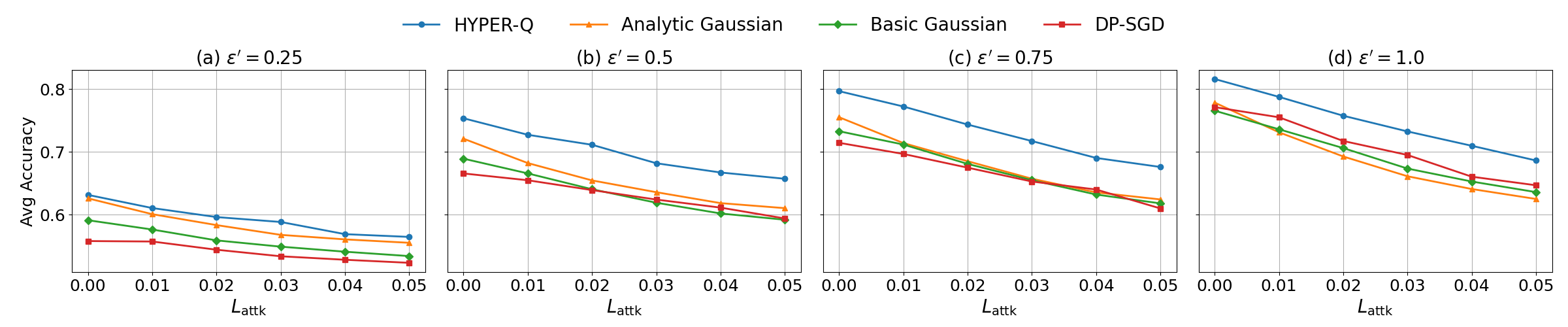}
        \caption{Accuracy of various noise-added mechanisms under the PGD attack on the FashionMNIST dataset with different $\varepsilon'$ values and $\delta' = 1\times 10^{-5}$. For each pair of $(L_{\text{attk}}, \varepsilon')$, the reported accuracy is averaged over all $L_{\text{cons}}$ settings. $\texttt{HYPER-Q}$ is examined with $\eta = 0.1$.} 
    \label{fig:exp1:PGD:FMNIST}
\end{figure*}

\begin{figure*}[t!]
    \centering
        \includegraphics[width=\textwidth]{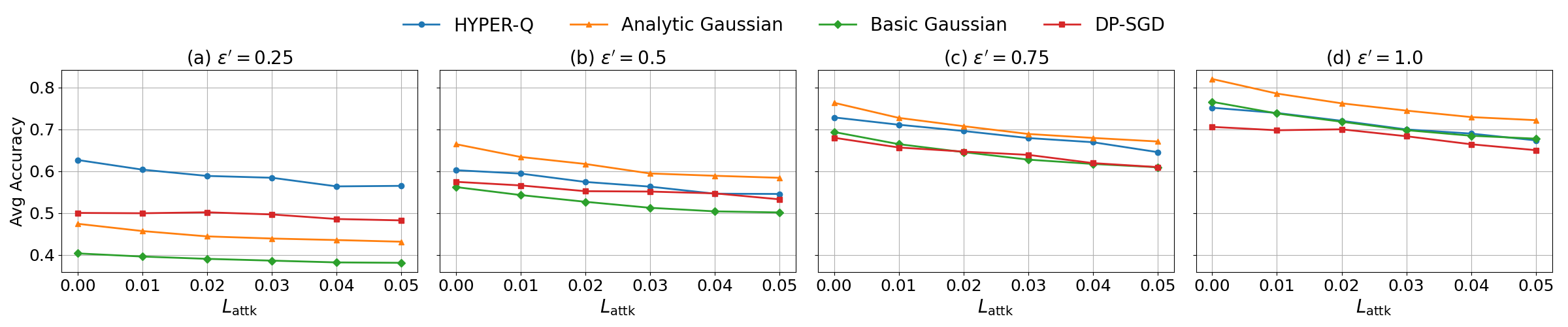}
        \caption{Accuracy of various noise-added mechanisms under the PGD attack on the USPS dataset with different $\varepsilon'$ values and $\delta' = 1\times 10^{-5}$. For each pair of $(L_{\text{attk}}, \varepsilon')$, the reported accuracy is averaged over all $L_{\text{cons}}$ settings. $\texttt{HYPER-Q}$ is examined with $\eta = 0.1$.} 
    \label{fig:exp1:PGD:USPS}
\end{figure*}

\subsection{Comparative Benchmark with Classical Models}
\label{app:exp:classicalML}
The \texttt{HYPER-Q} mechanism is intrinsically designed for QML models. This raises a critical question of practical viability: \textit{Can a QML model protected by \texttt{HYPER-Q} compare to or outperform classical models that are protected by their own conventional privacy mechanisms?}. We illustrate the performance comparison of a QML model protected by \texttt{HYPER-Q} (with its empirically optimal quantum noise setting, $\eta = 0.1$) against three classical baselines: ResNet-9, ViT, and MLP, each protected by Analytic Gaussian noise. Figures~\ref{fig:exp2:FGSM:MNIST}, \ref{fig:exp2:FGSM:FMNIST} and \ref{fig:exp2:FGSM:USPS} illustrate the performance comparison between all models on MNIST, FashionMNIST and USPS datasets, respectively, while under the FGSM attack. In Figure ~\ref{fig:exp2:FGSM:FMNIST}, we observe that the ResNet-9 model, across all values of $\varepsilon'$, outperforms \texttt{HYPER-Q} and the other baseline models. However, it is noted that \texttt{HYPER-Q} is very comparable to the ResNet-9 model with larger values of $\varepsilon'$. Only at higher values of $L_{attk}$ do we observe noticeable separation between the two models. Contrarily, for the USPS dataset, \texttt{HYPER-Q} dominates all other baseline models when $\varepsilon' \in \{0.25,0.5\}$. Specifically, compared to the ResNet-9 model, \texttt{HYPER-Q} maintains an $\approx30\%$ higher average accuracy when $\varepsilon'=0.25$. This value drops to $\approx2\%$ when $\varepsilon'=0.5$. The ResNet-9 model becomes more competitive as $\varepsilon' \in \{0.75, 1.0\}$, where it is comparable to \texttt{HYPER-Q} and then outperforms it by $\approx5\%$, respectively. An interesting observation is the subtle fluctuations of the MLP and quick degradation across all values of $\varepsilon'$. \texttt{HYPER-Q} and the other baselines are much more stable across all values. The results shown in Figures~\ref{fig:exp2:PGD:MNIST}, \ref{fig:exp2:PGD:FMNIST}, and \ref{fig:exp2:PGD:USPS} illustrate the comparative performance of \texttt{HYPER-Q} and our baseline models when subjected to the PGD attack and are virtually identical in nature to the results of the FGSM attack on all three datasets.


\subsection{Empirical Analysis of Dimensional Scalability}
\label{sec:rebuttal:dimension_scalability}
To address the practical scalability of HYPER-Q, we empirically investigated the impact of the Hilbert space dimension $d = 2^n$ on the reduction of the required classical noise. While Theorem 1 introduces an additive term $\frac{\eta(1-e^{\epsilon})}{d}$ that ostensibly shrinks as the system scales, our analysis reveals that the privacy amplification stabilizes rather than vanishes.

Figure~\ref{fig:rebuttal:dimensions} presents the average percentage reduction in classical noise variance ($\sigma^2$) as a function of the number of qubits $n$, ranging from 1 to 29, across various quantum noise levels $\eta \in [0.05, 0.4]$. The results highlight two observations. For small-scale systems, we observe a massive reduction in the required classical noise, exceeding $90\%$ for $n < 3$. In this regime, the dimension-dependent term $\frac{1}{d}$ in Theorem 1 is dominant, providing a significant bonus to the privacy budget. On the other hand, as the number of qubits increases and the $1/d$ term vanishes, the noise reduction does not drop to zero. Instead, the curves flatten into a stable, non-zero level. This represents the scale-independent multiplicative amplification $(1-\eta)$ derived in our theoretical framework. For instance, with $\eta=0.4$, the mechanism maintains a consistent noise reduction of approximately $8\%$ even at $n=29$ (where $d \approx 5 \times 10^8$).

\subsection{Empirical Verification of Utility Bound Tightness}
\label{sec:rebuttal:tightness}
To rigorously assess the tightness of the theoretical utility bound derived in Theorem 3, we conducted an empirical analysis comparing the observed worst-case error against the bound.Theorem 3 characterizes the stability of the mechanism by bounding the maximum deviation $\text{Error} = \sup_{x} |\mathcal{M}_{\text{full}}(x) - \mathcal{M}_{\text{clean}}(x)|$. The bound states that
$$
\Pr \left( \text{Error} \leq L_\infty \cdot \sigma\sqrt{2 \ln{\frac{2d_X}{p}}} + 2\eta \|E_{\text{exp}}\|_{op} \right) \geq 1-p
$$
where $L_\infty = 2(1-\eta) \|E_{\text{exp}}\|_{op}\|W\|_\infty (\sum_j \|H_j\|_{op})$.

We evaluated the tightness of our bound by measuring the ratio between the maximum empirical error observed in simulation and the theoretical bound $\text{Bound}(\sigma, \eta, p) = L_{\infty}\sigma\sqrt{2 \ln(2d_X/p)} + 2\eta||E_{\text{exp}}||_{\text{op}}$. In particular, given a sample set $S$, the ratio is calculated by:
$$
\text{Ratio} = \frac{\max_{x \in S} |\mathcal{M}_{\text{full}}(x) - \mathcal{M}_{\text{clean}}(x)|}{\text{Bound}(\sigma, \eta, p)}
$$

In this experiment, we set the failure probability to $p = 0.01$. That ensures the theoretical bound holds with a $99\%$ confidence level. We computed the $\text{Ratio}$ for each $(\sigma, \eta)$ configuration using a sample size of $|S| = 10000$, where a value approaching $1$ indicates a tight bound. Figure~\ref{fig:rebuttal:utility_tightness} presents the resulting ratios across the parameter grid. We observe that in low-noise settings, such as $(\sigma, \eta) = (0.5,0)$, the ratio reaches significant magnitudes (e.g., $0.923$), confirming that the bound effectively captures the worst-case error. While the bound becomes looser as the total noise magnitude increases, it remains non-trivial. Notably, in the absence of classical noise ($\sigma=0$), the ratios remain constant across all $\eta$. This behavior is attributed to the linearity of the depolarizing channel, where both the empirical error and the theoretical quantum term ($2\eta \|E_{\text{exp}}\|_{\text{op}}$) scale linearly with $\eta$.

\subsection{Performance Analysis on CIFAR-10}
\label{sec:rebuttal:cifar10}
To evaluate the robustness of \texttt{HYPER-Q} on more complex data, we extend our experiments to CIFAR-10, a significantly more challenging benchmark than MNIST and USPS. Unlike grayscale datasets, CIFAR-10 consists of RGB images with higher variability and richer feature structure, requiring a larger quantum feature map. For this setting, we employ a 10-qubit variational QML model and compare \texttt{HYPER-Q} with three classical baselines under a fixed privacy budget $\varepsilon' = 1$.

Figure~\ref{fig:rebuttal:cifar10} shows the test accuracy as a function of attack strength $L_{\text{attk}} \in \{0,0.01,\dots,0.05\}$. Across all models, accuracy decreases as the attack strength increases, but the rate of degradation varies significantly. \texttt{HYPER-Q} begins at $73.9\%$ accuracy at $L_{\text{attk}}=0$ and declines smoothly to $47.7\%$ at $L_{\text{attk}}=0.05$. This degradation profile is comparable to ResNet-9, which starts at a higher baseline of $86.0\%$ but similarly drops to $48.7\%$ at the highest attack bound. In contrast, ViT and MLP degrade much more rapidly, falling from $75.9\%$ and $69.3\%$ initially to only $14.2\%$ and $10.8\%$ at $L_{\text{attk}}=0.05$, respectively.

These results highlight two insights. First, even for a high-dimensional image dataset requiring a deeper quantum representation, \texttt{HYPER-Q} remains competitive with classical baselines under moderate attack strengths. Second, while classical deep models exhibit higher clean accuracy, their robustness diminishes sharply under increasing perturbation, whereas \texttt{HYPER-Q} shows a more controlled and stable decline. This demonstrates that the hybrid-noise mechanism \texttt{HYPER-Q} continues to offer meaningful utility benefits in more complex, higher-qubit QML settings.


\begin{figure*}[t!]
    \centering
        \includegraphics[width=\textwidth]{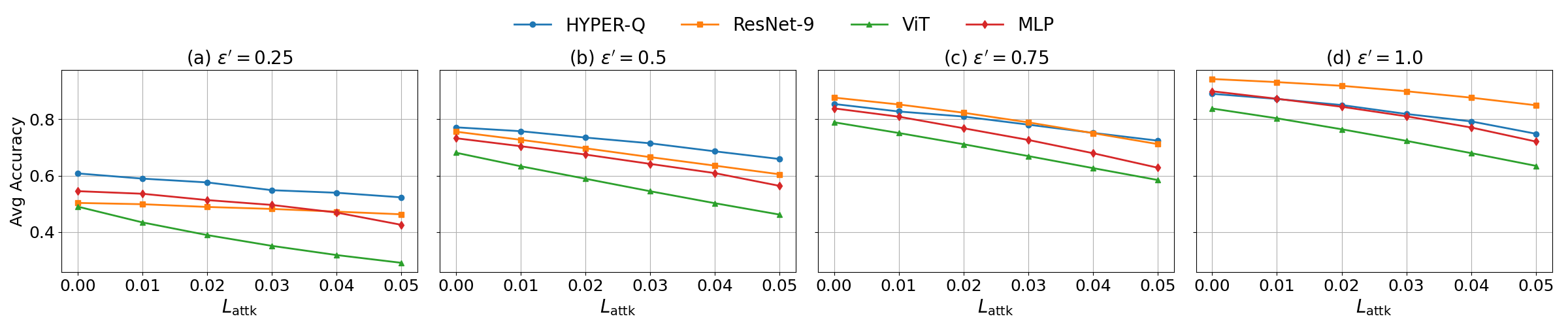}
        \caption{Accuracy comparison between the QML model protected by \texttt{HYPER-Q} and three classical baselines (ResNet-9, ViT, and MLP) protected by Analytic Gaussian noise under the FGSM attack on the MNIST dataset. The \texttt{HYPER-Q} model is evaluated with its empirically best quantum noise setting ($\eta = 0.1$). For each $(L_{\text{attk}}, \varepsilon')$ pair, the reported accuracy is averaged over all $L_{\text{cons}}$ settings. $\delta' = 1\times 10^{-5}$ for all settings.} 
    \label{fig:exp2:FGSM:MNIST}
\end{figure*}

\begin{figure*}[t!]
    \centering
        \includegraphics[width=\textwidth]{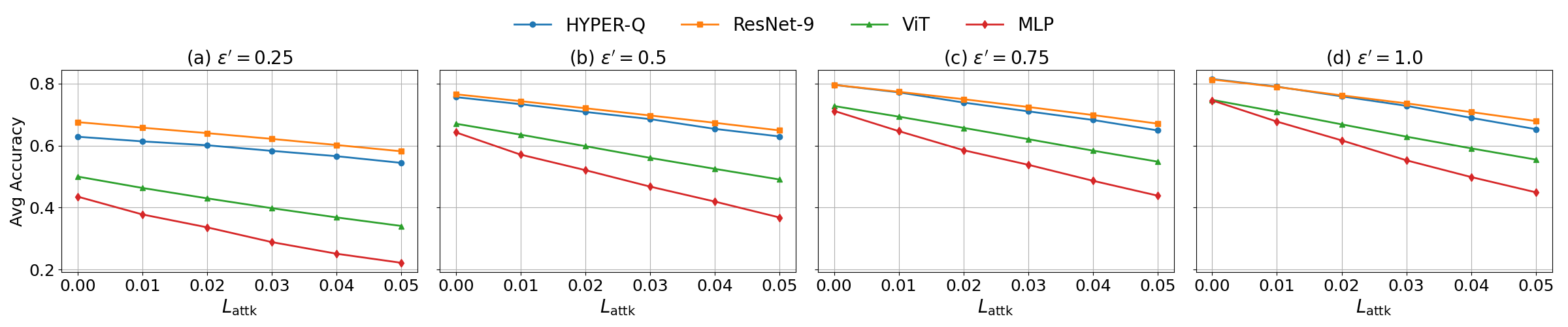}
        \caption{Accuracy comparison between the QML model protected by \texttt{HYPER-Q} and three classical baselines (ResNet-9, ViT, and MLP) protected by Analytic Gaussian noise under the FGSM attack on the FashionMNIST dataset. The \texttt{HYPER-Q} model is evaluated with its empirically best quantum noise setting ($\eta = 0.1$). For each $(L_{\text{attk}}, \varepsilon')$ pair, the reported accuracy is averaged over all $L_{\text{cons}}$ settings. $\delta' = 1\times 10^{-5}$ for all settings.} 
    \label{fig:exp2:FGSM:FMNIST}
\end{figure*}

\begin{figure*}[t!]
    \centering
        \includegraphics[width=\textwidth]{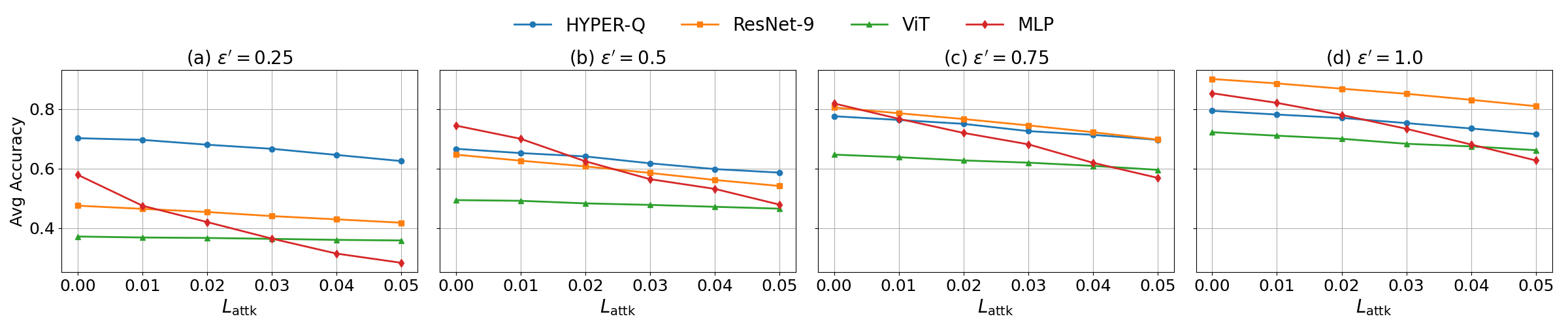}
        \caption{Accuracy comparison between the QML model protected by \texttt{HYPER-Q} and three classical baselines (ResNet-9, ViT, and MLP) protected by Analytic Gaussian noise under the FGSM attack on the USPS dataset. The \texttt{HYPER-Q} model is evaluated with its empirically best quantum noise setting ($\eta = 0.1$). For each $(L_{\text{attk}}, \varepsilon')$ pair, the reported accuracy is averaged over all $L_{\text{cons}}$ settings. $\delta' = 1\times 10^{-5}$ for all settings.} 
    \label{fig:exp2:FGSM:USPS}
\end{figure*}

\begin{figure*}[t!]
    \centering
        \includegraphics[width=\textwidth]{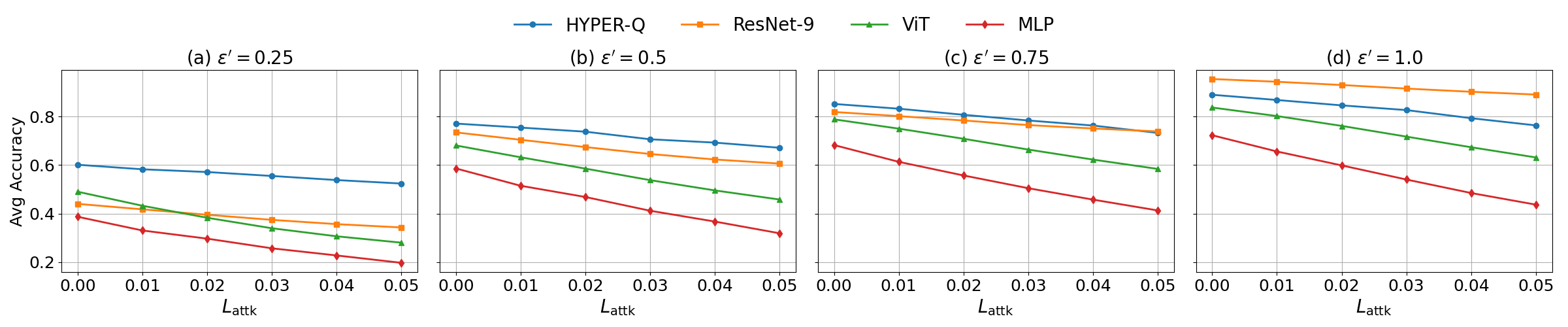}
        \caption{Accuracy comparison between the QML model protected by \texttt{HYPER-Q} and three classical baselines (ResNet-9, ViT, and MLP) protected by Analytic Gaussian noise under the PGD attack on the MNIST dataset. The \texttt{HYPER-Q} model is evaluated with its empirically best quantum noise setting ($\eta = 0.1$). For each $(L_{\text{attk}}, \varepsilon')$ pair, the reported accuracy is averaged over all $L_{\text{cons}}$ settings. $\delta' = 1\times 10^{-5}$ for all settings.} 
    \label{fig:exp2:PGD:MNIST}
\end{figure*}

\begin{figure*}[t!]
    \centering
        \includegraphics[width=\textwidth]{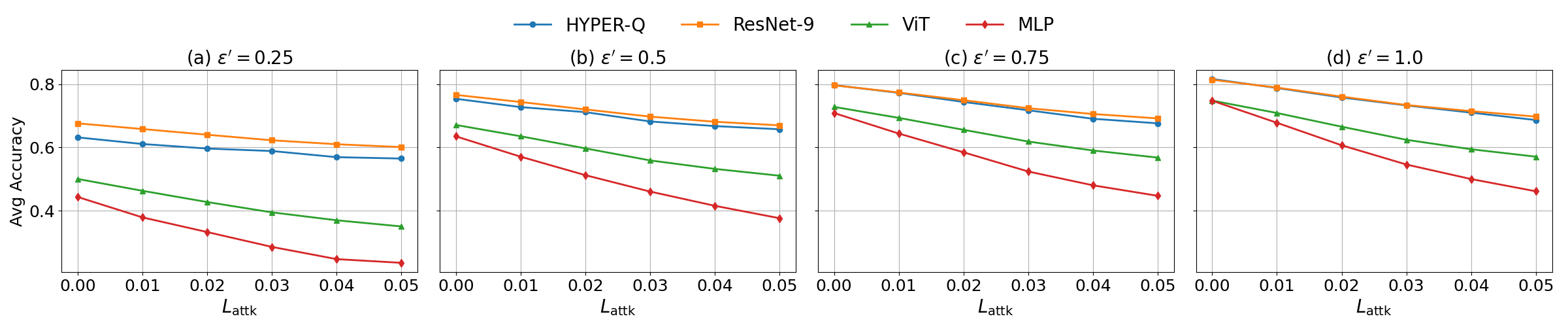}
        \caption{Accuracy comparison between the QML model protected by \texttt{HYPER-Q} and three classical baselines (ResNet-9, ViT, and MLP) protected by Analytic Gaussian noise under the PGD attack on the FashionMNIST dataset. The \texttt{HYPER-Q} model is evaluated with its empirically best quantum noise setting ($\eta = 0.1$). For each $(L_{\text{attk}}, \varepsilon')$ pair, the reported accuracy is averaged over all $L_{\text{cons}}$ settings. $\delta' = 1\times 10^{-5}$ for all settings.} 
    \label{fig:exp2:PGD:FMNIST}
\end{figure*}

\begin{figure*}[t!]
    \centering
        \includegraphics[width=\textwidth]{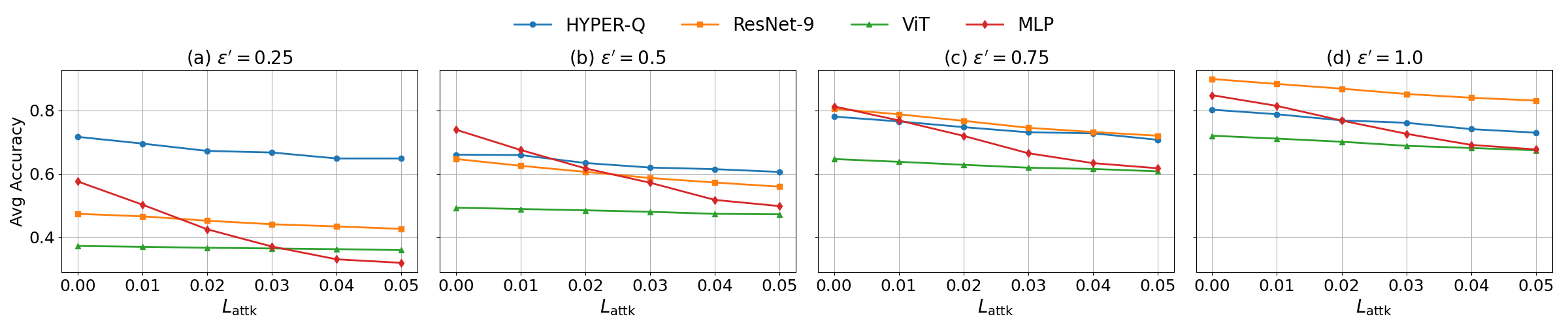}
        \caption{Accuracy comparison between the QML model protected by \texttt{HYPER-Q} and three classical baselines (ResNet-9, ViT, and MLP) protected by Analytic Gaussian noise under the PGD attack on the USPS dataset. The \texttt{HYPER-Q} model is evaluated with its empirically best quantum noise setting ($\eta = 0.1$). For each $(L_{\text{attk}}, \varepsilon')$ pair, the reported accuracy is averaged over all $L_{\text{cons}}$ settings. $\delta' = 1\times 10^{-5}$ for all settings.} 
    \label{fig:exp2:PGD:USPS}
\end{figure*}

\begin{figure}[t]
        \centering
        \includegraphics[width=\linewidth]{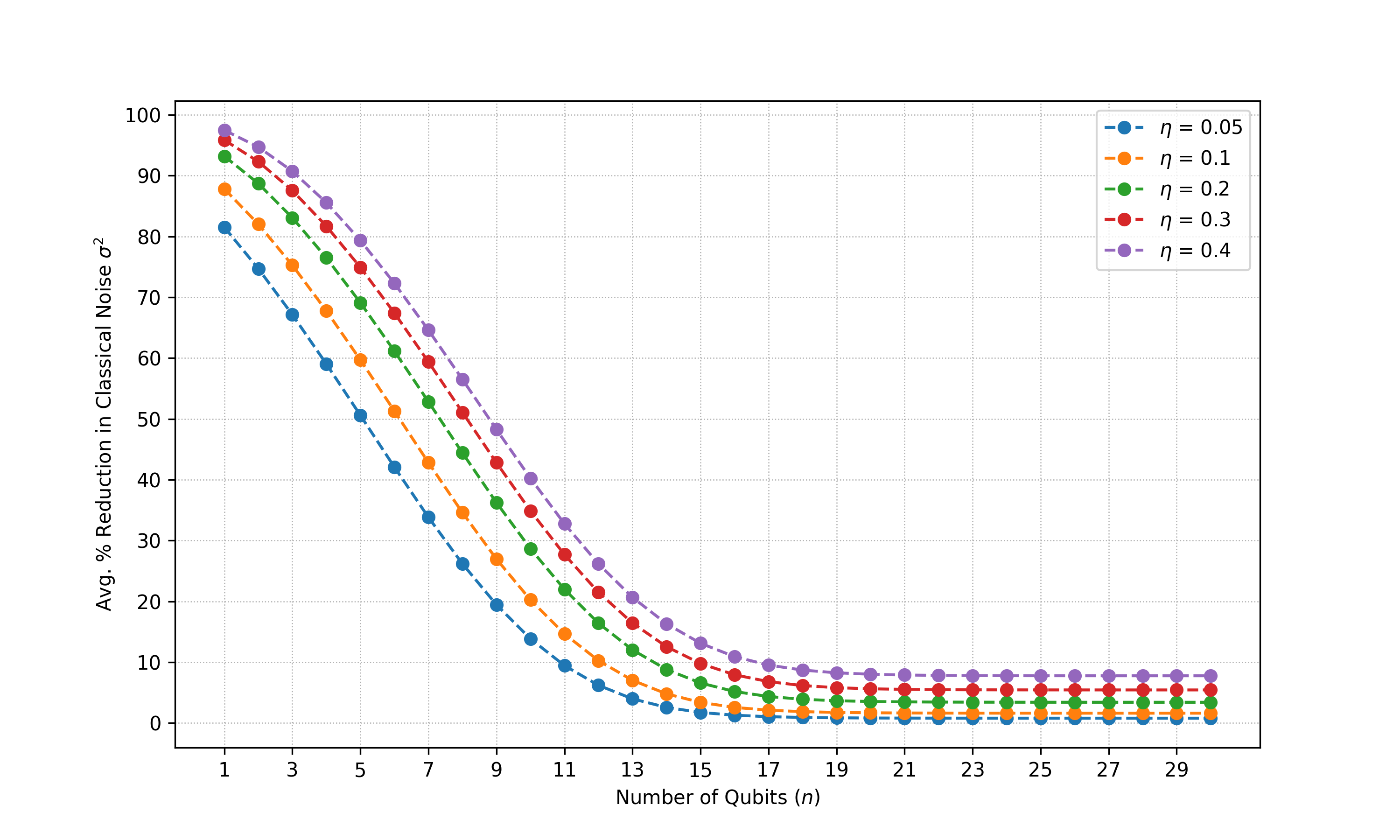}
        \caption{Effect of Quantum Noise Level on Classical-Noise Reduction Across Scaling Qubit Counts}
        \label{fig:rebuttal:dimensions}
    \end{figure}

\begin{figure}[t]
        \centering
        \includegraphics[width=0.6\linewidth]{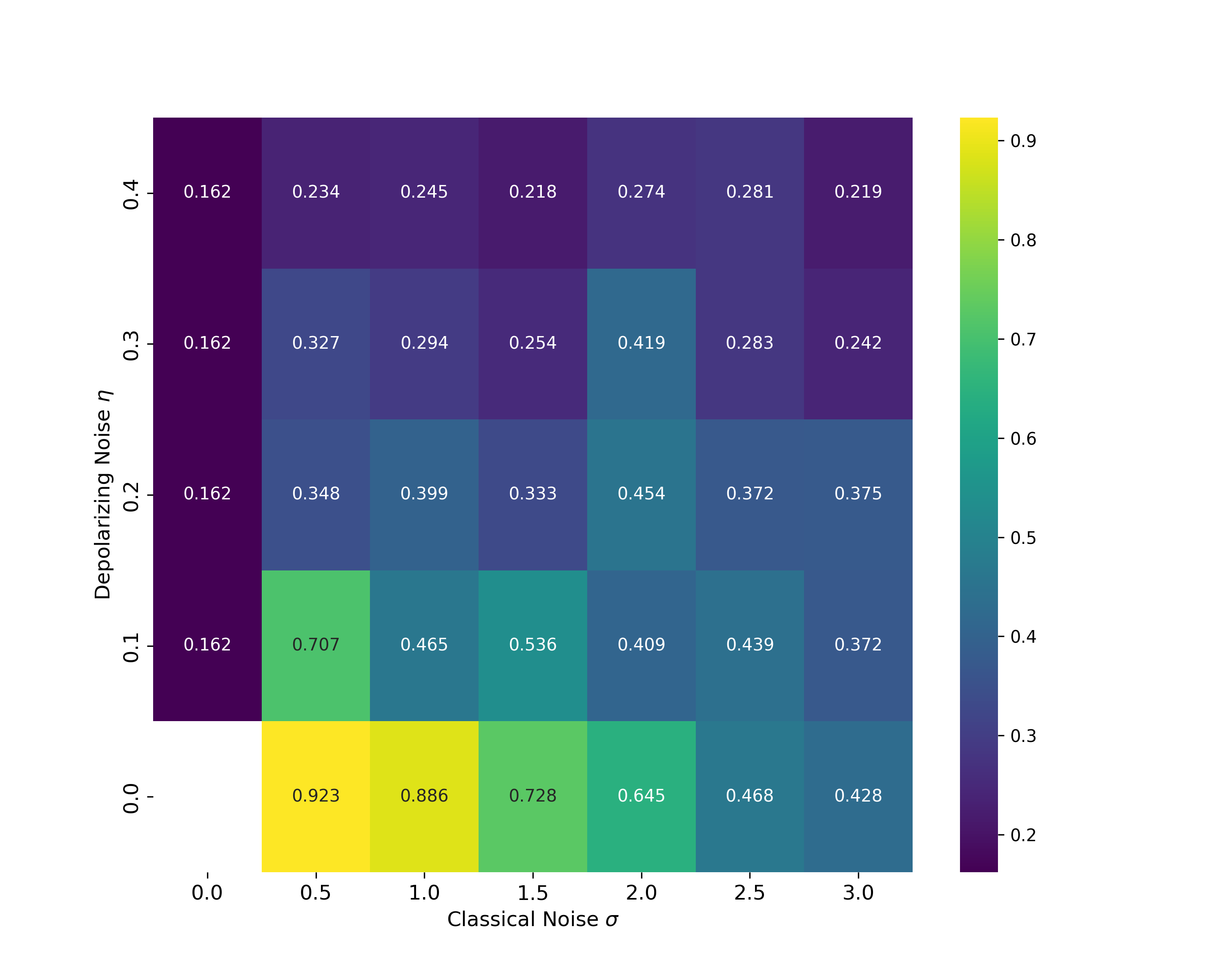}
        \caption{Heatmap of the ratio between the actual utility loss and the theoretical bound across $(\eta,\sigma)$ values.}
        \label{fig:rebuttal:utility_tightness}
    \end{figure}

\begin{figure}[t]
        \centering
        \includegraphics[width=0.6\linewidth]{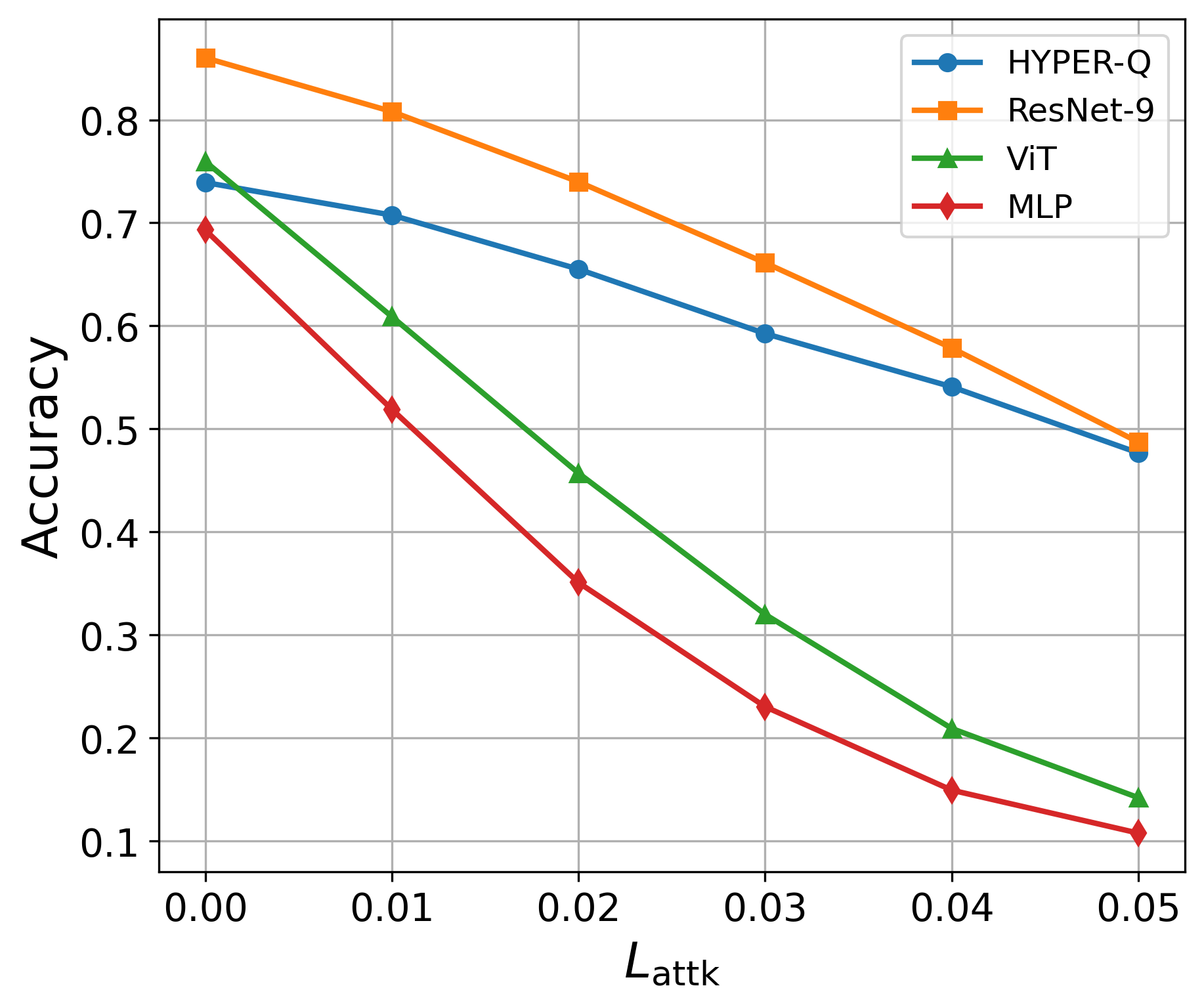}
        \caption{Performance of \texttt{HYPER-Q} and classical baselines (ResNet-9, ViT, and MLP) on CIFAR-10 under varying attack strengths $L_{\text{attk}}$ with a fixed privacy budget $\varepsilon' = 1$.}
        \label{fig:rebuttal:cifar10}
    \end{figure}















\end{document}